\setlist{nolistsep}
\algnewcommand\algorithmicinput{\textbf{Input:}}
\algnewcommand\Input{\item[\algorithmicinput]}
\algnewcommand\algorithmicoutput{\textbf{Output:}}
\algnewcommand\Output{\item[\algorithmicoutput]}
\newcommand{\algparbox}[1]{\parbox[t]{\dimexpr\linewidth-\algorithmicindent}{#1\strut}}
\newcommand{\StateN}[1]{\State \algparbox{#1}}
\newtheorem{definition}{Definition}
\newtheorem{lemma}[definition]{Lemma}
\newtheorem{observation}[definition]{Observation}
\newtheorem{proposition}[definition]{Proposition}
\newtheorem{corollary}[definition]{Corollary}
\newtheorem{theorem}[definition]{Theorem}
\newtheorem{claim}[definition]{Claim}
\renewcommand{\lg}{\log}
\renewcommand{\varepsilon}{\epsilon}
\newcommand{\poly}{\operatorname{poly}}
\renewcommand{\P}{\mathcal{P}}
\newcommand{\NP}{\textsf{NP}}
\DeclareRobustCommand{\ALG}{%
	\ifmmode
		\operatorname{ON}
	\else
		\text{ON}\xspace
	\fi
}
\DeclareRobustCommand{\OFF}{%
	\ifmmode
		\operatorname{OFF}
	\else
		\text{OFF}\xspace
	\fi
}
\DeclareRobustCommand{\APPROXALGO}{%
	\ifmmode
		\operatorname{APPROX}
	\else
		\text{APPROX}\xspace
	\fi
}
\def\bold #1{{\bfseries\mathversion{bold}#1}}
\def\OPT{\operatorname{OPT}}
\def\ONL{\operatorname{ONL}}
\def\cost{\operatorname{cost}}
\def\size{\operatorname{vol}}
\title{Tight Bounds for Online Graph Partitioning\thanks{This is the full version of a paper which will appear at SODA'21.}}
\author[1]{Monika Henzinger\thanks{monika.henzinger@univie.ac.at.}}
\author[2]{Stefan Neumann\thanks{neum@kth.se. This work was done while Stefan Neumann was at University of Vienna.}}
\author[3]{Harald R\"acke\thanks{raecke@in.tum.de}}
\author[1]{Stefan Schmid\thanks{stefan\_schmid@univie.ac.at.}}
\affil[1]{University of Vienna, Austria}
\affil[2]{KTH Royal Institute of Technology, Sweden}
\affil[3]{TU Munich, Germany}
\date{}
\begin{document}

\pagenumbering{roman}
\maketitle

\begin{abstract}
We consider the following online optimization problem. We are given a graph $G$
and each vertex of the graph is assigned to one of $\ell$ servers, where servers
have capacity $k$ and we assume that the graph has $\ell \cdot k$ vertices.
Initially, $G$ does not contain any edges and then the edges of $G$ are revealed
one-by-one. The goal is to design an online algorithm $\ONL$, which always
places the connected components induced by the revealed edges on the same server
and never exceeds the server capacities by more than $\varepsilon k$ for
constant $\varepsilon>0$.  Whenever $\ONL$ learns about a new edge, the
algorithm is allowed to move vertices from one server to another. Its objective
is to minimize the number of vertex moves. More specifically, $\ONL$ should
minimize the competitive ratio: the total cost $\ONL$ incurs compared to an
optimal offline algorithm $\OPT$.

The problem was recently introduced by Henzinger et al.\ (SIGMETRICS'2019) and
is related to classic online problems such as online paging and scheduling. It
finds applications in the context of resource allocation in the cloud and for
optimizing distributed data structures such as union--find data structures.

Our main contribution is a polynomial-time randomized algorithm, that is
asymptotically optimal: we derive an upper bound of $O(\log \ell + \log k)$ on
its competitive ratio and show that no randomized online algorithm can achieve a
competitive ratio of less than $\Omega(\log \ell + \log k)$.  We also settle the
open problem of the achievable competitive ratio by deterministic online
algorithms, by deriving a competitive ratio of $\Theta(\ell \lg k)$; to this
end, we present an improved lower bound as well as a deterministic
polynomial-time online algorithm.

Our algorithms rely on a novel technique which combines efficient integer
programming with a combinatorial approach for maintaining ILP solutions. More
precisely, we use an ILP to assign the connected components induced by the
revealed edges to the servers; this is similar to existing approximation schemes
for scheduling algorithms.  However, we cannot obtain our competitive ratios if
we run the ILP after each edge insertion. Instead, we identify certain types of
edge insertions, after which we can manually obtain an optimal ILP solution at
zero cost without resolving the ILP.  We believe this technique is of
independent interest and will find further applications in the future.

\end{abstract}

\pagebreak
\tableofcontents
\pagebreak

\pagenumbering{arabic}

\def\hmax{h_{\max}}

\def\TypeI  {Type~I\xspace}
\def\TypeII {Type~II\xspace}
\def\TypeIII{Type~III\xspace}
\def\TypeIV {Type~IV\xspace}
\def\TypeV  {Type~V\xspace}

\section{Introduction}
\label{onl:sec:introduction}

Distributed cloud applications 
generate a significant amount of network traffic~\cite{talk-about}. 
To improve their efficiency and performance,
the underlying infrastructure needs to become \emph{demand-aware}: 
frequently communicating 
endpoints need to be allocated closer to each other, e.g., by
collocating them on the same server or in the same rack.
Such optimizations are enabled by the increasing 
resource allocation flexibilities available
in modern virtualized infrastructures, and 
are further motivated by
rich spatial and temporal structure featured 
by communication patterns of data-intensive applications~\cite{Benson-imc}.

This paper studies the algorithmic problem underlying 
such demand-aware resource allocation in scenarios
 where the communication pattern is not known ahead of time.
Instead, the algorithm needs to  learn
a communication pattern in an online manner, dynamically collocating communication partners 
while minimizing reconfiguration costs.
It has recently been shown that this problem can be modeled 
by the following \emph{online graph partitioning problem}~\cite{henzinger19efficient}:
Let $k$ and $\ell$ be known parameters. We are
given a graph $G$ which initially does not contain any edges. Each
vertex of the graph is assigned to one of $\ell$ servers and each 
server has \emph{capacity} $k$, i.e., stores $k$ vertices. 
Next, the edges of $G$ are revealed one-by-one in an online fashion
and the algorithm has to guarantee that \emph{every connected component 
is placed on the same server (cc-condition).} This is possible, as
it is guaranteed that there always exists an assignment of the connected components to servers such that no connected component is split across multiple
servers.  
Thus after each edge insertion, the online algorithm has to decide which vertices to move between servers to guarantee the cc-condition. Each vertex move incurs a cost of $1/k$. 
The optimal offline algorithm knows all the connected components, computes, dependent on the initial placement of the vertices, the minimum-cost assignment of these connected components to servers and moves the vertices to their final servers after the first edge insertion. It requires no further vertex moves. To measure the performance of an online algorithm $\ONL$ we use 
the competitive ratio: the total cost of $\ONL$ divided by the total cost of the
optimal offline algorithm $\OPT$.
In this setting, no deterministic online algorithm can
have a competitive ratio better than $\Omega(\ell \cdot k)$ even with unbounded
computational resources~\cite{henzinger19efficient,podc20ba}.
In fact, even assigning the connected components to the servers such that the
capacity constraints are obeyed is already \NP-hard and this holds even in the
static setting when the connected components do not change~\cite{andreev2006balanced}.

Thus, we relax the server capacity requirement for the online algorithm: Specifically, the online algorithm is allowed to place
up to $(1 + \varepsilon) k$ vertices on a server at any point in time. We call this problem the \emph{online graph partitioning problem}.

Henzinger et al.~\cite{henzinger19efficient} studied this problem and showed that the previously
described demand-aware resource allocation problem reduces to the online graph
partitioning problem: vertices of the graph $G$ correspond to communication
partners and edges correspond to communication requests; thus, by collocating
the communication partners based on the connected components of the vertices, we
minimize the network traffic (since all future communications among the revealed
edges will happen locally). They also showed how to implement a distributed
union--find data structure with this approach.
Algorithmically, \cite{henzinger19efficient}~presented a deterministic
exponential-time algorithm with competitive ratio $O(\ell \log \ell \log k)$ and
complemented their result with a lower bound of
$\Omega(\log k)$ on the competitive ratio of any
deterministic online algorithm.
While their derived bounds are tight for $\ell=O(1)$ servers, 
there remains a gap of factor $O(\ell \lg \ell)$ between upper and lower bound
for the scenario of $\ell=\omega(1)$.
Furthermore, their lower bound only applies to deterministic algorithms and thus
it is a natural question to ask whether randomized algorithms can obtain
better competitive ratios.

\subsection{Our Contributions}

Our main contribution is a polynomial-time randomized algorithm
for online graph partitioning which achieves a polylogarithmic
competitive ratio. 
In particular, we derive an $O(\log \ell + \log k)$ upper bound on the competitive ratio of 
our algorithm, where
$\ell$ is the number of servers and $k$ is the server capacity. 
We also show that no randomized online algorithm can achieve
a competitive ratio of less than $\Omega(\log \ell + \log k)$.
The achieved competitive ratio is hence asymptotically optimal.

We further settle the open problem of the competitive ratio achievable
by deterministic online algorithms. To this end, we derive an improved lower
bound of $\Omega(\ell \lg k)$, and present a polynomial-time deterministic
online algorithm which achieves a competitive ratio of $O(\ell \lg k)$. Thus,
also our deterministic algorithm is optimal up to constant factors in the competitive
ratio.

These results improve upon the results of~\cite{henzinger19efficient}
in three respects: First, our deterministic online algorithm has competitive ratio
$O(\ell \log k)$ and polynomial run-time, while the algorithm
in~\cite{henzinger19efficient} has competitive ratio $O(\ell \lg \ell \lg k)$
and requires exponential time.  Second, we present a significantly higher and
matching lower bound of $\Omega(\ell \log k)$ on the competitive ratio of any
deterministic online algorithm.  Third, we initiate the study of
randomized algorithms for the online graph partitioning problem and show that it
is possible to achieve a competitive ratio of $O(\lg\ell+\lg k)$ and we
complement this result with a matching lower bound.  Note that the competitive
ratio obtained by our randomized algorithm provides an exponential improvement
over what any deterministic algorithm can achieve in terms of the dependency on
the parameter $\ell$.

We further show that for $\varepsilon>1$ (i.e., the servers can store at least
		$(2+\varepsilon')k$ vertices for some $\varepsilon'>0$), our
deterministic algorithm is $O(\lg k)$-competitive.
 
\textbf{Technical Novelty.}
We will now provide a brief overview of 
our approach and its technical novelty. 
Since our deterministic and our randomized 
algorithms are based on the same algorithmic framework, we will say \emph{our algorithm} in the following. 

Our algorithm keeps track of the set of connected components induced by the
revealed edges.  
We will denote the connected
components as \emph{pieces} and when two connected components become connected
due to an edge insertion, we say that the corresponding pieces are
\emph{merged}.

The algorithm maintains an assignment of the pieces onto the servers
which we call a \emph{schedule}. We will make sure that the schedule is
\emph{valid}, i.e., that every piece is assigned to some server and that the
capacities of the servers are never exceeded by more than the allowed
additive $\varepsilon k$.  To compute valid schedules, we
solve an integer linear program (ILP) using a generic ILP solver and show how the solution of the ILP can
be transformed into a valid schedule. We ensure that the ILP is of constant size
and can, hence, be solved in polynomial time.  Next, we show that when two
pieces are merged due to an edge insertion, the schedule does not change
much, i.e., we do not have to move ``too many'' pieces between the servers.  We do
this using a \emph{sensitivity analysis} of the ILP, which guarantees that when
two pieces are merged, the solution of the ILP does not change by much. 
Furthermore, we prove that this 
change in the ILP solution 
corresponds to only slightly adjusting the schedules,
and thus only moving a few pieces.

However, the sensitivity analysis alone is not enough to obtain the desired competitive
ratio. Indeed, we identify certain types of merge-operations for which the
optimal offline algorithm $\OPT$ might have 
very small or even zero cost. In
this case, adjusting the schedules based on the ILP sensitivity would be too
costly: the generic ILP solver from above could potentially move to an optimal ILP solution which is very
different from the current solution and, thus, incur much more cost than $\OPT$. 
Hence, to keep the cost paid by our algorithm low, we make sure that for 
these special types of merge-operations, our algorithm sticks extremely close to
the previous ILP solution, incurs zero cost for moving pieces and
still obtains an optimal ILP solution.
The optimality of the algorithm's solution after such merge-operations is crucial as
otherwise, we could not apply the sensitivity analysis after the subsequent merge-operations. 
To the best of our knowledge, our algorithm is the first to \emph{interleave ILP sensitivity analysis 
with manual maintenance of optimal ILP
solutions.}

More specifically, we assume that each server has a unique color and consider each vertex as being colored by the color of its initial server.
In our analysis we identify two different types of pieces:
\emph{monochromatic} and \emph{non-monochromatic} ones. In the monochromatic pieces,
``almost all'' of the vertices  have the same color, i.e.,~were initially 
assigned to the same server, while the non-monochromatic pieces contain ``many''
vertices which started on different servers. We show that we have to treat the
monochromatic pieces very carefully because these are the pieces
for which $\OPT$ might have very small or
even no cost. Hence, it is desirable
to always schedule monochromatic pieces on the server of the majority color.
Unfortunately, we show that this is not always possible.
Indeed, the hard instances in our lower bounds show that an adversary can force
any deterministic algorithm to create schedules with \emph{extraordinary}
servers. Informally, a server $s$ is extraordinary if there exists a monochromatic
piece $p$ for which almost all of its vertices have the color of $s$
but $p$ is not scheduled on $s$ (see Section~\ref{onl:sec:ilp} for the formal
definition). All other servers are 
\emph{ordinary} servers. Note that we have to deal with extraordinary servers carefully: 
we might have to pay much more than $\OPT$ for their
monochromatic pieces that are scheduled on other servers.

Thus, to obtain a competitive algorithm, we need to minimize the
number of extraordinary servers. We achieve this with the following idea:
We equip our ILP with an objective function that minimizes the number of
extraordinary servers and we show that 
the number of extraordinary servers created by the
ILP gives a lower bound on the cost paid by $\OPT$. We use this fact to argue that
we can charge the algorithm's cost when creating extraordinary servers to $\OPT$
to obtain competitive results.

The previously described ideas provide a deterministic algorithm with
competitive ratio $O(\ell \lg k)$. We also provide a matching lower bound of
$\Omega(\ell \lg k)$. The lower bound provides a hard instance which essentially
shows that an adversary can force any deterministic algorithm to make each of
the $\ell$ servers extraordinary at some point in time.  
More generally, the adversary can cause such a high competitive ratio whenever it knows \emph{which
servers are extraordinary.}
Hence, to obtain an
algorithm with competitive ratio $O(\lg\ell + \lg k)$ we use randomization to 
keep the adversary from knowing the extraordinary servers.

Our strategy for picking the extraordinary servers randomly is as follows.
First, we show that our algorithm experiences low cost (compared to $\OPT$) when
two pieces assigned to an ordinary
server are merged, while the cost for merging
pieces that are assigned to extraordinary servers is large (compared to $\OPT$). Next, we
reduce the problem of picking extraordinary servers to a paging problem, where
the pages correspond to servers such that pages in the cache
correspond to ordinary servers and pages outside the cache correspond to
extraordinary servers. Now when two pieces are merged, we issue the
corresponding page requests: A merge of pieces assigned to an ordinary
server corresponds to a page request of a page which is inside the cache, while
merging two pieces with at least one assigned to an extraordinary server corresponds to
a page request of a page which is not stored in the cache.  This leads the
paging algorithm to insert and evict pages into and from the cache and our
algorithm always changes the types of the servers accordingly. For example, when
a page corresponding to an ordinary server is evicted from the cache, we make
the corresponding server extraordinary. We conclude by showing that since the
randomized paging algorithm of Blum et al.~\cite{finely-competitive-paging} allows for a
polylogarithmic competitive ratio, we also obtain a polylogarithmic competitive
ratio for our problem.

\textbf{Future Directions.}
While the algorithms presented in this paper are asymptotically optimal (i.e.,
our upper and lower bounds on the competitive ratios match), our work opens
interesting avenues for future research. A first intriguing question for future
research regards the study of scenarios without the cc-condition.      
In this
setting, Avin et al.~\cite{obr-original,avin19dynamic} provided a deterministic
$O(k \lg k)$-competitive algorithm, \textsc{Crep}, for servers with capacity $(2+\varepsilon)k$
and complemented this result with a $\Omega(k)$ lower bound for deterministic
algorithms; while  \textsc{Crep} had a super-polynomial runtime, Forner et al.
\cite{apocs21repartition} recently demonstrated a polynomial-time
implementation, \textsc{pCrep}, which monitors the connectivity of communication
requests over time, rather than the density as in \textsc{Crep}, and this
enables a faster runtime.  However, nothing is known for randomized algorithms
and it would be exciting to determine whether polylogarithmic competitive ratios
are achievable.  Second, the competitive ratios of our algorithms depend
exponentially on $1/\varepsilon$ (see the discussions after
Theorems~\ref{onl:thm:det} and~\ref{onl:thm:randomized}). It would be
interesting to obtain tight competitive ratios with only a polynomial dependency
on $\varepsilon$, as this might yield more practical algorithms. To realize this
goal, it seems that one first has to come up with a PTAS for scheduling on
parallel identical machines in the \emph{offline} setting, where the dynamic
program (DP) or the ILP formulation has only $\poly(1/\varepsilon)$ DP cells or
variables, respectively.  In particular, this rules out using the technique by
Hochbaum and Shmoys~\cite{hochbaum87using}, that our algorithm relies on.

\subsection{Related Work}
\label{onl:sec:related}

The online graph partitioning problem considered in this paper
is generally related to classic online problems
such as competitive paging and caching~\cite{paging-mark,companion-caching,competitive-analysis,young-paging-soda,generalized-caching-optimal,caching-rejection-penalties},
$k$-server~\cite{k-server2}, or 
metrical task systems~\cite{metrical-task-systems}.
However, unlike these existing problems
where requests are typically related to specific items
(e.g., in paging) or locations in a graph or metric space
(e.g., the $k$-server problem and metrical task systems),
in our model,
requests are related to \emph{pairs of vertices}.
The problem can hence also be seen as a \emph{symmetric}
version of online paging, where each of the two items (i.e., vertices in our model)
involved in a request can be moved to either of the servers currently hosting one
of the items (or even to a third server).
The offline problem variant
is essentially a
$k$-way partitioning or 
graph partitioning
problem~\cite{Vaquero:2013:APL:2523616.2525943,abbe2018community}. The balanced
graph partitioning problem is 
related to minimum bisection~\cite{feige2002}, and
known to be hard to 
approximate~\cite{andreev2006balanced,Krauthgamer2006}. 
Balanced clustering problems have also been studied in 
streaming settings~\cite{stanton2014streaming,alistarh15streaming}.

Our model is also related to 
dynamic bin packing problems which allow for limited \emph{repacking}~\cite{FeldkordFGGKRW18}:
this model can be seen as a variant of our problem where pieces (resp.~items)
can both be dynamically inserted and deleted, and it is also possible to open new
servers (i.e., bins); the goal is to use only an (almost) minimal number of servers, and
to minimize the number of piece (resp.~item) moves.
However, the techniques of~\cite{FeldkordFGGKRW18} do not extend to our problem.

Another related problem arises in the context of generalized online scheduling,
where the current server assignment can be changed whenever a new job 
arrives,
subject to the constraint that the total size of moved jobs is bounded by some
constant times the size of the arriving job. While the reconfiguration cost in
this model is fairly different
from ours, the sensitivity analysis of our ILP 
is inspired by the techniques used in
Hochbaum and Shmoys~\cite{hochbaum87using} and
Sanders et al.~\cite{sanders09online}.

Our work is specifically motivated by 
the online balanced (re-)partitioning problem
introduced by Avin et al.~\cite{obr-original,avin19dynamic}.
In their model, the connected components of the graph $G$ can contain more than
$k$ vertices and, hence, might have to be split across multiple servers.
They presented a lower bound of $\Omega(k)$ for deterministic algorithms.
They complemented this result by a deterministic algorithm with competitive
ratio of $O(k \log k)$. This problem was also studied when the graph $G$ follows
certain random graphs models~\cite{obr-ring,netys17learn}.
For a scenario without resource augmentation, i.e., $\varepsilon = 0$,
there exists an $O(\ell^2\cdot k^2)$-competitive algorithm \cite{avin19dynamic} 
and a lower bound of $\Omega(\ell\cdot k)$ \cite{podc20ba}.

\textbf{Organization.}
Section~\ref{onl:sec:preliminaries} introduces our notation and
Section~\ref{onl:sec:overview} gives an overview of the algorithmic framework.
We explain the deterministic algorithm in detail, including the ILP, in
Section~\ref{onl:sec:ilp} and analyze it in Section~\ref{onl:sec:analysis}.
Section~\ref{onl:sec:randomized-algorithm} presents the randomized algorithm.
Our lower bounds are presented in Section~\ref{onl:sec:lbs} and
Appendix~\ref{onl:sec:omitted} contains omitted proofs.

\def\dif{D}
\def\ps{p_{1}}
\def\pl{p_{2}}
\def\pm{p_{m}}
\def\is{i_{1}}
\def\il{i_{2}}
\def\im{i_{m}}

\section{Preliminaries}
\label{onl:sec:preliminaries}

Let us first re-introduce our model together with some definitions.
We are given a graph $G=(V, E)$ 
with $|V| =\ell \cdot k$. In the beginning, $E = \emptyset$ and then edges are inserted in an
online manner. Initially, every vertex $v$ is assigned to one of
$\ell$~servers such that each server is assigned exactly $k$ vertices.
We call this server the \emph{source server} of $v$. For a server
$s$ we call the vertices which are initially assigned to $s$ the \emph{source vertices}
of $s$. 
For normalization
purposes, we consider each vertex to have a \emph{volume} $\size(v)$ of $1/k$, so that the
total volume of vertices initially assigned to a server is exactly $1$.

After each edge insertion, the online
algorithm must re-assign vertices to fulfill the \emph{cc-condition}, i.e., so that all vertices
of the same connected component of $G$ are assigned to the same server. To this
end, it can move vertices between servers at a cost of $1/k$ per vertex move.
As described in the introduction, the optimum offline algorithm $\OPT$ is only allowed
to place vertices with total volume up to 1 onto each server, while the online
algorithm $\ONL$ is allowed to place total volume
of total volume
of $1+\epsilon$ on each server, where $\epsilon > 0$ is a small constant.
For notational convenience, we will place a total volume of $1+c\cdot \epsilon$
for some constant~$c$,
which does not affect our asymptotic results as the algorithm can be started with
 $\epsilon' = \epsilon/c$.

Formally, the objective is to devise an online algorithm $\ONL$ 
which 
minimizes the \emph{(strict) competitive ratio $\rho$} defined as
$\rho=\cost(\ONL)/\cost(\OPT)$, where $\cost(\cdot)$
denotes the total volume of pieces moved by the corresponding
algorithm. For deterministic online algorithms,
the  edge insertion order is adversarial; 
for randomized online algorithms, we assume
an oblivious adversary that fixes an adversarial
request sequence without knowing the random choices
of the online algorithm.

The following definitions and concepts
are used in the remainder of this paper. 

\textbf{Pieces.}
Our online algorithm proceeds by tracking the \emph{pieces}, the connected
components of $G$ induced by the revealed edges.  The \emph{volume} of a piece
$p$, denoted by $|p|$, is the total volume of all its vertices.
For convenience, every server has a unique color from the set $\{1, \dots, \ell\}$
and every vertex has the \emph{color} of the server it was \emph{initially} assigned
to. 
For a
piece $p$, we define the \emph{majority color} of $p$ as the color that appears
most frequently among vertices of $p$ and, in case of ties, that is the smallest in the
order of colors.
We also refer to the corresponding server as the \emph{majority server} of $p$.
Similarly, we define the \emph{majority color} for a vertex $v$ to be the majority color
of the piece of $v$.  Note that the latter changes
dynamically as the connected components of $G$ change due to edge insertions.

\textbf{Size Classes and Committed Volume.}
To minimize frequent and expensive moves, 
 our approach
groups the pieces into small and large pieces, and for the ILP also partitions them into a constant number of size classes. The basic idea is to
``round down'' the volume of a piece to a suitable multiple of $1/k$ and to call all pieces of zero rounded volume small. 
However, pieces can grow and, thus, change their size class, which in turn might
create cost for the  online algorithm. Thus, we need to use a more ``refined'' rounding, that gives us some control over when such a class change occurs. 

More formally,  let us assume that
$1/4 > \epsilon \ge (10/k)^{1/4}$. We choose $\delta$ such that
$\frac{1}{2}\epsilon^2\le\delta\le\epsilon^2$ and $\delta=j\frac{1}{k}$ for some 
$j\in\mathbb{N}$. In addition, we assume
$\lceil 1\rceil_\delta - 1 \leq \delta/2$, where $\lceil\cdot\rceil_\delta$ is
the operation of rounding up to the closest multiple of $\delta$.
Claim~\ref{onl:claim:picking-delta} in Section~\ref{onl:sec:claim:picking-delta} shows that we can always find
such a $\delta$ provided that $k\ge 10/\epsilon^4$. We will also use a
constant $\gamma = 2\delta < 1$.

We partition the volume of a piece into \emph{committed} 
and \emph{uncommitted}
volume. The committed volume will always be a multiple of $\delta$, while the
uncommitted volume will be rather small (see below). 
We refer to the sum of committed and uncommitted volume
as the \emph{real volume} of the piece. We extend this definition to vertices:
Each vertex is either committed or uncommitted. Now the committed volume of a
piece is the volume of its committed vertices.
For a piece $p$, we write $|p|_c$ to denote its committed volume and $|p|_u$ to
denote its uncommitted volume. Hence, $|p| = |p|_c + |p|_u$.

We introduce size classes for the pieces. We say that a piece
is in \emph{class $i\in\mathbb{N}$} if its \emph{committed} volume is $i\cdot\delta$ (recall that
committed volume is always a multiple of $\delta$). 
Since the volume of a piece is never larger than
$1$, we have that $i \leq 1/\delta$ and thus there are only
$O(1/\delta) = O(1/\varepsilon^2) = O(1)$ size classes in total.

\textbf{Large and Small Pieces.}
Intuitively, we want to refer to pieces with total volume at least $\epsilon$ as
\emph{large} and to the remaining pieces as \emph{small}. For technical reasons,
we change this as follows. A piece is \emph{large} if its committed
volume is non-zero and \emph{small} otherwise. Thus, the small
pieces are exactly the pieces in class $0$. As the algorithm decides when to commit
volume, it controls the transition from small to large. Note that committed volume
never becomes uncommitted and, hence, a piece transitions only once from small to large.

\textbf{Monochromatic Pieces.}
Pieces that overwhelmingly contain vertices of a single color have to be
handled very carefully by an online algorithm because $\OPT$ may
not have to move many vertices of such a piece and thus experience very little
cost. Therefore we introduce the following notion, which needs to be different for small and large pieces since we use different scheduling techniques for them:
A large piece is called \emph{monochromatic} for its majority server
$s$ if the volume of its vertices 
that did not originate at $s$ is at most $\delta$. A small
piece is called \emph{monochromatic} if an $\epsilon$-fraction of its volume
did not originate at the majority server of the piece. We refer to pieces that
are not monochromatic as \emph{non-monochromatic}.

\section{Algorithmic Framework}
\label{onl:sec:overview}
In this section we present our general algorithmic framework. Some further details follow in
Section~\ref{onl:sec:ilp}.

(1) The algorithm always maintains the current set of pieces $\P$, where each piece is 
annotated by its size class.
If an edge insertion {merges} two pieces $\ps$ and $\pl$,
into a new merged piece $\pm$, it holds that
$|\pm|_u = |\ps|_u+|\pl|_u$ and $|\pm|_c = |\ps|_c+|\pl|_c$.
We say that a merge is \emph{monochromatic} if all of $\ps$, $\pl$ and $\pm$ are
monochromatic for the same server $s$.
Throughout the rest of the paper, we assume w.l.o.g.\ that $|\ps|\leq|\pl$ and
we let $\pm$ denote the piece that
resulted from the \emph{last merge-operation}.

\textbf{Invariants for Piece Volumes.}
Whenever the algorithm has completed its vertex moves after an edge insertion,
the following invariants for piece volumes are maintained.

\begin{enumerate}[itemsep=0pt]
\item A piece $p$ is small (i.e. has $|p|_c = 0$) iff $|p|<\epsilon$.
\label{onl:inv:smallpiecesmall}
\label{onl:inv:largepiecelarge}
\item A large piece has committed volume $i\cdot\delta$ for some
      $i\in\mathbb{N}, i> 0$. If it is monochromatic, all committed
      volume must be from its majority color.
\label{onl:inv:integralcommit}
\item\label{onl:inv:uncommitted-volume} The uncommitted volume of a large
      piece is at most $2\delta$, while the uncommitted volume of a small piece is at most $\epsilon$. 
	  Note that $2\delta = O(1/\varepsilon^2) \ll \varepsilon$.
\end{enumerate}
Now suppose that before a merge-operation, all pieces fulfill the invariants.
Then after the merge, the new piece $\pm$ might fulfill only the relaxed
constraint $|\pm|< 2\epsilon$ if $\pm$ is small (no committed volume) and the relaxed 
constraint  $|\pm|_u\le\epsilon+2\delta$ if $\pm$ is large (with committed
volume).\footnote{The first case occurs when $\ps$ and $\pl$ are small
	and have volume just below $\varepsilon$ (since then
	$|\pm|_c = |\ps|_c+|\pl|_c=0$ and $|\pm|_u = |\ps|_u+|\pl|_u<2\varepsilon$);
	in this case, $\pm$ is initially small and will become large only later when
	commit-operations are performed. The second case occurs when $\pl$ is large with $|\pl|_u$
	just below $2\delta$ and $\ps$ is small with $|\ps|$ just below
	$\varepsilon$.}
Before the next merge-operation, we will
perform \emph{commit-operations} on the piece $\pm$ until $\pm$ fulfills the
above invariants.  More concretely, 
if  $|\pm| \ge \epsilon$ then a commit-operation is executed as long as $|\pm|_u > 2 \delta$.
It selects uncommitted vertices inside $\pm$ of volume $\delta$ and sets their
state to committed (which makes $\pm$ large). If $\pm$ is monochromatic, the
commit-operation only selects vertices of the majority color, of which there is
a sufficient number since for large monochromatic pieces, the volume of
vertices of non-majority color is at most $\delta$.

(2) The algorithm further maintains a \emph{schedule $S$}, which is an assignment
of the pieces in $\P$ to servers. The algorithm guarantees that this schedule
fulfills certain invariants---the most important being 
the fact that the
total volume of pieces assigned to a server does not exceed  the server's
capacity by much.

\textbf{Adjusting Schedules.}
To reestablish these invariants after a change to $\P$, we run the \emph{adjust
schedule subroutine}. We provide the details of this subroutine in
Section~\ref{onl:sec:ilp} and now give a very short summary.
When the set $\P$ changes, this is due to one of two reasons: a \emph{merge
operation} or a \emph{commit-operation}.  Both types of changes might force us
to change the old schedule $S$ to a new schedule $S'$. To do so, we first solve
an ILP (to be defined in Section~\ref{onl:sec:ilp}) that computes the rough structure of
the new schedule $S'$. The ILP solution defines, among other things, the number
of extraordinary servers in the new schedule $S'$. Then we determine a concrete
schedule $S'$ that conforms to the structure provided by the ILP solution.
Crucially, we have to determine an $S'$ that is not too different from $S$, in
order to keep the cost for switching from $S$ to $S'$ small.

We note that the subroutine for adjusting the schedules only moves pieces
between the servers and hence does not affect the invariants for piece volumes.

\textbf{Handling an Edge Insertion.}
We now give a high-level overview of the algorithm when an edge
$(u,v)$ is inserted.  If 
$u$ and $v$ are part of the same piece, 
we do nothing since 
$\P$ did not change.  Otherwise, assume that $u$ is in piece $\ps$ and $v$ is
in $\pl$ with $\ps\neq\pl$ and $|\ps|\leq |\pl|$. Then we proceed as follows.
\begin{description}
	\item[Step~I] \emph{Move small to large piece:}
	Move the smaller piece $\ps$ to the server of the larger piece $\pl$.

	\item[Step~II] \emph{Merge pieces:}
	Merge $\ps$ and $\pl$ into $\pm$.
	Run the adjust schedule subroutine.

	\setlength{\intextsep}{1\baselineskip}
	\item[Step~III] \emph{Commit volume:}
		If $|\pm|\geq\varepsilon$, then
		\noindent
		\intextsep0pt
		\begin{algorithmic}
		\While{$|\pm|_u > 2\delta$}
			\State Commit volume $\delta$ for $\pm$.  Run the adjust schedule subroutine.
		\EndWhile
		\end{algorithmic}
\end{description}

\section{Adjusting Schedules}
\label{onl:sec:ilp}
Now we describe the subroutine for adjusting schedules in full detail.  In the
following, we define an ILP that helps in finding a good assignment of the
pieces to servers. We ensure that when the set of pieces only changes slightly,
then also the ILP solution only changes slightly.  We also show how the ILP
solution can be mapped to concrete schedules.

Before we describe the ILP in detail, we introduce reservation and
source vectors, as well as configurations. In a nutshell, a server's reservation
vector encodes how many pieces of each size class can be
assigned to that server at most.
A server's source vector, on the other hand, describes
the structure of the monochromatic pieces for that server. A configuration
is a pair of a reservation and a source vector and solving
the ILP will inform us which configurations should be used for the servers in
our schedule.

A \emph{reservation vector $r_s$} for a server $s$ has the
following properties. For a size class $i>0$, the entry $r_{si}$ describes the
total volume reserved on $s$ for the (committed) volume of pieces in class $i$
(regardless of their majority color). The entry $r_{s0}$ describes the total
volume that is reserved for uncommitted vertices (again, regardless of color);
note that these uncommitted vertices could belong to small or large pieces.
An entry $r_{si}$, $i>0$, must
be a multiple of $i\delta$ while the entry $r_{s0}$ is a multiple of $\delta$.
Note that $r_s$ does \emph{not} describe which concrete pieces are scheduled on
$s$ and not even the exact number of pieces of a certain class, as it only
``reserves'' space.

A \emph{source vector} $m_s$ for server $s$ has the following properties.
For a size class $i>0$, the
entry $m_{si}$ specifies the total committed volume of pieces in
class $i$ \emph{that are monochromatic for $s$}. Again recall that a
monochromatic piece only has committed volume of its majority color.  The entry
$m_{s0}$ describes the total uncommitted volume of color $s$ rounded up to a
multiple of $\delta$.  Observe that similarly to the reservation vectors,  (a) the
entries $m_{si}$ in the source vector are multiples of $i\delta$ and (b) the entry
$m_{s0}$ is a multiple of $\delta$. In addition, (c) the entries in $m_s$ sum up to
at most $\lceil 1\rceil_\delta$ as only vertices of color $s$ contribute.
Observe that the source vector of a server $s$ just depends on the sizes of
the $s$-monochromatic pieces and on which of their vertices are committed;
\emph{it does not depend on how an algorithm assigns the pieces to servers}.

A vector $m$ is \emph{a potential source vector} if it fulfills properties
(a)-(c) without necessarily being the source vector for a particular server.
Similarly, a \emph{potential reservation vector} $r$ is a
vector where the $i$-th entry is a multiple of $i\delta$, the $0$-th entry a
multiple of $\delta$, and $r$ is $\gamma$-valid. Here, we say that 
$r$ is \emph{$\gamma$-valid} if $\|r\|_1\le 1+\gamma$. Note that there are
only $O(1)$ potential reservation or source vectors since they have only
$O(1/\delta)=O(1)$ entries (one per size class) and for each entry there are
only $O(1/\delta)=O(1)$ choices.

A \emph{configuration $(r,m)$} is a pair consisting of a potential
reservation vector $r$ and a potential source vector $m$. We
further call a configuration $(r,m)$ \emph{ordinary} if $r\ge m$ (i.e., $r_i\ge
m_i$ for each $i$) and otherwise we call it \emph{extraordinary}. The intuition
is that servers with ordinary configurations have enough reserved space such
that they can be assigned all of their monochromatic pieces. Next, note that as
there are only $O(1)$ potential source and reservation vectors, there are only
$O(1)$ configurations in total.
\begin{claim}
\label{onl:claim:numberOfConfigurations}
	There exist only $O(1)$ different configurations $(r,m)$.
\end{claim}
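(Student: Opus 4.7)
The plan is to show the claim by a direct counting argument: since $\delta$ is a constant (depending only on $\varepsilon$), every quantity in sight is bounded by a constant, so both the number of potential reservation vectors and the number of potential source vectors are $O(1)$, and a configuration is just a pair of such vectors.

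First I would count the potential reservation vectors. Each such vector $r$ has one entry per size class, and the preliminaries established that there are only $O(1/\delta) = O(1/\varepsilon^2) = O(1)$ size classes in total. For $i > 0$, the entry $r_i$ is a nonnegative multiple of $i\delta$; since $\|r\|_1 \le 1+\gamma \le 2$, this leaves at most $2/(i\delta)+1 \le 2/\delta + 1 = O(1)$ possible values for $r_i$. For $i=0$, $r_0$ is a multiple of $\delta$ bounded by $1+\gamma$, giving again $O(1/\delta) = O(1)$ choices. Multiplying over the $O(1)$ coordinates yields only $O(1)$ potential reservation vectors.

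Next I would repeat essentially the same counting for potential source vectors, using that entries of $m$ satisfy the analogous divisibility conditions (a)–(b) and the total is bounded by $\lceil 1 \rceil_\delta \le 1+\delta/2$ by property (c). The same per-coordinate bound of $O(1)$ choices over $O(1)$ coordinates gives $O(1)$ potential source vectors.

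Finally, a configuration is simply a pair $(r,m)$, so the total count is the product of the two, i.e.\ $O(1) \cdot O(1) = O(1)$. The main (and only) point to be careful about is that the constants hidden in the $O(1)$ notation depend on $1/\varepsilon$ through $\delta$, but since $\varepsilon$ is treated as a constant throughout the paper this is harmless. There is no real obstacle here; the claim is essentially a bookkeeping statement that makes the subsequent ILP of constant size.
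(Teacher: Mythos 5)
Your proposal is correct and follows essentially the same argument as the paper: there are $O(1)$ size classes, hence $O(1)$ coordinates per vector, and each coordinate is a multiple of $\delta$ (or $i\delta$) bounded by a constant, giving $O(1)$ choices per coordinate and thus $O(1)$ vectors of each kind and $O(1)$ pairs. Your extra care about the $i\delta$ divisibility and the dependence of the hidden constants on $1/\varepsilon$ is consistent with the paper's later remark that the number of configurations is $(1/\varepsilon)^{O(1/\varepsilon^2)}$.
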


In the following, we assign configurations to servers and we will
call a server \emph{ordinary} if its assigned configuration is ordinary and
\emph{extraordinary} if its assigned configuration is extraordinary.

\medskip
We now define the ILP. Remember that the goal in this step is to
obtain a set of configurations, which we will then assign to the servers and which
will guide the assignment of the pieces to the servers. Thus, we introduce a
variable $x_{(r,m)}\in\mathbb{N}_0$ for each (ordinary or extraordinary)
configuration $(r,m)$.  After solving the ILP, our schedules will use exactly
$x_{(r,m)}$ servers with configuration $(r,m)$.
Furthermore, the objective function of the ILP is set such that the number of
extraordinary configurations is minimized.

The constraints of the ILP are picked as follows.  First, let $V_i$, $i>0$,
denote the total committed volume of all pieces in class $i$ and let $V_0$
denote the total uncommitted volume of all pieces. Note that the $V_i$ do not
depend on the schedule of the algorithm. Now we add a set of constraints, which
ensures that the configurations picked by the ILP reserve enough space such that
all pieces of class $i$ can be assigned to one of the servers.  Second, let
$Z_m$ denote the \emph{number} of servers with the potential source vector $m$ at this point in
time.
(Recall that the source vectors of the servers only depend on the current graph and the commitment decisions of the algorithm and \emph{not} on the
 algorithm's schedule.) We add a second set of constraints which ensures that for
each $m$, the ILP solution contains exactly $Z_m$ configurations with source
vector $m$.
Now the ILP is as follows.

\begin{equation*}
\begin{array}{llll}
\min\,\,
&\multicolumn{2}{l}{\sum\nolimits_{(r,m):\,r\not\ge m}x_{(r,m)}}& \\[2ex]
\text{s.t.}&\sum\nolimits_{(r,m)}x_{(r,m)}r_i/\delta   &\ge V_i/\delta&\,\,\text{for all $i$}\\[2ex]
&\sum\nolimits_r x_{(r,m)}                  &= Z_m&\,\,\text{for all $m$}
\end{array}
\end{equation*}
In the ILP we wrote $r_i/\delta$ and $V_i/\delta$ to ensure that ILP
only contains integral values. Further observe that the ILP has constant size
and can, hence, be solved in constant time: As there are only $O(1)$ different
configurations (Claim~\ref{onl:claim:numberOfConfigurations}), the ILP only has $O(1)$ variables. Also, since there are only
$O(1)$ size classes~$i$ and $O(1)$ source vectors~$m$, there are only $O(1)$
constraints.

Next, we show that an optimal ILP solution
serves as a lower bound on the cost paid by $\OPT$.
\begin{lemma}
\label{onl:lem:lowerILP}
	Suppose the objective function value of the ILP is $h$, then
	$\cost(\OPT)\ge (\gamma-\delta)h=\Omega(h)$.
\end{lemma}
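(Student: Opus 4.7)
The plan is to extract from $\OPT$'s fixed final assignment a feasible solution to the ILP; the optimality of $h$ will force this solution to use at least $h$ extraordinary configurations, after which we charge $\gamma-\delta = \delta$ to $\OPT$ for each such configuration.

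\textit{Step 1 (the natural OPT-solution).} Since every current piece is contained in some final connected component, $\OPT$'s final assignment induces an assignment of each current piece to a unique server. For each server $s$, define $r^{\OPT}_{si}$ as the total committed volume of class-$i$ pieces that $\OPT$ places on $s$ (for $i>0$ this is automatically a multiple of $i\delta$), and set $r^{\OPT}_{s0}:=\lceil u^{\OPT}_s\rceil_\delta$, where $u^{\OPT}_s$ is the total uncommitted volume that $\OPT$ places on $s$. Pairing each $r^{\OPT}_s$ with the (instance-determined) source vector $m_s$ yields one configuration per server, and hence an assignment to the ILP variables $x_{(r,m)}$. Feasibility: $\sum_s v^{\OPT}_{si}=V_i$ and $\sum_s u^{\OPT}_s=V_0$ give the coverage constraints; the $Z_m$-constraints hold because each source vector is a property of the instance and every server is counted once; and $\|r^{\OPT}_s\|_1\le 1+\delta\le 1+\gamma$, using $\OPT$'s per-server capacity $1$ and the fact that $\lceil\cdot\rceil_\delta$ adds less than $\delta$.

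\textit{Step 2 (absorb rounding-only deficits).} For each server $s$ whose only violation of $r^{\OPT}_s\ge m_s$ is $r^{\OPT}_{s0}<m_{s0}$ with gap exactly $\delta$, raise $r^{\OPT}_{s0}$ to $m_{s0}$, turning $s$ ordinary. This adds $\delta$ to $\|r^{\OPT}_s\|_1$, which then remains $\le 1+2\delta=1+\gamma$; it only helps the coverage constraint and leaves source vectors unchanged. The resulting \emph{modified} solution is still feasible, so by the optimality of $h$ it uses at least $h$ extraordinary configurations.

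\textit{Step 3 (charge $\delta$ per extraordinary server).} Let $M_s$ denote the volume of color-$s$ vertices that $\OPT$ places on a server other than $s$; then $\cost(\OPT)=\sum_s M_s$. For any server $s$ still extraordinary in the modified solution: (a) if some $i>0$ has $r^{\OPT}_{si}<m_{si}$, then since both are multiples of $i\delta$ the gap is at least $i\delta$, so at least one $s$-monochromatic class-$i$ piece is placed off $s$ by $\OPT$; by Invariant~\ref{onl:inv:integralcommit} all of its $i\delta$ committed volume is $s$-colored, giving $M_s\ge i\delta\ge\delta$; (b) otherwise the sole mismatch is in class $0$ with gap at least $2\delta$ (gap exactly $\delta$ was absorbed in Step~2): writing $U_s$ for the total uncommitted $s$-colored volume in the instance, the bounds $r^{\OPT}_{s0}\ge u^{\OPT}_s$ and $m_{s0}<U_s+\delta$ give $U_s-u^{\OPT}_s>\delta$, and since the on-$s$ uncommitted $s$-colored volume is at most $u^{\OPT}_s$, more than $\delta$ of uncommitted $s$-colored volume sits off $s$, so $M_s>\delta$. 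Summing over the at least $h$ extraordinary servers yields $\cost(\OPT)\ge\delta h=(\gamma-\delta)h$.

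The main obstacle is the class-$0$ case: both $r^{\OPT}_{s0}$ and $m_{s0}$ are $\delta$-ceilings of quantities that need not lie on the $\delta$-grid, so a mismatch of exactly $\delta$ can arise purely from rounding with no actual displacement in $\OPT$. The bumping in Step~2 is exactly what funnels these rounding-only mismatches into ordinary configurations, leaving every remaining extraordinary server to witness either a displaced monochromatic piece (case (a)) or a genuine $\delta$-scale excess of off-server uncommitted mass (case (b)).
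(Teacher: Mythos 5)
Your proof is correct, and it shares the paper's skeleton---extract from $\OPT$'s assignment a feasible ILP solution for the current piece set, invoke optimality to conclude that this solution has at least $h$ extraordinary configurations, and charge $\gamma-\delta$ to $\OPT$ for each one---but the middle of the argument is executed differently. The paper decides up front which servers will be made ordinary (those for which $\OPT$ displaces less than a $(\gamma-\delta)$-fraction of their source vertices) and then \emph{inflates} each such server's reservation vector by the committed and uncommitted volume of its source vertices that $\OPT$ parks elsewhere; this inflation costs at most $\gamma-\delta$ in $\|\cdot\|_1$ plus $\delta$ for rounding, and it forces $r\ge m_s$ by construction, so the per-server charge of $\gamma-\delta$ is immediate from the definition of the remaining (extraordinary) servers. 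You instead keep the \emph{unmodified} reservation vectors induced by $\OPT$'s placement and prove the converse implication: any server whose natural reservation fails to dominate $m_s$ (beyond the one-step rounding slack) must have at least $\delta=\gamma-\delta$ of its source volume displaced, via the class-by-class case analysis of Step~3. The price of your route is that you must separately neutralize the class-$0$ rounding artifact (your Step~2), which the paper's inflation absorbs silently; the benefit is that your vectors are $\gamma$-valid for the trivial reason that $\OPT$ respects capacity $1$, and the charging step makes explicit exactly where the displaced volume witnessing each extraordinary configuration lives. Both arguments are sound and yield the same constant.
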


\subsection{Schedules That Respect an ILP Solution}
\label{onl:sec:ilp-to-schedule}
Next, we describe the relationship of schedules and configurations. 
A \emph{schedule $S$} is an assignment of pieces to servers. The set of pieces
assigned to a particular server $s$ is called the \emph{schedule for $s$}.
A schedule for a server $s$ with source vector $m_s$ \emph{respects a reservation} $r$
if the following holds:
\begin{enumerate}[itemsep=0pt]
\item The committed volume of class $i$ pieces scheduled on $s$ is at most $r_{i}$.\label{onl:p:one}
\item The total uncommitted volume scheduled on $s$ is at most
$r_{0}+14\epsilon$.\label{onl:p:two}
\item If $r\ge m_s$ then all pieces that are monochromatic for $s$ are placed
on $s$.\label{onl:p:three}
\end{enumerate}
A schedule \emph{respects an ILP solution $x$} if there exists an assignment of
configurations to servers such that:
\begin{itemize}[itemsep=0pt,label=--]
  \item A server $s$ with source vector $m_s$ is assigned a configuration
  $(r,m)$ with $m=m_s$.
  \item A configuration $(r,m)$ is used exactly $x_{(r,m)}$ times.
  \item The schedule of each server respects the reservation of its assigned configuration.
\end{itemize}
Note that if all source vectors are different, then assigning the configurations
to the servers is clear (each server $s$ is assigned the configuration $(r,m)$ with
$m=m_s$). In the case that some source vectors appear in multiple
configurations, we describe a procedure for assigning the configurations to the
servers below.

The next lemma shows that servers respecting a reservation only
slightly exceed their capacities.
\begin{lemma}
\label{onl:lem:augmentation}
	If the schedule for a server $s$ respects a $\gamma$-valid reservation $r$, 
	then the total volume of all pieces assigned to $s$ is at most
	$1+\gamma+14\epsilon = 1+O(\varepsilon)$.
\end{lemma}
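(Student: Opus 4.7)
The plan is to decompose the total volume on server $s$ into its committed and uncommitted parts, bound each piece using the two relevant properties of ``respecting a reservation'', and then invoke $\gamma$-validity to add up the reservations.

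More concretely, I would first write the total volume assigned to $s$ as
\[
\mathrm{vol}(s) \;=\; \sum_{i>0} C_i(s) \;+\; U(s),
\]
where $C_i(s)$ denotes the total committed volume of class~$i$ pieces scheduled on $s$, and $U(s)$ denotes the total uncommitted volume scheduled on $s$ (coming from either small or large pieces). This decomposition matches the way reservation vectors account for volume: the entries $r_i$ for $i>0$ reserve space for committed volume of class~$i$, while $r_0$ reserves space for all uncommitted volume, regardless of the piece's class or color.

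Next, I would apply property~\ref{onl:p:one} of the definition of respecting a reservation to obtain $C_i(s) \le r_i$ for every $i>0$, and property~\ref{onl:p:two} to obtain $U(s) \le r_0 + 14\varepsilon$. Summing these bounds gives
\[
\mathrm{vol}(s) \;\le\; \sum_{i>0} r_i + r_0 + 14\varepsilon \;=\; \|r\|_1 + 14\varepsilon.
\]
Finally, the assumption that $r$ is $\gamma$-valid yields $\|r\|_1 \le 1+\gamma$, so $\mathrm{vol}(s) \le 1 + \gamma + 14\varepsilon$. Since $\gamma = 2\delta = O(\varepsilon^2)$ and $\varepsilon$ is a small constant, the right-hand side is $1 + O(\varepsilon)$ as claimed.

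There is essentially no obstacle here: the statement is really just the definition of ``respects a reservation'' combined with the definition of ``$\gamma$-valid''. Property~\ref{onl:p:three} about monochromatic pieces is not needed for this volume bound; it is used elsewhere in the analysis. The only substantive observation is that partitioning the volume on $s$ by (committed, class~$i$) versus (uncommitted) matches exactly how the entries of $r$ reserve capacity, so the bounds can be combined coordinate-wise without double counting.
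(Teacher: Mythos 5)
Your proposal is correct and matches the paper's proof: both decompose the volume on $s$ into committed volume (bounded via Property~\ref{onl:p:one} and $\|r\|_1\le 1+\gamma$) and uncommitted volume (bounded by $r_0+14\epsilon$ via Property~\ref{onl:p:two}), then sum. No issues.
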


\subsection{How to Find Schedules}
In this section, we describe how to resolve the ILP and adjust the existing
schedule after a merge or commit-operation so that it respects the ILP solution,
 in particular,
that the schedule of every server respects the reservation of its assigned configuration, i.e., Properties 1-3 above. 
It is crucial that this step can be performed at a small cost. We present
different variants: In most situations, the algorithm uses a generic
variant that is based on sensitivity analysis of ILPs. However, in some
special cases (cases in which $\OPT$ might pay very little) using the generic
variant might be too expensive. Therefore, we develop special variants for
these cases that resolve the ILP and adjust the schedule at zero cost.

Before we describe our variants in detail, note that it is not clear how to
assign small pieces to servers based on the ILP solution. Hence, we define our
variants such that in the first phase they move some pieces around to
construct a respecting schedule but they ignore Property~\ref{onl:p:two} while doing
so, i.e., they only guarantee Property~\ref{onl:p:one} and Property~\ref{onl:p:three}.
After this (preliminary) schedule has been constructed, we run a
\emph{balancing procedure} (described below), which ensures that
Property~\ref{onl:p:two} holds. The balancing procedure only moves small pieces and
we show that its cost is at most the cost paid for the first phase. As it is relatively short, we describe it first.

\textbf{Balancing Procedure for Small Pieces.}
\def\slack{\operatorname{slack}}
We now describe our balancing procedure, which moves only small pieces and for
which we show that Property~\ref{onl:p:two} of respecting schedules is satisfied
after it finished. The balancing procedure is run after one of the variants of
the ILP solving is finished.

For a server $s$, let $v_u(s)$ denote the total uncommitted volume scheduled at
$s$. We define the slack of a server $s$ by $\slack(s):= r_{s0}-v_u(s)$.  Note
that because of the first constraint in the ILP with $i=0$, there is always a
server with non-negative slack.
Next, we equip every server $s$ with an \emph{eviction budget} $\mathit{budget}(s)$ that is initially $0$.
Now, any operation outside of the balancing procedure that decreases the slack
must increase the eviction budget by the same amount. Such an operation could,
e.g., be a piece $p$ that is moved to $s$ (which decreases the slack by $|p|_u$)
or a decrease in $r_{s0}$ when a new configuration is assigned to $s$. (Note that
increasing the eviction budget increases the cost for the operation performing
the increase; we will describe how we charge this cost later.) Intuitively it should
follow that $\mathit{budget}(s)$ roughly equals $-\slack(s)$ and indeed we can show through
a careful case analysis that $\mathit{budget}(s) \ge -\slack(s) - 2\epsilon$
(see Claim~\ref{onl:cla:budget}).

This eviction budget is used to pay for the cost of moving small pieces
away from $s$ when the balancing procedure is called. We say a small piece $p$ is
\emph{movable} if either its majority color has at most a $(1-2\epsilon)$-fraction of the volume of $p$ (the piece is far from monochromatic), or its
majority color corresponds to an extraordinary server.
The balancing procedure does the following for each server~$s$:
\begin{algorithmic}
\While{there is a movable piece $p$ on $s$ with $|p|<\mathit{budget}(s)$}
	\State Move $p$ to a server with currently non-negative slack.
	\State $\mathit{budget}(s) = \mathit{budget}(s)-|p|$
\EndWhile
\end{algorithmic}
The following lemma shows that after balancing procedure finished,
$\slack(s)\geq-14\varepsilon$.
This implies that when the balancing procedure finished,
Property~\ref{onl:p:two} holds since
$v_u(s) = r_{s0} - \slack(s) \leq r_{s0} + 14\varepsilon$.
\begin{lemma}
\label{onl:lem:slack}
	After the balancing procedure for a server $s$ finished, we have that
	$\slack(s)\geq -14\epsilon$.
\end{lemma}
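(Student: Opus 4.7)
The plan is to bound $\mathit{budget}(s)$ at termination and then apply Claim~\ref{onl:cla:budget}, which gives $\mathit{budget}(s) \ge -\slack(s) - 2\epsilon$ and hence $\slack(s) \ge -\mathit{budget}(s) - 2\epsilon$. It therefore suffices to show $\mathit{budget}(s) \le 12\epsilon$ at the moment the while loop exits. I would split on the exit condition.

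\textbf{Case A: a movable piece $p$ is still on $s$ at exit.} Then the while-loop guard forces $|p| \ge \mathit{budget}(s)$. Movable pieces are small by definition, so $|p| < 2\epsilon$ using Invariant~\ref{onl:inv:smallpiecesmall} (or the relaxed post-merge version that allows $|p| < 2\epsilon$ for the most recently merged piece). Hence $\mathit{budget}(s) < 2\epsilon$, which gives $\slack(s) > -4\epsilon$ and we are done.

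\textbf{Case B: no movable piece remains on $s$ at exit.} Here I argue directly that $v_u(s) \le r_{s0} + 14\epsilon$. Every remaining piece on $s$ is either (i) large, contributing at most $2\delta$ of uncommitted volume by Invariant~\ref{onl:inv:uncommitted-volume}, or (ii) small and non-movable. A non-movable small piece has majority-color fraction exceeding $1-2\epsilon$ at an ordinary majority server $s'$; by Property~\ref{onl:p:three} of respecting schedules, if such a piece were strictly monochromatic for $s'$ (i.e., majority fraction $\ge 1-\epsilon$) it would already be placed on $s'$. Hence any non-movable small piece on $s$ with majority $s' \ne s$ has majority fraction strictly below $1-\epsilon$, so it carries at least an $\epsilon$-fraction of off-majority volume, while non-movable small pieces on $s$ whose majority is $s$ itself contribute to the color-$s$ uncommitted volume tracked by the source-vector entry $m_{s0}$. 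The latter contribution is covered by $r_{s0}$ via the configuration assigned to $s$ up to the $\delta$-rounding in $m_{s0}$, and the former can be charged to the off-majority volume that Property~\ref{onl:p:three} combined with Property~\ref{onl:p:one} bounds through the ILP's uncommitted-capacity constraint. Summing the $2\delta$-per-large-piece overhead with the rounding and off-majority accounting yields $v_u(s) \le r_{s0} + 14\epsilon$.

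The hardest part is Case B: disentangling how non-movable small pieces that fail to match the server (and therefore were not diverted to their majority by Property~\ref{onl:p:three}) still fit within $r_{s0}$ up to an additive $O(\epsilon)$. The lever is the gap between the movability threshold $2\epsilon$ and the monochromaticity threshold $\epsilon$: this gap, together with the $\delta$-rounding of $m_{s0}$ and the ILP's constraint on total uncommitted volume, is what ultimately limits the extra uncommitted volume on $s$ to $14\epsilon$ beyond $r_{s0}$ and hence $\slack(s) \ge -14\epsilon$.
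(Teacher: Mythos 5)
Your overall strategy is the same as the paper's: invoke Claim~\ref{onl:cla:budget} to turn a very negative slack into a large eviction budget, conclude that no movable piece can then survive the while-loop, and bound the uncommitted volume of what remains by splitting it into large pieces and non-movable small pieces. (The paper phrases this as a contradiction argument rather than your two-case split, but that is cosmetic, and your Case~A is fine.) The problem is that Case~B, which you yourself identify as the crux, is not actually proved. For the non-movable small pieces on $s$ whose majority server is an ordinary $s'\neq s$, you assert that their volume ``can be charged to the off-majority volume that Property~\ref{onl:p:three} combined with Property~\ref{onl:p:one} bounds through the ILP's uncommitted-capacity constraint.'' This does not go through: the ILP constraint $\sum_{(r,m)}x_{(r,m)}r_0\ge V_0$ is a global statement about total reserved uncommitted capacity and says nothing about how much uncommitted volume sits on the particular server $s$; Property~\ref{onl:p:one} concerns committed volume of classes $i>0$; and knowing that each such piece carries an $\epsilon$-fraction of off-majority volume does not bound the number or total volume of such pieces residing on $s$. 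The paper's actual argument charges the non-movable small pieces against the color-$s$ source volume: each such piece is at least a $(1-2\epsilon)$-fraction color $s$, so their total volume is at most $\bar{m}_{s0}/(1-2\epsilon)\le \bar{m}_{s0}+4\epsilon\le r_{s0}+4\epsilon$, where $\bar{m}_{s0}$ is the unrounded color-$s$ uncommitted volume and the last step uses that $s$ is ordinary, so $r_{s0}\ge m_{s0}\ge \bar{m}_{s0}$. Your attempt to route these pieces through the ILP instead of through $m_{s0}$ is the missing idea.

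A second, smaller gap is the large-piece accounting. You claim each large piece contributes at most $2\delta$ of uncommitted volume by Invariant~\ref{onl:inv:uncommitted-volume}, but (i) that invariant is only guaranteed once all commit-operations for an edge insertion have completed, whereas the balancing procedure also runs at intermediate points where the freshly merged piece $\pm$ can carry uncommitted volume up to $\epsilon+2\delta$, and (ii) even granting $2\delta$ per piece, you never bound the number of large pieces on $s$. The paper's Claim~\ref{onl:cla:uncommitted} does both: it treats $\pm$ separately and uses the fact that every other large piece has real volume at least $\epsilon$, so its uncommitted volume is at most a $2\delta/\epsilon$-fraction of its real volume, and the committed volume on $s$ is at most $1+\gamma$; this yields the bound of $10\epsilon$, which combined with the $r_{s0}+4\epsilon$ above gives the stated $v_u(s)\le r_{s0}+14\epsilon$, i.e., $\operatorname{slack}(s)\ge-14\epsilon$.
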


\subsubsection{Overview of the Variants}
Next, we give the full details of the main algorithm for adjusting the schedules in
different cases.

\begin{description}[noitemsep]
	 \item[Step~I] \emph{Move small to large piece:}
	 Move the smaller piece $\ps$ to the server of the larger piece $\pl$.
     \item[Step~II] \emph{Merge pieces:}
		Merge $\ps$ and $\pl$ into $\pm$.
		Then adjust the schedule as follows:
          \begin{itemize}[topsep=0pt,label=--]
          \item If $\ps$ is small, then the ILP does not change and no adjustment is necessary.
          \item Else: if the merge is non-monochromatic \emph{or} $s$ is extraordinary,
					use the Generic Variant, otherwise, use Special Variant~A.
		  \item Run the rebalancing procedure.
          \end{itemize}
     \item[Step~III] \emph{Commit volume:} If $|\pm|\geq\varepsilon$, then
			\noindent
			\intextsep0pt
			\begin{algorithmic}
			\While{$|\pm|_u > 2\delta$:}
				\State Commit volume $\delta$ for $\pm$. 
				\StateN{If $\pm$ is non-monochromatic \emph{or} $s$ is extraordinary,
							use the Generic Variant,
						otherwise, use Special Variant~B.}
				\State{Run the rebalancing procedure.}
			\EndWhile
			\end{algorithmic}
\end{description}

\subsubsection{The Generic Variant}
\label{onl:sec:variantGeneric}
We now describe the Generic Variant of the schedule adjustment.  Suppose the set
of pieces~$\P$ changed into $\P'$ due to a merge or a commit-operation.

The algorithm always maintains for the current set $\P$ an optimum ILP solution. 
Let $x$ be the ILP solution for $\P$. When $\P$ changes, the algorithm runs the ILP to obtain the optimum
ILP solution $x'$ for $\P'$.

In the following, we first argue how to assign the configurations
from $x'$ to the servers and then we argue how we can transform a schedule $S$
(respecting $x$) into a schedule $S'$ (respecting $x'$) with little cost.

We first assign the configurations given by $x'$ to servers
by the following greedy process. A configuration
$(r,m)$ is \emph{free} if it has not yet been assigned to $x'_{(r,m)}$ servers. As long as
there is a free configuration $(r,m)$ and a server $s$
that had been assigned $(r,m)$ in schedule $S$, we assign $(r,m)$ to $s$. The
remaining configurations are assigned arbitrarily subject to the constraint
that a server $s$ with source vector $m_s$ obtains a configuration of the form
$(r,m_s)$ for some $r$.

\medskip
\noindent
Now that we have assigned the configurations to the servers, we still have to
ensure that the new schedule respects these new server configurations. We start
with some definitions.

First, let $\mathcal{A}$ be the set of servers for which the set of scheduled pieces
changed due to the merge- or commit-operation. For a merge-operation, these are the
servers that host one of the pieces $\ps,\pl$, or $\pm$, and for a
commit-operation, this
is the server that hosts the piece $\pm$ that executes the commit. Note that
$|\mathcal{A}|\leq 3$.
Second, let $\mathcal{B}$ be the set of servers that changed their source vector due to
the merge or commit-operation. Note that for a commit-operation
$|\mathcal{B}|$ could
be large, because the committed volume could contain many different colors and
for each corresponding server, the source vector could change by a reduction of
$m_{0}$.
Third, we let $\mathcal{C}$ be the set of servers that changed their assigned
configuration between $S$ and the current schedule~$S'$. Note that
$|\mathcal{C}|\le |\mathcal{B}|+\|x-x'\|_1$.

Observe that for servers $s\not\in\mathcal{A}\cup\mathcal{C}$ neither
their assigned configuration (since $s\not\in\mathcal{C}$) nor their set of
scheduled pieces (since $s\not\in\mathcal{A}$) has changed. Thus, these servers
already respect their configuration and, hence, we do not move any pieces for
these servers now.  For the servers in $\mathcal{A}\cup \mathcal{C}$ we do the
following:
\begin{enumerate}[itemsep=0pt]
	\item We mark all pieces currently scheduled on servers in $\mathcal{A}\cup
		\mathcal{C}$ as \emph{unassigned}.
	\item Every ordinary server in $\mathcal{A}\cup \mathcal{C}$
		moves all of its
		monochromatic pieces to itself. This guarantees Property~\ref{onl:p:three} of a
		respecting schedule. Note that this step may move pieces away from servers
		in $\overline{\mathcal{A} \cup \mathcal{C}}$.

	\item The remaining pieces are assigned in a first fit fashion. We say a server
		is \emph{free} for class $i>0$ if the \emph{committed volume} of class
		$i$ pieces already scheduled on it is (strictly) less than $r_i$. It is
		\emph{free} for class $0$ if the uncommitted volume scheduled on it is
		less than $r_0$. 
 
		To schedule an unassigned piece $p$ of class $i$, we determine
		a free server for class $i$ and schedule $p$ there. The first set of
		constraints in the ILP guarantees that we always find a free server.
\end{enumerate}
This scheduling will guarantee Property~\ref{onl:p:one} of a respecting schedule,
i.e., for all $i>0$ the volume of class $i$ pieces scheduled on a server $s$ is
at most $r_{si}$. This holds because $r_{si}$ is a multiple of $i\delta$. If we
decide to schedule a class $i$ piece on $s$ because a server is free for class
$i$ then it actually has space at least $i\delta$ remaining for this class.
Hence, we never overload class $i$, $i>0$.

\medskip
\noindent
In the following, we develop a bound on the cost of the above scheme. For the
analysis of our overall algorithm we use an involved amortization scheme.
Therefore, the cost that we analyze here is not the real cost that is incurred
by just moving pieces around but it is inflated in two ways:
\begin{enumerate}[label=(\Alph*)]
	\item If we move a piece $p$ to a server $s$, we increase the eviction budget of
	$s$ by $|p|_u$.\label{onl:evict}
	\item Whenever we change the configuration of a server from a ordinary to
	extraordinary, we experience an \emph{extra cost} of
	$4(1+\gamma)/\delta$.\label{onl:extra} This will be required later in
	Case~\ref{onl:case:ps-small} of the analysis.
\end{enumerate}
Observe that Cost Inflation~\ref{onl:evict} clearly only increases the cost by
a constant factor. Cost Inflation~\ref{onl:extra} will also only increase the
cost by a constant factor as the analysis below assumes constant cost for every
server that changes its configuration. Note that the Generic Variant is the
only variant for adjusting the schedule for which Inflation~\ref{onl:extra} has an
affect; the other variants do not move pieces around and do not generate any
new extraordinary configurations.

The following lemma provides the sensitivity analysis for the ILP. Its first
point essentially states that for adjusting the schedules, we need to pay cost
proportional to the number of servers that change their source configuration
from $\P$ to~$\P'$ plus the change in the ILP solutions. The second point then
bounds the change in the ILP solutions by the number of servers that change
their source vectors from $\P$ to~$\P'$.
\begin{lemma}
\label{onl:lem:adjustLemma}
	Suppose we are given a schedule $S$  that respects an ILP solution $x$ for a set
	of pieces $\mathcal{P}$. Let $\mathcal{P}'$ denote a set of pieces obtained
	from $\mathcal{P}$ by either a
	merge or a commit-operation, and let $\dif$ denote the number of servers that have a different
	source vector in $\mathcal{P}$ and $\mathcal{P}'$. Then:
	\begin{enumerate}
		\item\label{onl:lem:adjustSchedule}
			If $x'$ is an ILP solution for $\P'$,
			then we can transform $S$ into $S'$ with cost $O(1+\dif+\|x-x'\|_1)$.
		\item\label{onl:lem:adjustSolution}
			Then we can find an ILP solution $x'$ for $\P'$ with $\|x-x'\|_1= O(1+D)$.
		\item\label{onl:obs:mergeCheap}
			If the operation was a merge-operation, then $\dif\le 3$.
	\end{enumerate}
\end{lemma}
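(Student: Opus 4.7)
The plan is to prove the three items in the order (3), (2), (1): item (3) is essentially definitional, item (2) is the ILP sensitivity analysis (and is the main obstacle), and item (1) will follow from item (2) together with a counting argument on the servers whose contents change.

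\textbf{Item (3).} A source vector $m_s$ depends only on the sizes and commit-statuses of the pieces that are monochromatic for $s$. In a merge of $\ps$ and $\pl$ into $\pm$, the only pieces whose monochromatic status can change are these three, so the only $m_s$ that can change are those for servers $s$ whose color is the majority color of $\ps$, $\pl$, or $\pm$. Hence $\dif\le 3$.

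\textbf{Item (2).} The ILP has constant size (Claim~\ref{onl:claim:numberOfConfigurations}), so only its right-hand side changes between $\P$ and $\P'$. In a merge of two large pieces of classes $i_1,i_2$ into a piece of class $i_1+i_2$, the $V_i/\delta$ values change in at most three coordinates (each by $O(1)$), while the $Z_m$-constraints change in at most $\dif$ coordinates. A commit of volume $\delta$ changes the $V$-side in exactly two coordinates ($V_0/\delta$ drops by $1$, some $V_j/\delta$ grows by $1$) and $Z_m$ in at most $\dif$ coordinates. Starting from $x$, I would construct a feasible $x'$ by $O(1+\dif)$ local edits: for each source vector $m$ with changed $Z_m$, add or remove one copy of a suitable configuration $(r,m)$; for each affected $V_i$-constraint, swap one configuration for another whose reservation vector covers the residual class-$i$ demand. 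At each step such a target configuration exists because the set of valid configurations is finite (Claim~\ref{onl:claim:numberOfConfigurations}) and always contains one whose $\gamma$-valid reservation has sufficient slack for the $O(1)$ additional volume. Each edit changes $\|x-x'\|_1$ by $O(1)$, giving $\|x-x'\|_1=O(1+\dif)$.

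\textbf{Item (1).} The Generic Variant only moves pieces that are scheduled on servers in $\mathcal{A}\cup\mathcal{C}$, together with monochromatic pieces pulled in by ordinary servers in that set. Since $|\mathcal{A}|\le 3$ (only the servers currently hosting $\ps,\pl,\pm$ are directly touched) and $|\mathcal{C}|\le|\mathcal{B}|+\|x-x'\|_1=\dif+\|x-x'\|_1$, and since each server holds total volume $O(1)$ by Lemma~\ref{onl:lem:augmentation}, the volume actually moved is $O(1+\dif+\|x-x'\|_1)$. Both cost inflations fit within this bound: Inflation~\ref{onl:evict} charges each server with at most the uncommitted volume just moved onto it, which is at most the base cost up to a constant factor; Inflation~\ref{onl:extra} charges $O(1)$ per server switching from an ordinary to an extraordinary configuration, which can happen for at most $|\mathcal{C}|$ servers. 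The delicate step is item (2): the constructed $x'$ must simultaneously satisfy every $V_i$- and $Z_m$-constraint, and each local edit must use only valid configurations of bounded $\ell_1$-norm; this requires a careful case split on whether the changed source vector is gaining or losing a monochromatic piece, and on whether an edit turns an ordinary configuration extraordinary.
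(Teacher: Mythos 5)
Your items (1) and (3) are essentially the paper's own arguments and are fine: (3) is the observation that only servers for which one of $\ps,\pl,\pm$ is monochromatic can change their source vector, and (1) bounds the moved volume by $O(|\mathcal{A}\cup\mathcal{C}|)$ using that each server carries total volume $1+O(\varepsilon)=O(1)$ together with $|\mathcal{A}\cup\mathcal{C}|\le 3+\dif+\|x-x'\|_1$.

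The genuine gap is in item (2). The solution $x'$ the algorithm must maintain is an \emph{optimal} solution of the new ILP, not merely a feasible one: its objective value is what is compared against $\hmax$ and hence against $\cost(\OPT)$ via Lemma~\ref{onl:lem:lowerILP}, and optimality of the maintained solution is also a precondition for repeating the argument at the next merge or commit (and for the Special Variants, whose correctness rests on the current solution being optimal). Your local-edit construction only aims at feasibility, so even if it were completed it would not prove what the algorithm needs; the entire difficulty of item (2) is that the new \emph{optimum} could a priori be far from $x$, and no amount of cheap feasibility repair addresses that. The paper closes this with ILP sensitivity (Theorem~\ref{onl:thm:sensitivity}, Schrijver, Cor.~17.2a): the constraint matrix has $O(1)$ columns with entries bounded by $(1+\gamma)/\delta$, so all subdeterminants are $O(1)$ by Hadamard; the right-hand side changes by $O(1+\dif)$ in $\ell_1$ (at most $\dif$ changed $Z_m$-entries and at most $3/\delta$ total change in the $V_i$); and the theorem then yields an \emph{optimum} $x'$ with $\|x-x'\|_\infty=O(1+\dif)$, hence $\|x-x'\|_1=O(1+\dif)$ since $x$ has $O(1)$ coordinates. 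Separately, even your feasibility argument is incomplete: when some $V_i/\delta$ increases you must exhibit a configuration in use with slack in class $i$, or else modify a reservation vector, which changes two coordinates of $x$ and must be done without violating the equality constraints $\sum_r x_{(r,m)}=Z_m$; you defer exactly this to an unspecified ``careful case split,'' which is where the real work would lie if one tried to avoid the sensitivity theorem.
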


\subsubsection{Special Variant A: Monochromatic Merge}
\label{onl:sec:variantA}
Special Variant~A is used if we performed a monochromatic merge-operation of two
large pieces $\ps, \pl$ and if the server $s$ that holds the piece $\pl$ is
ordinary. Then $\OPT$ may not experience any cost. Therefore, we also want to
resolve the ILP and adjust the schedule $S$ with zero cost.

Since the merge is monochromatic, all of $\ps$, $\pl$ and $\pm$ are
monochromatic for $s$, and since $s$ has an ordinary configuration, $\ps$ and
$\pl$ are already scheduled at~$s$. Hence, the new piece $\pm$ (which is
generated at $\pl$'s server) is already located at the right server $s$.

We obtain our schedule $S'$ by deleting the assignments for $\ps$ and $\pl$
from $S$ and adding the location~$s$ for the new piece~$\pm$.
Now let $\is, \il$, and $\im$ denote the classes of pieces $\ps,\pl$, and
$\pm$, respectively (note that these classes are at least 1 as all pieces are large).
Then the new ILP can be obtained by only changing the configuration vector $m_s$
and setting
\begin{equation*}
	\begin{array}{lclll}
	m_{s{\is}}'    &:= &m_{s{\is}}    &- &|\ps|_c\\
	m_{s{\il}}'    &:= &m_{s{\il}}    &- &|\pl|_c\\
	m_{s{\im}}'    &:= &m_{s{\im}}    &+ &|\pm|_c\\
	\end{array},
	\text{~~~~}
	\begin{array}{lclll}
	Z_{m_s}'        &:= &Z_{m_s} &- &1\\
	Z_{m_s'}'         &:= &Z_{m_s'}  &+ &1\\
	\end{array}
\end{equation*}
and
\begin{equation*}
	\begin{array}{lclll}
	V_{\is}'        &:= &V_{\is}    &- &|\ps|_c   \\
	V_{\il}'        &:= &V_{\il} &- &|\pl|_c\\
	V_{\im}'        &:= &V_{\im}    &+ &|\pm|_c   \\
	\end{array}.
\end{equation*}

\medskip
To obtain a solution $x'$ to this new ILP, we change the
reservation vector for the server $s$ as follows.
\begin{equation*}
\begin{array}{lclll}
r_{s\is}'    &:= &r_{s\is}    &- &|\ps|_c   \\
r_{s\il}'    &:= &r_{s\il}    &- &|\pl|_c\\
r_{s\im}'    &:= &r_{s\im}    &+ &|\pm|_c   \\
\end{array}.
\end{equation*}
This does not change the $\|\cdot\|_1$-norm of the vector $r$ because 
$r_{\is}\ge m_{\is}\ge |\ps|_c$ (this follows from the definition of $m_{\is}$ and
the fact that $r_s\ge m_s$ holds) and because
$|\ps|_c+|\pl|_c = |\pm|_c$. 
We obtain the ILP solution $x'$ by setting
\begin{equation*}
	x_{(r_s,m_s)}'   := x_{(r_s,m_s)}   - 1
	\text{\,\,\,\,\,\,and\,\,\,\,\,\,}
	x_{(r_s',m_s')}' := x_{(r_s',m_s')} + 1.
\end{equation*}
Note that $r_s\ge m_s$ implies $r_s'\ge m_s'$. Hence, our new ILP solution does not
increase the objective function value of the ILP (i.e., the number of
extraordinary configurations).  In Lemma~\ref{onl:lem:ilp-solution-still-optimal} in
Section~\ref{onl:sec:ilp-solution-still-optimal} we show that merging two large monochromatic pieces of a server
cannot decrease the objective function value of the ILP. Therefore, the new
ILP solution $x'$, which has the same objective function value as~$x$, is
optimal.

Finally, observe that we only changed the configuration of server $s$ and that
we did not move any pieces.
Hence, we can transform $\mathcal{P}$, $x$ and $S$ into
$\mathcal{P}'$, $x'$ and $S'$ with zero cost.

\subsubsection{Special Variant B: Monochromatic Commit}
\label{onl:sec:variantB}
Suppose we perform a commit-operation for a monochromatic piece $\pm$ that is
located at an ordinary server~$s$. Then $\OPT$ may not experience any cost.
Therefore, we present a special variant for adjusting the schedule that also
induces no cost.  We perform a routine similar to Special Variant~A and provide
the details in Appendix~\ref{app:onl:sec:variantB}.

\def\bold #1{{\bfseries\mathversion{bold}#1}}
\makeatletter
\newcounter{step}\def\thestep{\Roman{step}}
\newcounter{case}\def\thecase{\alph{case}}
\def\step #1{\setcounter{case}{0}\stepcounter{step}\bold{Step~\thestep: #1.}}
\def\case
#1{\medskip\noindent\setlength\leftmargin{1cm}\stepcounter{case}\bold{Case
    (\thestep\thecase) #1.}%
\edef\@currentlabel{\thestep\thecase}%
}
\makeatother

\def\NM{\mathrm{NM}}
\section{Analysis}
\label{onl:sec:analysis}
We first give a high level overview of the analysis.
Let $\mathcal{P}^*$
denote the final set of pieces. A simple lower bound on the cost of $\OPT$ is
as follows. Let $\NM$ denote the set of vertices that do not have the majority
color within their piece in $\mathcal{P}^*$. Then $\cost(\OPT)\ge\frac{1}{k}|\NM|$,
because each vertex has volume $\frac{1}{k}$ and for each piece in
$\mathcal{P}^*$, $\OPT$ has to move all vertices apart from vertices of a
single color. Hence, the total volume of pieces moved by $\OPT$ is at least
$\frac{1}{k}|\NM|$. 

We want to exploit this lower bound by a charging argument. The general idea is
that whenever our online algorithm experiences some cost $C$, we \emph{charge}
this cost to vertices whose color does not match the majority color of their
piece. If the total charge made to each such vertex $v$ is at most
$\alpha\cdot\size(v)$, then the cost of the online algorithm is at most
$\alpha\cdot\cost(\OPT)$. When we charge cost to vertices, we will refer to this
as \emph{vertex charges}.

The difficulty with this approach is that at the time of the charge, we do not
know whether a vertex will have the majority color of its piece in the end.
Therefore, we proceed as follows.  Suppose we have a subset $S$ of vertices in a
piece $p$ and a subset $Q\subseteq S$ does not have the current majority color
of $S$ . Then \emph{regardless of the final majority color of $p$}, a total
volume of $\size(Q)$ of vertices in $S$ will not have this color in the end.
Hence, when we distribute a charge of $C$ evenly among the vertices of~$S$, a
charge of $\size(Q)\cdot C/\size(S)$ goes to vertices that do not have the final
majority color.  We call this portion of the  charge \emph{successful}.

The following lemma shows that to obtain algorithms competitive to $\OPT$, it
suffices if we bound the successful and the total vertex charges.
\def\chargesucc{\mathit{charge}_{\operatorname{succ}}}
\def\chargemax{\mathit{charge}_{\operatorname{max}}}
\begin{lemma}
\label{onl:lem:lowerVertex}
	Suppose the total successful charge is at least
	$\chargesucc$ while the maximum (successful and unsuccessful)
	charge to a vertex is at most $\chargemax$. Then
	$\cost(\OPT)\ge\frac{1}{k}\chargesucc/\chargemax$.
\end{lemma}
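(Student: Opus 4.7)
The plan is to combine the elementary lower bound $\cost(\OPT)\ge\frac{1}{k}|\NM|$ (introduced right before the lemma) with a pigeonhole-style counting argument on where successful charges can land.

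First, I would unpack the definitions. By the construction of successful charges explained just above the lemma, any successful portion of a charge is assigned only to vertices that, in the final set of pieces $\mathcal{P}^*$, do \emph{not} carry the majority color of their piece; that is, successful charges go exclusively to vertices in $\NM$. This is the key structural observation, and it follows directly from the definition: when we distribute a charge evenly among a subset $S\subseteq p$, the successful portion is exactly the fraction corresponding to vertices of $S$ whose color will eventually fail to match the majority color of $p$ in $\mathcal{P}^*$, which is a subset of $\NM$.

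Next, I would bound $|\NM|$ from below. Since every vertex $v\in\NM$ receives total charge (successful plus unsuccessful) at most $\chargemax$, in particular it receives at most $\chargemax$ of successful charge. Summing over all vertices in $\NM$ gives
\[
\chargesucc \;\le\; \sum_{v\in\NM} (\text{successful charge to }v) \;\le\; |\NM|\cdot\chargemax,
\]
so $|\NM|\ge \chargesucc/\chargemax$. Finally, invoking the simple bound $\cost(\OPT)\ge\frac{1}{k}|\NM|$ (each vertex has volume $1/k$, and in each final piece $\OPT$ can avoid moving at most a single color class, so it must move every vertex of $\NM$) yields $\cost(\OPT)\ge\frac{1}{k}\cdot\chargesucc/\chargemax$, as required.

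There is essentially no hard obstacle here; the lemma is a bookkeeping statement. The only subtlety worth stating carefully in the proof is the first step — verifying that by construction the ``successful'' portion of any distributed charge is supported on $\NM$ — because this is what allows us to bound the successful charge by $|\NM|\cdot\chargemax$ rather than by $(\ell k)\cdot\chargemax$. Once that is stated, the remainder is a one-line calculation followed by appealing to the already-established $\frac{1}{k}|\NM|$ lower bound on $\cost(\OPT)$.
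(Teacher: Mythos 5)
Your proof is correct and follows exactly the paper's argument: successful charge is supported on $\NM$, so $\chargesucc\le|\NM|\cdot\chargemax$ gives $|\NM|\ge\chargesucc/\chargemax$, and combining with $\cost(\OPT)\ge\frac{1}{k}|\NM|$ finishes the claim. Your write-up just spells out the summation and the supporting observation in slightly more detail than the paper does.
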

\begin{proof}
	Note that successful charge only goes to vertices in $\NM$. Hence,
	$|\NM|\ge\chargesucc/\chargemax$, and, therefore, we obtain that
	$\cost(\OPT)\ge\frac{1}{k}|\NM|\ge\frac{1}{k}\chargesucc/\chargemax$.
\end{proof}

Another lower bound that we use is due to Lemma~\ref{onl:lem:lowerILP}. Let $\hmax$
denote the maximum objective value obtained when solving different
ILP instances during the algorithm. From time to time, when vertex charges are
not appropriate, we perform \emph{extraordinary charges} or just \emph{extra
  charges}. In the end, we compare the total extra charge to $\hmax$. We stress
that we only perform extra charges when extraordinary configurations are
involved. This means if $\hmax=0$ we never perform extra charges, as otherwise,
it would be difficult to obtain a good competitive ratio.

In the following analysis, we go through the different steps of the algorithm.
For every step, we charge the cost either by a vertex charge or by an extra
charge. If we apply a vertex charge, we argue that (1)~enough of the applied
charge is successful and (2)~the charge can accumulate to not too much
at every vertex.
For extra charges, we require a more global argument and we will derive a bound
on the total extra charge in terms of $\hmax$ in Section~\ref{onl:sec:extra}.

\subsection{Analysis Details}
When merging a piece $\pl$ and $\ps$ with $|\ps| \le |\pl|$ we proceed in
several steps.

\medskip
\noindent
\step{Small to Large}
In this first step, we move the vertices of $\ps$ to the server of $\pl$.
If $\ps$ and $\pl$ are on different servers we experience a cost of $|\ps|$.
Also, we have to increase the eviction budget of the server that holds
piece $\pl$ (if $\ps$ is a small piece). 
The cost for this step is $0$ if $\ps$ and $\pl$ are on the same server and,
otherwise, it is at most $2|\ps|$.
We charge the cost as follows.

\case{Merge is monochromatic}\label{onl:case:mono-mergeI}
If $\ps,\pl$, and $\pm$ are mono\-chromatic for the same server $s$ we only
experience cost if $s$ is extraordinary because otherwise $\ps$ and $\pl$ are
located at $s$. We make an extra charge for this cost.

\case{Merge is not monochromatic}\label{onl:case:non-mono-mergeI}
We make the following vertex charges:
\begin{itemize}
	\item \TypeI charge: We charge $\frac{2}{\delta} \cdot \frac{|\ps|}{|\pm|}\cdot\size(v)$ to every vertex in $\pm$. 
	\item \TypeII charge: We charge $\frac{2}{\delta} \cdot\size(v)$ to every vertex in $\ps$.
\end{itemize}
Claim~\ref{onl:cla:chargeIandII} below shows that the \TypeI and \TypeII charge
at a vertex can accumulate to at most $O(\log k)$.
In the following, we argue that at least a charge of $2|\ps|$ is successful.
 We distinguish several cases.
\begin{itemize}
	\item If either $\pl$ or $\pm$ is not monochromatic, we know that at least a
	volume of $\delta$ (if the non-monochromatic piece is large) or a volume of
	$\epsilon|\pl|$ of vertices does not have the majority color. Hence, we get that
	at least $\min\{\delta,\epsilon|\pl|\}\frac{2|\ps|}{\delta|\pm|}\ge 2|\ps|$ of the \TypeI
	charge is successful. The inequality uses $|\pm|\le 1$,
	$|\pl|\ge\frac{1}{2}|\pm|$, and $\delta\le \epsilon^2\le\epsilon/2$.

	\item If $\ps$ is not monochromatic then at least $\delta|\ps|$ volume in $\ps$
	has not the majority color. This gives a successful charge of at least
	$\delta|\ps|\cdot\frac{2}{\delta}\ge2|\ps|$.

	\item Finally suppose that $\ps$ and $\pl$ are monochromatic for different
	colors $C_s$ and $C_\ell$, respectively. If in the end $C_s$ is not the majority
	color of the final piece then we have a successful charge of at least
	$(1-\epsilon)|\ps|\cdot 2/\delta\ge2|\ps|$ from the \TypeII charge. Otherwise,
	$C_\ell$ is not the majority color and we obtain a successful charge of
	$(1-\epsilon)|\pl|\cdot \frac{2|\ps|}{\delta|\pm|}\ge2|\ps|$.
\end{itemize}

\begin{claim}
\label{onl:cla:chargeIandII}
	The combined \TypeI and \TypeII charge that can accumulate at a vertex $v$ is
	at most $O(\log k \cdot \size(v)/\delta)$.
\end{claim}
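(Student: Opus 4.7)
I will fix a single vertex $v$ and track the sequence of volumes $s_0=1/k=\size(v),s_1,\dots,s_T\le 1$ of the piece containing $v$ immediately after each merge in which that piece participates. If at step $t$ the other piece has volume $q_t$, then $s_t=s_{t-1}+q_t$. I split the steps into case~(A), where $v$ lies in the smaller piece $\ps$ (so $s_{t-1}\le q_t$, equivalently $s_t\ge 2s_{t-1}$), and case~(B), where $v$ lies in the larger piece $\pl$ (so $q_t<s_{t-1}$, i.e., $s_t<2s_{t-1}$). Case~(A) will be handled by a doubling/counting argument; case~(B) requires a telescoping estimate.

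First I handle the \TypeII charge: it is given only to vertices of $\ps$, so $v$ receives a \TypeII charge only in case~(A). Since $s_t\ge 2s_{t-1}$ at every such step and $s_0=1/k$, $s_T\le 1$, there are at most $\lceil \log_2 k\rceil$ such events, each contributing $\frac{2}{\delta}\size(v)$, for a total of $O(\log k\cdot \size(v)/\delta)$. For \TypeI, every merge contributes. In case~(A) the charge $\frac{2}{\delta}\cdot\frac{s_{t-1}}{s_t}\size(v)\le \frac{1}{\delta}\size(v)$, summed over the $O(\log k)$ doubling steps, is again $O(\log k\cdot \size(v)/\delta)$. In case~(B) the charge equals $\frac{2}{\delta}\cdot\frac{s_t-s_{t-1}}{s_t}\size(v)$; using the elementary inequality $1-1/x\le\ln x$ for $x\ge 1$ gives $\frac{s_t-s_{t-1}}{s_t}=1-\frac{s_{t-1}}{s_t}\le \ln(s_t/s_{t-1})$, and summing telescopes to
$$\sum_{t\in (\mathrm{B})}\frac{s_t-s_{t-1}}{s_t}\;\le\;\sum_{t=1}^T \ln\frac{s_t}{s_{t-1}}\;=\;\ln\frac{s_T}{s_0}\;\le\;\ln k.$$
Hence the case~(B) contribution is also $O(\log k\cdot \size(v)/\delta)$. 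Adding the two charge types and the two cases gives the desired bound.

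The only genuinely delicate step is the case~(B) bound for \TypeI: in that regime $v$'s piece can grow by an arbitrarily small multiplicative factor in a single step, so there is no useful pointwise bound on the per-merge charge, and the naive sum could be unbounded. The key trick is to recognize $(s_t-s_{t-1})/s_t$ as a multiplicative increment and bound it by $\ln(s_t/s_{t-1})$, so that the sum telescopes over all case~(B) steps to $\ln(s_T/s_0)\le \ln k$. Everything else is routine bookkeeping of a ``union-by-size'' doubling count.
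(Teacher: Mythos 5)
Your proof is correct and follows essentially the same route as the paper's: the \TypeII bound via the at-most-$\log_2 k$ doubling events is identical, and your \TypeI bound amounts to showing $\sum_t |\ps|/|\pm| = O(\log k)$ over the charges to $v$, which is exactly the quantity the paper controls. The only difference is cosmetic: the paper bounds this sum by a harmonic series after discretizing the increments into multiples of $1/k$, whereas you use the telescoping inequality $1-s_{t-1}/s_t \le \ln(s_t/s_{t-1})$ (plus a separate count of the doubling steps), which is a slightly cleaner derivation of the same estimate.
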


\medskip
\noindent
\step{Resolve ILP and Adjust Schedule}
In this step, we merge the pieces $\ps$ and $\pl$ into $\pm$ and run the
subprocedure for adjusting the schedule, which finds a new optimum solution to
the ILP and finds a schedule respecting the ILP solution. Due to
Lemma~\ref{onl:lem:adjustLemma}
this incurs at most constant cost. In the following, we distinguish several
cases. For some cases, the bound of
Lemma~\ref{onl:lem:adjustLemma} is sufficient and we only have
to show how to properly charge the cost. For other cases, we give a
better bound than the general statement of
Lemma~\ref{onl:lem:adjustLemma}.
In the following, $s$ denotes the server where the merged piece $\pm$ is
located now (and where $\pl$ was located before).

\case{$\ps$ small}\label{onl:case:ps-small}
In this case, the input to the ILP did not change. This holds because no volume
was committed and no uncommitted volume changed between classes. Therefore we
do not experience any cost for resolving the ILP.

However, it may happen that $\pl$ was not monochromatic but the merged
piece $\pm$ is. Note that this can only happen if $\pl$ is also small (since
large pieces cannot transition from non-monochromatic to monochromatic).
Suppose $\pm$ is monochromatic for a server $s'\neq s$, and this server has an
ordinary configuration. Then we have to move $\pm$ to $s'$ for the new
schedule to respect the configuration of $s'$. We incur a cost of 
$|\pm|+|\pm|_u\le 2|\pm|$, where $|\pm|_u$ is required to increase the eviction
budget at $s'$. (Indeed, moving $\pm$ to $s'$ might cause the rebalancing procedure
to move pieces away from $s'$.) We charge $4/\delta\cdot\size(v)$ to every vertex in
$\pm$. We call this charge a \TypeIII charge.

How much of the charge is successful? Observe that $\pl$ was not monochromatic
for $s'$ before the merge as otherwise it would have been located at $s'$. This
means vertices with volume at least $\delta|\pl| \ge \delta|\pm|/2$ in $\pm$ have a
color different from $s'$ (the majority color in $\pm$). This means we get a
successful charge of at least $\delta|\pm|/2\cdot 4/\delta=2|\pm|$, as desired.

\bigskip To obtain a good bound on the total \TypeIII charge
accumulating at a vertex $v$ we have to add a little tweak. Whenever a
server $s$ switches its configuration from ordinary to extraordinary, we
cancel the most recent \TypeIII charge operation for all vertices
currently scheduled on $s$.

This negative charge is accounted for in the \emph{extra cost} that we
pay when switching the configuration of a server from ordinary to extraordinary.
Recall that in Cost Inflation~\ref{onl:extra}, we said
that we experience an extra cost of $4(1+\gamma)/\delta$ whenever we
switch the configuration of a server $s$ from ordinary to extraordinary.
This cost is used to cancel the most recent Type~III charge for all
pieces currently scheduled on~$s$.

\begin{lemma}
\label{onl:lem:typeIII}
	Suppose a vertex $v$ experiences a positive \TypeIII charge at time $t$ that is
	not canceled. Let $t'$ denote the time step of the next \TypeIII charge for
	vertex $v$, and let $p$ and $p'$ denote the pieces that contain $v$ at
	times $t$ and $t'$, respectively.
	Then $|p'|\ge (1+\epsilon)|p|$.
\end{lemma}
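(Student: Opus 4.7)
The plan is to case-split on whether the majority colors $s'$ of $p$ and $s''$ of $p'$ agree, and to exploit that pieces only grow under merges, so $p\subseteq p'$. I will also use that the Type III trigger at $t'$ requires the larger merged piece to be non-monochromatic, and that the merged piece $p'$ is monochromatic for an ordinary server $s''$ that differs from the server $s$ currently hosting the larger merged piece.

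First suppose $s'\neq s''$. Since $p$ is monochromatic for $s'$, its $s'$-volume is at least $(1-\epsilon)|p|$, and by $p\subseteq p'$ the same $s'$-volume sits inside $p'$. But $p'$ is monochromatic for $s''$, so its non-$s''$ volume is at most $\epsilon|p'|$; because $s'\neq s''$ the $s'$-volume in $p'$ is counted as non-$s''$. Combining, $(1-\epsilon)|p|\le\epsilon|p'|$, so $|p'|\ge\tfrac{1-\epsilon}{\epsilon}|p|\ge(1+\epsilon)|p|$, using $\epsilon<1/4$ from the preliminaries.

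Now suppose $s'=s''$, and let $\ps,\pl$ denote the two pieces merged at $t'$ with $|\ps|\le|\pl|$. If $v\in\ps$, then $p\subseteq\ps$, giving $|p'|=|\pl|+|\ps|\ge 2|p|$, which already suffices. Otherwise $v\in\pl$, so $p\subseteq\pl$ and $\pl$ is non-monochromatic. If the majority of $\pl$ is some $s^*\neq s'$, then the $s^*$-volume in $\pl$ is at least as large as the $s'$-volume, which is at least $(1-\epsilon)|p|$, so $|\pl|\ge 2(1-\epsilon)|p|\ge(1+\epsilon)|p|$ for $\epsilon\le 1/3$.

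The remaining and hardest subcase is $v\in\pl$ with the majority of $\pl$ equal to $s'=s''$. Here the Type III trigger at $t'$ additionally forces $\pl$ onto a server $s\neq s'$. But the Type III charge at $t$ placed $p$ on $s'$, and since that charge is not canceled, no server hosting $v$ during $(t,t']$ switches from ordinary to extraordinary. Hence $v$'s piece must have been displaced from $s'$ by some other algorithmic mechanism between $t$ and $t'$. I would enumerate these: being the smaller piece in an intermediate merge (Step I doubles the piece, so $|\pl|\ge 2|p|$); displacement by the balancing procedure, which requires the piece to be \emph{movable}, i.e.\ to have non-majority fraction $\ge 2\epsilon$ or to be monochromatic for an extraordinary server (a majority of $s'$ on an extraordinary $s'$ would cancel the charge, contradicting the hypothesis; a majority $\neq s'$ gives the bound above; and a non-majority fraction $\ge 2\epsilon$ with majority $s'$ yields a displaced piece $P\supseteq p$ whose non-$s'$ volume is at least $2\epsilon|p|$, so combined with $p'$ being monochromatic for $s'$---forcing non-$s'$ volume in $p'$ to be less than $\epsilon|p'|$---we obtain $|p'|\ge 2|p|$); or relocation during a Generic Variant reassignment, which moves pieces on servers in $\mathcal{A}\cup\mathcal{C}$ and can only relocate the piece away from $s'$ if the piece is not monochromatic for $s'$, again reducing to the previous analyses. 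The main obstacle is precisely this final subcase: pure color-volume arithmetic is not sufficient, and one has to combine the Type III trigger's location condition with the ``not canceled'' hypothesis to force the piece's displacement through one of the growth-producing mechanisms.
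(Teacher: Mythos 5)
Your proposal follows essentially the same route as the paper's proof. The case of two distinct majority servers is handled by the identical computation $(1-\epsilon)|p|\le\epsilon|p'|$, and the case of equal majority servers is handled, as in the paper, by observing that the \TypeIII trigger at $t'$ forces $v$'s piece to have been displaced from $s'$ at some point after $t$, using the ``not canceled'' hypothesis to keep $s'$ ordinary throughout, and enumerating the displacement mechanisms (Step~I of an intermediate merge, the balancing procedure, the Generic Variant). Your arithmetic for the balancing case---comparing the non-$s'$ volume of the displaced piece $P$ against the monochromaticity bound for $p'$ to get $2\epsilon|P|\le\epsilon|p'|$---differs from the paper's (which compares against $p$ and derives $|p''|\ge\tfrac{1-\epsilon}{1-2\epsilon}|p|$), but it is valid and even gives a stronger constant.

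The one sub-case that does not close is the Generic Variant displacement. You assert that such a relocation requires the piece to be non-monochromatic for $s'$ and that this ``reduces to the previous analyses,'' but the thresholds do not match: your earlier analyses used the movability threshold (non-majority fraction at least $2\epsilon$), whereas non-monochromaticity of a small piece only guarantees a non-majority fraction exceeding $\epsilon$ (and for a large piece only an additive $\delta$ of non-majority volume). For a small piece $P$ with majority $s'$ and non-majority fraction in $(\epsilon,2\epsilon)$, your comparison yields only $|p'|>|P|$, with no multiplicative gap. The paper closes this case by a different argument: the reassignment step of the Generic Variant only relocates \emph{large} pieces, and a large piece that is non-monochromatic for $s'$ can never become monochromatic again by merging, contradicting the premise that $p'$ is monochromatic for $s'$ at time $t'$ (equivalently, $p'$ is still a small piece when the \TypeIII charge fires, so no large $P\subseteq p'$ can exist). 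You need this contradiction, not a volume bound, to finish that branch.
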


\begin{corollary}
\label{onl:cor:chargeIII}
	The total \TypeIII charge that can accumulate at a vertex is only $O(\log k\cdot\size(v))$.
\end{corollary}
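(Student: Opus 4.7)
The plan is to deduce the corollary directly from Lemma~\ref{onl:lem:typeIII} by a standard doubling-style argument, while being careful that canceled \TypeIII charges contribute nothing to the accumulated total.

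First I would observe that a single \TypeIII charge to a vertex $v$ has magnitude $\frac{4}{\delta}\cdot\size(v)=O(\size(v))$, since $\delta = \Theta(\epsilon^2)$ is a constant (recall $\epsilon$ is a constant in the problem statement). So it suffices to bound the \emph{number} of uncanceled \TypeIII charges at $v$ by $O(\log k)$; canceled charges, by definition, subtract themselves out of the running total and can be ignored.

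Next I would appeal to Lemma~\ref{onl:lem:typeIII}. Let $t_1<t_2<\dots<t_N$ be the time steps at which $v$ receives an uncanceled \TypeIII charge, and let $p_j$ be the piece containing $v$ at time $t_j$. The lemma gives $|p_{j+1}|\ge(1+\epsilon)|p_j|$ for every $j<N$. Since a vertex contributes volume $\size(v)=1/k$ to every piece containing it, we have $|p_1|\ge 1/k$, and since the total volume of any piece is bounded by $1$, we get
\[
(1+\epsilon)^{N-1}\cdot\frac{1}{k}\le|p_N|\le 1.
\]
Taking logarithms yields $N-1 \le \log_{1+\epsilon}k = O(\log k/\epsilon) = O(\log k)$, hence $N=O(\log k)$.

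Combining, the total \TypeIII charge accumulated at $v$ is at most $N\cdot\frac{4}{\delta}\cdot\size(v) = O(\log k\cdot\size(v))$, as claimed. The only subtlety worth mentioning is the cancellation mechanism: one must check that a charge ``canceled'' at a later time is simply removed from $v$'s ledger and therefore does not contribute to the accumulated positive charge, which matches precisely the setup of Lemma~\ref{onl:lem:typeIII} (which quantifies growth between successive \emph{uncanceled} charges). No step poses a real obstacle here; the work was done in proving Lemma~\ref{onl:lem:typeIII}, and the corollary is essentially a one-line consequence plus the bookkeeping of constants.
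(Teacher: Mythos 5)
Your proof is correct and follows exactly the paper's argument: Lemma~\ref{onl:lem:typeIII} forces the containing piece to grow by a factor of $1+\epsilon$ between consecutive uncanceled \TypeIII charges, so there are at most $O(\log_{1+\epsilon}k)=O(\log k)$ of them, each of magnitude $\frac{4}{\delta}\cdot\size(v)=O(\size(v))$. Your explicit bookkeeping of the cancellation mechanism and the constants is just a more detailed write-up of the same one-line argument the paper gives.
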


\def\CtypeIV{C_{\mathrm{IV}}}
\case{$\ps$ large, merge not monochromatic}
We resolve the ILP and adjust the schedule $S$. According to
Item~\ref{onl:lem:adjustSchedule} and Item~\ref{onl:obs:mergeCheap} 
of Lemma~\ref{onl:lem:adjustLemma} this
incurs constant cost. Let $\CtypeIV$ denote the bound on this cost. We perform a vertex
charge of $\CtypeIV/\delta\cdot\size(v)$ for every vertex in $\pm$. We call this charge a \TypeIV
charge. In the following we argue that
at least a charge of $\CtypeIV$ is successful. We distinguish two cases.

If one of the pieces $\ps,\pl$, or $\pm$ is not monochromatic we know
that at least vertices of volume $\delta$ in the piece do not have the
majority color. Hence, we get that at least $\CtypeIV/\delta\cdot\delta\ge\CtypeIV$ of the \TypeIV
charge is successful.

Now, suppose that $\ps$ is monochromatic for server $s$ and $\pm$ is
monochromatic for a different server $s'$. Regardless of which color is the
majority color in the end, there will be vertices of volume at least
$(1-\epsilon)|\ps|$ that will not have this majority color. Hence, we obtain
a successful charge of at least $(1-\epsilon)|\ps|\cdot
\CtypeIV/\delta\ge(1-\epsilon)\epsilon\cdot\CtypeIV/\delta\ge\CtypeIV$, where the first step uses that $\ps$
is large and the second that $\delta\le\epsilon^2\le(1-\epsilon)\epsilon$,
which holds because $\epsilon\le1/4$.

\begin{claim}
\label{onl:cla:chargeIV}
	A vertex $v$ can accumulate a total \TypeIV charge of at most
	$\CtypeIV/\delta\cdot\size(v)$.
\end{claim}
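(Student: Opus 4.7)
The plan is to show that each vertex $v$ receives at most one \TypeIV charge across the algorithm's entire execution, whence the claimed bound follows immediately since a single \TypeIV charge equals exactly $\CtypeIV/\delta\cdot\size(v)$.

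First I observe that a \TypeIV charge arises only in the sub-case ``$\ps$ large, merge not monochromatic'' of Step~II. By Invariant~\ref{onl:inv:smallpiecesmall} the hypothesis ``$\ps$ large'' forces $|\ps|\ge\varepsilon$, and combined with the convention $|\pl|\ge|\ps|$ this also gives $|\pl|\ge\varepsilon$. Consequently, right after a \TypeIV event on $v$, the piece $\pm$ containing $v$ has volume at least $2\varepsilon$, is large (has positive committed volume), and has arisen from a non-monochromatic merge.

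Next I would argue, via a case distinction on any subsequent merge involving $v$'s piece, that none of those merges can produce a second \TypeIV charge. Let $q$ denote the other piece in such a future merge. If $q$ is small, then the smaller piece of the merge is small, we are in Case~(II.a), and no \TypeIV charge is emitted. If $q$ is large and the resulting merge is monochromatic, then the schedule adjustment either falls under Special Variant~A (when the hosting server is ordinary, incurring zero cost) or the analogous monochromatic case on an extraordinary server (where the cost is accounted for by an extra charge); in either situation no \TypeIV charge is applied.

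The main obstacle is to rule out the remaining possibility: a second non-monochromatic large--large merge involving $v$'s piece. My plan is to exploit the structural invariants maintained by the algorithm, in particular the interplay between the ILP-driven scheduling and the definitions of monochromatic versus non-monochromatic pieces together with the commit step of Step~III. Once $v$'s piece has been through a non-monochromatic merge and the ensuing commit operations, its majority color and its committed volume are pinned down by Invariants~\ref{onl:inv:integralcommit} and~\ref{onl:inv:uncommitted-volume}, and the schedule produced by the ILP constrains which pieces can subsequently be merged with it. Combining these constraints with the exhaustive case analysis above should preclude a second non-monochromatic large--large merge event involving $v$ and thereby close the proof.
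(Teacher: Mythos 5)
There is a genuine gap, and it sits exactly where you flag it yourself: the claim that a vertex receives at most one \TypeIV charge is false, and no combination of the invariants will rescue it. After a non-monochromatic merge of two large pieces, the resulting piece $\pm$ is (and remains) non-monochromatic, since large pieces never transition from non-monochromatic to monochromatic. Hence any later merge of $\pm$ with another large piece $q$ with $|q|\le|\pm|$ is again a non-monochromatic merge with $\ps=q$ large, and it triggers a second \TypeIV charge on every vertex of the new merged piece --- including $v$. Concretely, four large pieces of volume $\varepsilon$ with four distinct colors, merged pairwise and then together, already give the vertices of the first pair several \TypeIV charges. So the ``main obstacle'' you identify cannot be overcome; the exhaustive case analysis you sketch for the other merge types is fine but irrelevant to the bound.

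The paper's argument is far more elementary and does not try to bound the number of charges by one: every \TypeIV charge to $v$ is accompanied by the piece containing $v$ growing by $|\ps|\ge\delta$ (indeed $|\ps|\ge\varepsilon$ since $\ps$ is large), and since a piece has volume at most $1$, this can happen at most $1/\delta$ times. That counting argument is the missing idea; it bounds the \emph{number} of \TypeIV events per vertex by a constant ($1/\delta$) rather than by $1$, which is all that is needed since the claim is only invoked to show the total \TypeIV charge is $O(\size(v))$ in Lemma~\ref{onl:lem:vertexcharge}. You should replace your uniqueness argument with this volume-increment count.
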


\case{$\ps$ large, merge monochromatic, $s$ extraordinary}\label{onl:case:mono-mergeII}
In this case, we also
resolve the ILP and adjust the schedule, which according to
Item~\ref{onl:lem:adjustSchedule} and Item~\ref{onl:obs:mergeCheap} of Lemma~\ref{onl:lem:adjustLemma} incurs
constant cost. Let $C$ denote this cost. We make an extra charge of $C$.
Observe that $C=O(|\ps|)$ because $\ps$ is a large piece. This will be
important when we derive a bound on the total extra charge.

\case{$\ps$ large, merge monochromatic, $s$ ordinary}\label{onl:case:ordinary-mono-merge}
Suppose that the server $s$ has an ordinary configuration. In this case we do
not want to have any cost, because we cannot perform an extra charge as no
extraordinary configurations are involved and we cannot charge against the
vertices of $\pm$ as the piece is monochromatic.
We use Special Variant~A for adjusting the schedule. This induces zero cost.

\def\CtypeV{C_{\mathrm{V}}}

\bigskip
\noindent
\step{Commit-operation} We analyze the commit-operation. We will call a
commit-operation monochromatic if it is performed on a monochromatic piece and,
otherwise, we call it non-monochromatic.

\case{$\pm$ not monochromatic, $s$ ordinary}
The commit-operation may change the source vector of several servers. Let
$\dif$ denote the number of servers that changed their source vector. The
cost for handling the commit-operation is at most $O(1+\dif)$ according to
Lemma~\ref{onl:lem:adjustLemma}. Let $\CtypeV$
denote the hidden constant, i.e., the cost is
at most $\CtypeV(1+\dif)$. We split this cost into two parts: $\CtypeV$ is the
\emph{fixed cost} and $\CtypeV\dif$ is the \emph{variable cost} of the commit. 

We charge $3\CtypeV/\delta\cdot\size(v)$ to every vertex $v$ in $\pm$. We call this
charge a \TypeV charge. In $\pm$ at
least vertices of volume $\delta$ have not the majority color because $\pm$ is
not monochromatic. Therefore we get a successful charge of
$3\CtypeV/\delta\cdot\delta=3\CtypeV$.

Clearly, the charge is sufficient for the fixed cost. However, the remaining
successful charge of $2\CtypeV$ may not be sufficient for the variable cost. 
In the following, we argue that the total remaining successful charge
that is performed \emph{for all} non-monochromatic commits is enough to cover
the variable cost for these commits.

\def\Im{I_{\mathrm{m}}}
\def\Inm{I_{\mathrm{nm}}}
\def\Nm{X_{\mathrm{m}}}
\def\Nnm{X_{\mathrm{nm}}}

\begin{lemma}
\label{onl:lem:monocommit}
	Let $\Nnm(s)$ denote the number of times that a non-monochromatic commit
	causes a change in the source vector of $s$. Then the variable cost for all
	non-monochromatic commits is at most $\sum_s\CtypeV\Nnm(s)\le 2\CtypeV N$,
	where $N$ denotes the total number of non-monochromatic commits.
\end{lemma}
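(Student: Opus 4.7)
The plan is to reduce the lemma to a combinatorial inequality and then prove the inequality by an amortized potential argument. The equality between the total variable cost of non-monochromatic commits and $\sum_s\CtypeV\Nnm(s)$ is immediate from the definitions: each non-monochromatic commit contributes variable cost $\CtypeV\dif$, where $\dif$ is the number of servers whose source vector changes during that commit. Summing over all non-monochromatic commits and interchanging summation yields $\sum_s\CtypeV\Nnm(s)$. So it remains to prove $\sum_s\Nnm(s)\le 2N$.

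The key structural observation is that during a non-monochromatic commit of the piece $\pm$, no entry $m_{si}$ for $i>0$ can change, because $\pm$ is not monochromatic for any server and thus does not contribute to any such entry. Hence the only source-vector entries that can change are $m_{s0}=\lceil u_s\rceil_\delta$ for colors $s$ whose uncommitted volume $u_s$ (summed over all pieces) decreases, and each such change corresponds to a \emph{bucket crossing}: the uncommitted volume of color $s$ is pushed across a multiple of $\delta$.

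To control bucket crossings, I would introduce the potential $\Phi=\sum_c\Phi_c$ with $\Phi_c=u_c-\delta\lfloor u_c/\delta\rfloor\in[0,\delta)$, measuring the distance from each color's uncommitted volume down to the next multiple of $\delta$. A case analysis on whether the amount $a_c$ committed of color $c$ in a given commit exceeds $\Phi_c$ shows that a non-monochromatic commit producing $K$ bucket crossings shifts $\Phi$ by exactly $(K-1)\delta$, whereas a monochromatic commit (which commits $\delta$ of a single color and always causes exactly one crossing) leaves $\Phi$ unchanged. Telescoping yields $\sum_{\text{non-mono}}(K-1)\delta=\Phi_{\text{final}}-\Phi_{\text{initial}}$, and since $\Phi\in[0,\ell\delta)$ we obtain $\sum_s\Nnm(s)\le N+\ell$.

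The main obstacle is sharpening this additive $O(\ell)$ slack into the claimed bound $2N$. My plan for the final step is either to charge the at most $\ell$ excess bucket crossings against the per-commit fixed cost $\CtypeV$ already covered by the successful charge of $3\CtypeV$ per non-monochromatic commit (of which one $\CtypeV$ is allocated to the fixed cost and $2\CtypeV$ remains as the $2\CtypeV N$ budget), or to perform a finer per-color analysis that couples each excess non-monochromatic bucket crossing with a corresponding monochromatic commit for the same color, so that the total excess across all non-monochromatic commits is itself bounded by $N$.
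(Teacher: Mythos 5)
Your reduction of the lemma to the combinatorial inequality $\sum_s X_{\mathrm{nm}}(s)\le 2N$ is correct, as is the observation that a non-monochromatic commit can only affect the entries $m_{s0}=\lceil u_s\rceil_\delta$ (this matches the paper), and your potential argument does correctly yield $\sum_s X_{\mathrm{nm}}(s)\le N+O(\ell)$. But the step you flag as ``the main obstacle'' is precisely the content of the lemma, and neither of your proposed repairs closes it. The additive $O(\ell)$ term cannot be absorbed by the per-commit fixed cost: that budget is $C_{\mathrm{V}}$ per non-monochromatic commit, i.e.\ $C_{\mathrm{V}}N$ in total, and you have not shown the excess is $O(N)$ rather than $\Omega(\ell)$; moreover, an uncovered $\Omega(\ell)$ term would break the overall analysis in the regime $h_{\max}=0$, where no extra charges are available at all. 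The second repair also fails on its face: a monochromatic commit reduces $u_s$ by exactly $\delta$ and causes exactly one source-vector change, so monochromatic commits carry zero slack to lend to excess non-monochromatic crossings. The additive $\ell$ is an artifact of bounding the initial potential crudely by $\ell\delta$; the true statement requires showing that only colors into which non-monochromatic commits have actually poured a constant fraction of $\delta$ can ever contribute a crossing.

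The missing idea is to do the accounting per server rather than through a single global potential. By the choice of $\delta$ (Claim~\ref{onl:claim:picking-delta} guarantees $\lceil 1\rceil_\delta-1\le\delta/2$), the uncommitted volume $u_s$ of color $s$ starts at $1$, which lies at distance at least $\delta/2$ above the next lower multiple of $\delta$; hence the \emph{first} change of $m_{s0}$ requires $u_s$ to drop by at least $\delta/2$, and each subsequent change requires an additional $\delta$. Since the monochromatic commits affecting $s$ contribute exactly $\delta$ of reduction and exactly one change apiece (so they neither create nor consume buffer), the reduction $\sum_{i\in I_{\mathrm{nm}}}\xi_i(s)$ attributable to non-monochromatic commits must satisfy $\delta X_{\mathrm{nm}}(s)-\delta/2\le\sum_{i\in I_{\mathrm{nm}}}\xi_i(s)$. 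For $X_{\mathrm{nm}}(s)\ge 1$ this gives $X_{\mathrm{nm}}(s)\le 2X_{\mathrm{nm}}(s)-1\le 2\sum_{i\in I_{\mathrm{nm}}}\xi_i(s)/\delta$, and summing over servers while using that each non-monochromatic commit distributes exactly $\delta$ of reduction over all colors yields $\sum_s X_{\mathrm{nm}}(s)\le 2N$. In short, a server contributes to the sum only if non-monochromatic commits have consumed at least $\delta/2$ of its uncommitted volume, which ties the count to $N$ instead of to $\ell$; this per-server refinement is what your global telescoping discards.
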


Observe that the total remaining charge for the non-monochromatic commits is $2\CtypeV N$ (a
charge of $2\CtypeV$ for every commit). Hence, the previous lemma implies that this
remaining charge is sufficient for the variable cost of all non-monochromatic commits.

\begin{claim}
\label{onl:cla:chargeV}
	The \TypeV charge at a vertex $v$ can accumulate to at most
	$3\CtypeV/\delta^2\cdot\size(v)$.
\end{claim}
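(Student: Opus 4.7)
The plan is to bound the total number of Type~V charges that a single vertex $v$ can accumulate, and then multiply by the size of a single charge. Each individual \TypeV charge against $v$ is $3\CtypeV/\delta\cdot\size(v)$, so it suffices to show that $v$ is charged at most $1/\delta$ times; the target bound $3\CtypeV/\delta^2\cdot\size(v)$ then follows directly.

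First I would observe that $v$ receives a \TypeV charge exactly when a (non-monochromatic) commit-operation is executed on the piece $\pm$ containing $v$ at that moment. I would track the piece containing $v$ over time; by the problem setup, pieces only merge (they are connected components of the revealed graph, so $v$ never leaves its current piece). This lets me speak unambiguously of ``the piece containing $v$'' at each point.

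Next, I would exploit the fact, guaranteed by the description of the commit-operation in Section~\ref{onl:sec:overview}, that each execution of a commit-operation on a piece commits exactly volume $\delta$ (moving $\delta$ worth of uncommitted vertices to committed, and by the invariants committed volume is never uncommitted again). Hence, if $v$ is involved in $N_v$ commits total (including monochromatic ones, which only makes the bound easier), the committed volume of the piece containing $v$ increases by at least $N_v\cdot\delta$ over the course of these commits. Since the total volume of any piece is at most $1$ (pieces must fit on a single server in the offline problem), the committed volume is also at most $1$, so $N_v\cdot\delta\le 1$, i.e., $N_v\le 1/\delta$.

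Combining, the total \TypeV charge accumulated at $v$ is at most $N_v\cdot 3\CtypeV/\delta\cdot\size(v)\le 3\CtypeV/\delta^2\cdot\size(v)$, as claimed. There is essentially no obstacle here: the only small point to be careful about is that although the identity of $v$'s piece changes through merges, the ``committed volume'' in the piece containing $v$ is monotone and bounded by $1$, so counting commits via the $\delta$-increments gives the right tally.
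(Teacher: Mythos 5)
Your proof is correct and matches the paper's argument: both bound the number of commit-operations a vertex can participate in by $1/\delta$ (since each commit adds $\delta$ to the monotone, $\le 1$ committed volume of the vertex's piece) and multiply by the per-charge amount $3\CtypeV/\delta\cdot\size(v)$. Your extra care about merges not affecting the count and about overcounting monochromatic commits is harmless and consistent with the paper.
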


\case{$\pm$ monochromatic, $s$ ordinary}\label{onl:case:ordinary-mono-commit}
Suppose we perform a commit-operation for the piece $\pm$.
Here we use Special Variant~B for resolving the ILP and adjusting
the schedule. This incurs zero cost.

\def\Cmono{C_1}
\case{$\pm$ monochromatic, $s$ extraordinary}\label{onl:case:mono-commit}
We resolve the ILP and adjust the schedule. The cost for this is $O(1)$,
since we can use Item~\ref{onl:lem:adjustSchedule} of Lemma~\ref{onl:lem:adjustLemma}
with $D=1$, because $\pm$ is monochromatic and thus we only commit volume of color $s$.
Let $\Cmono$ denote the upper bound for this cost.
We perform an extra charge of $\Cmono$.
Since the committed volume has only color $s$, the total number of monochromatic commits
for a specific server $s$ is at most $1/\delta=O(1)$ because each commit
increases the committed volume of color $s$ by $\delta$. Consequently, the total extra charge
that we perform for monochromatic commits of a specific server $s$ is at most $\Cmono/\delta$. To
simplify the analysis of the total extra charge in Section~\ref{onl:sec:extra} we combine
all these extra charges into one extra charge of $\Cmono/\delta$ that is performed
whenever the server $s$ switches its state from ordinary to extraordinary
\emph{for the first time}.

\subsubsection{Analysis of Extra Charges}
\label{onl:sec:extra}
In this section we derive a bound on the total extra charge generated by our
charging scheme. Let us first recap when we perform extra charges:
\begin{enumerate}[label=(\Roman*)]
	\item\label{onl:extra:A} During the merge-operation we perform an extra charge of $O(|\ps|)$
	in Case~\ref{onl:case:mono-mergeI} and
	Case~\ref{onl:case:mono-mergeII}, when the merge-operation is monochromatic
	for server $s$ and $s$ has an extraordinary configuration.

	We stress the fact that whether a merge is monochromatic only depends on the
	sequence of merges and not on the way that pieces are scheduled by our
	algorithm.
	\item\label{onl:extra:B} Whenever a server changes its configuration from ordinary to extraordinary
	\emph{for the first time}, we generate an extra charge of $\Cmono/\delta=O(1)$
	to take care of the cost of monochromatic commits
	(Case~\ref{onl:case:mono-commit}).
\end{enumerate}

\medskip\noindent
Now let $\hmax$ denote the maximum number of extraordinary
configurations that are used throughout the algorithm. Clearly, if $\hmax=0$
there is never any extraordinary configuration and the extra charge will be
zero. If $\hmax\geq 1$, we show that the previously described deterministic
online algorithm guarantees an extra charge of at most $O(\ell \log k)$.
\begin{lemma}
\label{onl:lem:totalExtraCharge}
	If $\hmax=0$, there is no extra charge.
	If $\hmax\geq1$, the total extra charge is $O(\ell\log k)$.
\end{lemma}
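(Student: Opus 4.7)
The plan is to handle the two sources of extra charge separately. Extra charges only arise in case~\ref{onl:extra:A} (monochromatic merge for a server $s$ that is currently extraordinary, charge $O(|\ps|)$) and in case~\ref{onl:extra:B} (a server transitioning from ordinary to extraordinary for the first time, charge $O(1)$). If $\hmax=0$, no server is ever extraordinary during the execution, so neither case triggers and the total extra charge is $0$, which establishes the first claim.

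For $\hmax\geq 1$, case~\ref{onl:extra:B} contributes at most $O(\ell)$, since each of the $\ell$ servers pays $O(1)$ at most once. The bulk of the work is therefore to bound the total case~\ref{onl:extra:A} charge by $O(\ell\log k)$. I will show that for each fixed server $s$, the sum of $|\ps|$ over \emph{all} monochromatic merges for $s$ (irrespective of whether $s$ is extraordinary at the time) is $O(\log k)$; summing over the $\ell$ servers then yields the claim.

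To bound $\sum_{\text{mono merges for }s}|\ps|$ for a fixed $s$, I first observe that $|\ps|$ is at most twice the color-$s$ volume inside $\ps$: a small monochromatic piece for $s$ has at least a $(1-\epsilon)\geq 1/2$ fraction of its volume in color $s$; a large monochromatic piece has non-$s$ volume at most $\delta\leq\epsilon^2$, which is at most $|\ps|/2$ because $|\ps|\geq\epsilon$ and $\epsilon\leq 1/4$. Therefore
\[
\sum_{\text{mono merges for }s}|\ps|\;\leq\;\frac{2}{k}\sum_{v:\operatorname{color}(v)=s}N_s(v),
\]
where $N_s(v)$ counts the monochromatic merges for $s$ in which $v\in\ps$. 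The core step is a union-by-size-style doubling argument: at each event counted by $N_s(v)$ the piece containing $v$ equals $\ps$, so $|\ps|\leq|\pm|/2$ and the piece of $v$ at least doubles in volume; between consecutive such events $v$'s piece can only grow, so the piece-volumes at these events form a geometrically increasing sequence. Since piece volumes are multiples of $1/k$ and bounded by $1+O(\epsilon)$, this forces $N_s(v)=O(\log k)$. Summing over the $k$ color-$s$ vertices gives $O(\log k)$ per server and $O(\ell\log k)$ overall.

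The main subtlety to watch out for is that the piece containing $v$ may drift in and out of being monochromatic for $s$ via non-monochromatic merges, so the doubling argument cannot be applied to an uninterrupted sequence of ``mono-for-$s$'' events. However, the doubling bound only compares piece-volumes at the specific moments when $v\in\ps$ of a monochromatic merge for $s$, and pieces never shrink between merges; hence the geometric-growth bound survives the intervening non-mono activity and the $O(\log k)$ bound on $N_s(v)$ still holds.
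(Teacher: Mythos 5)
Your proof is correct and follows essentially the same route as the paper: bound the Type-(II) charges by $O(\ell)$, and for Type-(I) decompose per server, spread the $O(|\ps|)$ charge over the vertices of $\ps$, and use the union-by-size doubling argument to show each vertex is on the smaller side of a merge at most $O(\log k)$ times, summing over total color-$s$ volume $1$. Your explicit factor-of-$2$ step relating $|\ps|$ to its color-$s$ volume is a slightly more careful handling of the non-$s$ vertices inside $\ps$ than the paper's "volume of all vertices originating on $s$ is $1$", but the argument is the same.
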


Next, we show that the maximum vertex
charge (successful or unsuccessful) is $O(\lg k \cdot \size(v))$.
\begin{lemma}
\label{onl:lem:vertexcharge}
	The maximum vertex charge $\chargemax$ (successful or unsuccessful) that a
	vertex $v$ can receive is at most $O(\log k\cdot\size(v))$.
\end{lemma}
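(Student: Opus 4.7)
The plan is to prove the lemma by a straightforward summation over all charge types introduced in the case analysis, observing that \TypeI, \TypeII, \TypeIII, \TypeIV, and \TypeV are the only kinds of vertex charges our algorithm ever performs. Extra charges go into a separate accounting bucket (handled by Lemma~\ref{onl:lem:totalExtraCharge}), and by inspection of Step~I, Step~II, and Step~III in Section~\ref{onl:sec:analysis}, every vertex charge has been labeled with one of these five types. So the task reduces to combining the per-type bounds we have already proved.

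First I would apply Claim~\ref{onl:cla:chargeIandII}, which bounds the combined \TypeI and \TypeII charge at $v$ by $O(\log k \cdot \size(v)/\delta)$. Next, I would invoke Corollary~\ref{onl:cor:chargeIII} to bound the \TypeIII charge at $v$ by $O(\log k \cdot \size(v))$. Then Claim~\ref{onl:cla:chargeIV} bounds the \TypeIV charge by $O(\size(v)/\delta)$ (a single global bound rather than per merge). Finally, Claim~\ref{onl:cla:chargeV} bounds the \TypeV charge at $v$ by $O(\size(v)/\delta^2)$.

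Summing these four bounds gives
\begin{equation*}
\chargemax \;=\; O\!\left(\frac{\log k}{\delta}\cdot \size(v)\right) + O(\log k\cdot\size(v)) + O\!\left(\frac{\size(v)}{\delta}\right) + O\!\left(\frac{\size(v)}{\delta^2}\right).
\end{equation*}
Recall from Section~\ref{onl:sec:preliminaries} that $\tfrac{1}{2}\epsilon^2\le\delta\le\epsilon^2$ and that $\epsilon$ is a fixed constant, so $1/\delta = O(1)$ and $1/\delta^2 = O(1)$. Absorbing these constant factors, the dominant term is $O(\log k\cdot \size(v))$, which yields the claimed bound.

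The only subtlety worth flagging is a sanity check that the list of vertex charge types is exhaustive: Case~\ref{onl:case:mono-mergeI}, Case~\ref{onl:case:mono-mergeII}, and Case~\ref{onl:case:mono-commit} use \emph{extra} charges rather than vertex charges, and Case~\ref{onl:case:ordinary-mono-merge} and Case~\ref{onl:case:ordinary-mono-commit} incur zero cost (via Special Variants~A and B). Thus no vertex charge outside of Types I--V is ever performed, and the summation above is complete. No step here is a substantive obstacle; the real work has already been done in the per-type claims, and this lemma is essentially a bookkeeping consolidation used together with Lemma~\ref{onl:lem:lowerVertex} to conclude the competitive analysis.
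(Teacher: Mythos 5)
Your proposal is correct and matches the paper's own proof essentially verbatim: both simply sum the per-type bounds from Claim~\ref{onl:cla:chargeIandII}, Corollary~\ref{onl:cor:chargeIII}, Claim~\ref{onl:cla:chargeIV}, and Claim~\ref{onl:cla:chargeV}, with $1/\delta=O(1)$ absorbing the remaining factors. Your added sanity check that Types I--V exhaust all vertex charges (extra charges being accounted separately) is a reasonable bit of extra care that the paper leaves implicit.
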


Combining Lemma~\ref{onl:lem:totalExtraCharge} and Lemma~\ref{onl:lem:lowerILP}
for extra charges and our arguments about vertex charges with
Lemma~\ref{onl:lem:lowerVertex}, we obtain the following theorem.

\def\ALG{\operatorname{ALG}}
\begin{theorem}
\label{onl:thm:det}
	There exists a deterministic online algorithm with competitive ratio $O(\ell\log k)$.
\end{theorem}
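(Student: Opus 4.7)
The plan is to assemble all the charge bookkeeping from Section~\ref{onl:sec:analysis} into a single competitive-ratio bound by combining the two lower bounds on $\cost(\OPT)$ already at our disposal: the vertex lower bound $\cost(\OPT)\ge\frac{1}{k}|\NM|$ and the ILP lower bound $\cost(\OPT)=\Omega(\hmax)$ from Lemma~\ref{onl:lem:lowerILP}. By construction, every unit of cost paid by $\ONL$ (after the constant cost inflations (A) and (B)) is accounted for by either a vertex charge (Types I--V) or an extra charge. The proof therefore splits into two parts, one for each family of charges, and the bound $O(\ell\log k)$ comes from the extra-charge term.

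\textbf{Vertex charges.} First I would observe that in each of the cases analyzed in Section~\ref{onl:sec:analysis} that uses a vertex charge (Case~(I.b), the Type~III charge in Case~(II.a), Case~(II.b), and the Type~V charge in the non-monochromatic commit case), the case analysis explicitly exhibits a successful charge that is at least a constant factor times the cost incurred at that step; for the non-monochromatic commit case, Lemma~\ref{onl:lem:monocommit} guarantees that the pooled remaining successful charge also covers the variable cost. Thus the total vertex-charged cost is $O(\chargesucc)$. By Lemma~\ref{onl:lem:vertexcharge}, $\chargemax=O(\log k\cdot\size(v))=O(\log k/k)$, and Lemma~\ref{onl:lem:lowerVertex} then gives
\[
\cost(\OPT)\ge \tfrac{1}{k}\,\chargesucc/\chargemax
\;\;\Longrightarrow\;\;
\chargesucc=O(\log k)\cdot \cost(\OPT).
\]
Hence the vertex-charged cost is $O(\log k)\cdot\cost(\OPT)$.

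\textbf{Extra charges.} Next I would invoke Lemma~\ref{onl:lem:totalExtraCharge}: if $\hmax=0$, no extra charges ever occur, so this contribution is zero; if $\hmax\ge 1$, the total extra charge is $O(\ell\log k)$. In the latter case Lemma~\ref{onl:lem:lowerILP} gives $\cost(\OPT)=\Omega(\hmax)=\Omega(1)$, so the total extra charge is $O(\ell\log k)\cdot \cost(\OPT)$. Combining this with the vertex-charge bound and noting that Cost Inflations~(A) and~(B) only inflate cost by constant factors (and that the balancing procedure was already argued to cost at most as much as the first phase that triggered it), we conclude $\cost(\ONL)=O(\ell\log k)\cdot\cost(\OPT)$, which is the claimed competitive ratio.

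\textbf{Main obstacle.} The nontrivial bookkeeping step is verifying that the successful-charge accounting in the non-monochromatic commit case closes, because there the per-operation successful charge of $3\CtypeV$ only pays the fixed $\CtypeV$ directly, and one must appeal to Lemma~\ref{onl:lem:monocommit} together with the observation that a residual $2\CtypeV$ per commit suffices globally for the variable $\CtypeV\dif$ terms. Once this is in hand and the edge case $\cost(\OPT)=0$ (which forces $\hmax=0$ and leaves only trivially-bounded cost) is handled, the remaining reasoning is just the two linear combinations above.
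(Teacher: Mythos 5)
Your proposal is correct and follows essentially the same route as the paper's own proof: decompose $\cost(\ONL)$ into vertex-charged and extra-charged cost, bound the former by $O(\log k)\cdot\cost(\OPT)$ via Lemmas~\ref{onl:lem:vertexcharge} and~\ref{onl:lem:lowerVertex}, and the latter by $O(\ell\log k)\cdot\cost(\OPT)$ via Lemmas~\ref{onl:lem:totalExtraCharge} and~\ref{onl:lem:lowerILP}. The additional remarks about the non-monochromatic commit bookkeeping and the $\hmax=0$ edge case are consistent with how the paper handles these points in Section~\ref{onl:sec:analysis}.
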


Note that we obtain an even stronger result if $\hmax=0$: the cost is at most $O(\log k)\cdot\cost(\OPT)$
because of the bound on the total vertex charge (and the fact that we do not
have extra charges). Otherwise~($\hmax>0$), the total extra charge is at most
$O(\ell\log k)$, which means that we are $O((\ell\log k)/\hmax)$-competitive.
So the worst-case competitive ratio occurs when $\hmax=1$.

The constant hidden in the $O(\cdot)$-notation in the theorem is
$(1/\varepsilon)^{O(1/\varepsilon^4)}$.
The exponential dependency on $1/\varepsilon$ is caused by the ILP sensitivity
analysis in 
Lemma~\ref{onl:lem:adjustLemma}. In particular, the hidden constants in 
Items~\ref{onl:lem:adjustSchedule} and~\ref{onl:lem:adjustSolution}
of the lemma are $(1/\varepsilon)^{O(1/\varepsilon^4)}$, since the ILP has
one variable for each potential configuration and the number of such
configurations in Claim~\ref{onl:claim:numberOfConfigurations} is
$(1/\varepsilon)^{O(1/\varepsilon^2)}$.
All other steps of the analysis only add factors $\poly(1/\varepsilon)$.

Next, consider the case $\varepsilon>1$. Then the servers can store vertices of
volume $2+\varepsilon'$ and the above algorithm is $O(\lg k)$-competitive:
Indeed, in this case, servers can always store all of their monochromatic pieces
(of total volume at most $1$) and never become extraordinary; thus, we are in
the setting with $\hmax=0$ above. Furthermore, the algorithm above never assigns
pieces of volume more than $1+\varepsilon'$ to each server. Together, this
bounds the total load of each server to $2+\varepsilon'$ and we obtain the
following theorem.
\begin{theorem}
\label{onl:thm:det-large-epsilon}
	If $\varepsilon>1$, there exists a deterministic online algorithm with
	competitive ratio $O(\log k)$.
\end{theorem}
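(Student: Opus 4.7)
The plan is to show that when $\varepsilon > 1$, the algorithm of Theorem~\ref{onl:thm:det} operates in the regime $\hmax = 0$, which by the remark following that theorem immediately yields an $O(\log k)$ competitive ratio. I would set $\varepsilon' = \varepsilon - 1 > 0$ (possibly further rescaled to meet the $\varepsilon < 1/4$ constraint required by the algorithm's choice of $\delta$, at the cost of a constant-factor inflation absorbed into $\varepsilon'$), use $\varepsilon'$ as the internal augmentation parameter driving the rounding operations, and reinterpret $\gamma$-validity of reservation vectors to permit $\|r\|_1 \le 1 + (1 + \varepsilon')$ rather than only $\|r\|_1 \le 1 + 2\delta$.

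The core claim is that, with this enlarged reservation cap, the ILP always admits an all-ordinary optimal solution. Since every server $s$ has exactly $k$ source vertices of total volume $1$, each piece monochromatic for $s$ consists mostly of $s$-colored vertices, and property~(c) of source vectors forces $\|m_s\|_1 \le \lceil 1 \rceil_\delta \le 1 + \delta$. I would then exhibit the ordinary configuration $r_s = m_s$ padded with an extra $1 + \varepsilon'$ of slack in the $0$-th component (rounded up to the nearest multiple of $\delta$): this reservation is ordinary ($r_s \ge m_s$) and valid under the enlarged cap. Assigning such a configuration to every server simultaneously satisfies the ILP's coverage constraints $\sum_{(r,m)} x_{(r,m)} r_i \ge V_i$, because the total extra reservation $\ell(1 + \varepsilon')$ comfortably dominates the total piece volume $\ell$. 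Since the ILP's objective minimizes the number of extraordinary configurations, it will always select a solution with value $0$, so $\hmax = 0$ throughout the execution.

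The main obstacle is verifying that enlarging $\gamma$ does not blow up the per-server load beyond the allowed $2 + \varepsilon'$. By Lemma~\ref{onl:lem:augmentation}, any schedule respecting a $\gamma$-valid reservation has total volume at most $1 + \gamma + 14\varepsilon'$; with $\gamma = 1 + \varepsilon'$ this becomes $2 + O(\varepsilon')$ per server, which after folding the constant factor into the choice of $\varepsilon'$ is at most $2 + \varepsilon' = 1 + \varepsilon$, matching the relaxed capacity. Combined with $\hmax = 0$, the extra charges from Section~\ref{onl:sec:extra} vanish (they are triggered only by extraordinary configurations), so by Lemma~\ref{onl:lem:vertexcharge} the total cost of the algorithm is bounded by vertex charges alone, each accumulating to at most $O(\log k) \cdot \size(v)$. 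Applying Lemma~\ref{onl:lem:lowerVertex} with $\chargesucc = \Omega(\cost(\ALG))$ and $\chargemax = O(\log k / k)$ yields $\cost(\ALG) = O(\log k) \cdot \cost(\OPT)$, giving the claimed competitive ratio.
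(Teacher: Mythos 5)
Your overall strategy matches the paper's: show that for $\varepsilon>1$ the ILP always admits an all-ordinary optimal solution, conclude $\hmax=0$ so that no extra charges are ever made, bound the per-server load by $2+O(\varepsilon')$ via the enlarged reservation cap, and let the $O(\log k)$ bound on accumulated vertex charges together with Lemma~\ref{onl:lem:lowerVertex} finish the argument. This is precisely the content of the paragraph preceding the theorem in the paper.

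There is, however, a genuine flaw in your witness for feasibility of an all-ordinary solution. You take $r_s=m_s$ and add the extra $1+\varepsilon'$ of slack \emph{only to the $0$-th component}, arguing that the coverage constraints hold because the aggregate reservation $\ell(1+\varepsilon')$ dominates the aggregate volume $\ell$. But the constraints $\sum_{(r,m)}x_{(r,m)}r_i\ge V_i$ are imposed per size class. For $i>0$ your construction yields $\sum_s r_{si}=\sum_s m_{si}$, which equals the total committed volume of class-$i$ pieces that are \emph{monochromatic} for some server, whereas $V_i$ counts the committed volume of \emph{all} class-$i$ pieces; non-monochromatic large pieces contribute to $V_i$ but to no source vector. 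A single non-monochromatic piece of class $i$, in the absence of monochromatic class-$i$ pieces, already makes your solution infeasible, and the surplus parked in component $0$ cannot absorb committed volume of class $i>0$. The repair is to start from a reservation assignment feasible for the original ($\gamma=2\delta$) ILP---which covers every $V_i$ by construction and has $\|r_s\|_1\le 1+2\delta$---and add $m_s$ componentwise to server $s$'s reservation: every configuration becomes ordinary, coverage is preserved since entries only increase, and the norm grows by $\|m_s\|_1\le\lceil 1\rceil_\delta\le 1+\delta$, staying within the enlarged cap of $2+\varepsilon'$. This is exactly what the paper's remark that the algorithm assigns volume at most $1+\varepsilon'$ per server \emph{and} each server can additionally store all its monochromatic pieces of total volume at most $1$ amounts to. With that correction, the remainder of your argument (Lemma~\ref{onl:lem:augmentation} with the larger $\gamma$, the vanishing of extra charges when $\hmax=0$, and the vertex-charge accounting) goes through.
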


\section{Randomized Algorithm}
\label{onl:sec:randomized-algorithm}
How can randomization help to improve on the competitive ratio? For this
observe that the cost that we charge to vertices is at most
$O(\log k\cdot \cost(\OPT))$. Hence, the critical part is the cost for which we
perform extra charges, which can be as large as $\Omega(\ell\log k)$ according
to Theorem~\ref{onl:thm:lb-det}. A rough sketch of a (simplified)
lower bound is as follows. We generate a scenario where initially all servers
have the same source vector but some server needs to schedule its
source-pieces on
different servers (as, otherwise, we could not fulfill all constraints).

In this situation, an adversary can issue merge requests for all vertices that
originated at the server $s$ that currently has its source-pieces distributed
among several servers. Then the online algorithm incurs constant cost to
reassemble these pieces on one server, and, in addition, has to split the
source-pieces of another server between at least two servers.
Repeating this for $\ell-1$ steps gives a cost of $\Omega(\ell)$ to the online
algorithm while an optimum algorithm just incurs constant cost. 

The key insight for randomized algorithms is that the above scenario cannot
happen if we randomize the decision of which server distributes its
source-pieces among several servers. The online problem then turns into a
paging problem and we use results from online paging to derive our bounds.

\subsection{Augmented ILP}
\label{onl:sec:augmented-ilp}
Let $M$ denote the set of all potential source vectors. We introduce a partial
ordering on $M$ as follows. We say $m \ge_p m'$ if any prefix-sum of $m$ is
at least as large as the corresponding prefix-sum for $m'$. Formally,
\begin{equation*}
m\ge_p m' ~~~\text{$\Longleftrightarrow$}~~~ \text{$\forall i$:\,\,}{\textstyle\sum_{j=0}^im_j\ge\sum_{j=0}^im'_j}\enspace.
\end{equation*}
Observe that $m\ge m'$ implies $m\ge_p m'$. We adapt the ILP by adding a
cost-vector $c$ that favors large source vectors w.r.t.\ $\ge_p$. This
means as a first objective the ILP tries to minimize the number of
extraordinary configurations as before but as a tie-breaker it favors
extraordinary configurations with large source vectors. For this we assign
unique ids from $1,\dots, |M|$ to the source vectors s.t.\
$m_1\ge_p m_2 \implies \operatorname{id}(m_1)\le \operatorname{id}(m_2)$.
Then we define the cost-vector $c$ by setting
\begin{equation}
\label{onl:eq:cost-vector}
c_{(r,m)}:=\left\{\begin{array}{l@{~~}l}0 & r\ge m\\1+\lambda\operatorname{id}(m)&\text{otherwise}\end{array}\right.\enspace,
\end{equation}
for $\lambda = 1/(|M|^2 \cdot \ell)$. Given the cost-vector $c$, we set the
objective function of our new ILP to $\sum_{(r,m)}c_{(r,m)}x_{(r,m)}$.
The choice of $\lambda$ together with
$\|x\|_1 = \ell$ imply that
$\sum_{(r,m) : r \not\geq m} \lambda \operatorname{id}(m) x_{(r,m)} \leq \lambda
\cdot |M| \ell = 1/|M| < 1$. 
Thus, the ILP still minimizes 
the number of
extraordinary servers.

Note that the sensitivity
analysis for the ILP still holds
(Theorem~\ref{onl:thm:sensitivity} is independent of the cost vector and also
Lemmas~\ref{onl:lem:ilp-solution-still-optimal} and~\ref{onl:lem:ilp-solution-still-optimal-commit}
hold for the cost vector defined above).
This means if we have a constant change in the RHS~vector of the ILP, we can
adjust the ILP solution and the schedule at the cost stated in
Lemma~\ref{onl:lem:adjustLemma}.
Similarly, when we manually adjust the ILP solution
(Case~\ref{onl:case:ordinary-mono-merge} and Case~\ref{onl:case:ordinary-mono-commit}),
we do not increase the cost because only the configuration of a single server
$s$ changes and this server keeps its ordinary configuration, i.e., it does not
contribute to the objective function of the ILP.

A crucial property of the partial order $\geq_p$ is that source vectors of
servers are monotonically decreasing w.r.t.~$\geq_p$ as time progresses and
as more merge-op\-er\-a\-tions are processed.
\begin{observation}
\label{onl:obs:monotone}
	Let $m_s(t)$ denote the source vector of some server $s$ after some timestep
	$t$ of the algorithm. Then $t_1\le t_2$ implies $m_s(t_1)\ge_p m_s(t_2)$, i.e.,
	the source vector of a particular server is monotonically decreasing w.r.t.\
	$\ge_p$.
\end{observation}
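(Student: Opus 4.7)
The plan is to fix a server $s$ and show that, for every $i$, the prefix sum $M_s^{(i)}(t) := \sum_{j=0}^i m_{sj}(t)$ is non-increasing in $t$. Since the source vectors are determined only by the current set of pieces and the commit status of each vertex, it is enough to verify the claim for each atomic change, namely a merge of two pieces or a single commit-operation. Between such changes everything that feeds into $m_s$ is constant, and transitivity of $\ge_p$ then gives the observation.

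The accounting I would use decomposes $M_s^{(i)}$ as the sum of (i) the uncommitted volume of color $s$ (rounded up to a multiple of $\delta$), plus (ii) the total committed volume of all $s$-monochromatic pieces of class at most $i$. The auxiliary structural fact I would rely on is that $s$-monochromaticity of a piece can only be lost, never gained: for large pieces the non-$s$ volume is the sum of the non-$s$ volumes over the constituents and is therefore monotone non-decreasing under merges, and commits do not change vertex origin; for a small piece that becomes large via its first commit, the fact that $|\pm| \ge \epsilon$ together with $\delta \le \epsilon^2$ forces a small piece that fails the $\epsilon$-fraction test to also fail the $\delta$ absolute test after becoming large.

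With this accounting in hand I would case split. For a merge of $\ps, \pl$ into $\pm$ with classes $i_1, i_2$ and $i_m = i_1 + i_2$, part (i) is unchanged because no vertex alters its commit status. For (ii), if all three pieces are $s$-monochromatic (a monochromatic merge) then $|\ps|_c + |\pl|_c = |\pm|_c$ simply migrates from classes $i_1, i_2$ to the strictly larger class $i_m$, so each $M_s^{(i)}$ either stays the same (when $i \ge i_m$) or strictly decreases. In every other merge case, at least one of $\ps, \pl$ was $s$-monochromatic but $\pm$ is not, so those contributions vanish from (ii) and nothing is added. For a commit of $\delta$ inside $\pm$, the interesting subcase is $\pm$ being $s$-monochromatic with majority color $s$: the contribution of $\pm$ to (ii) migrates from $i\delta$ at class $i$ to $(i+1)\delta$ at class $i+1$, while (i) drops by exactly $\delta$ because the uncommitted $s$-volume drops by exactly $\delta$. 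The arithmetic yields zero net change at prefix index $i' \ge i+1$ and a strict $\delta$ decrease for $i' \le i$. The remaining commit subcases either decrease (i) (rounding-monotone, since the uncommitted color-$s$ volume cannot grow) or leave everything untouched.

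The step that requires the most care will be the auxiliary fact at the small-to-large boundary, because ``monochromatic'' is defined by an $\epsilon$-fraction rule for small pieces and a $\delta$ absolute rule for large pieces, so I must rule out the possibility that the transition spawns a new $s$-monochromatic piece that adds mass to $m_s$ at class $\ge 1$. Once that edge case is dispatched using $|\pm| \ge \epsilon$, $\delta \le \epsilon^2$, and the invariant that committed vertices of a large monochromatic piece all carry the majority color, every other piece of the proof reduces to the routine observation that committed and $s$-monochromatic volume can only flow upward in class index or disappear from $m_s$ entirely.
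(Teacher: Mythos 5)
Your proposal is correct and follows essentially the same route as the paper's proof: enumerate the atomic operations (monochromatic/non-monochromatic merges and commits) and check that each one either reduces entries of $m_s$ or shifts mass to strictly higher class indices, so every prefix sum is non-increasing. You are in fact slightly more careful than the paper in explicitly justifying that large pieces cannot gain $s$-monochromaticity and in handling the small-to-large transition via $|\pm|\ge\epsilon$ and $\delta\le\epsilon^2$, a point the paper only states elsewhere (in Case~\ref{onl:case:ps-small}) and uses implicitly here.
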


\subsection{Marking Scheme}
\label{onl:sec:marking-scheme}
The total extra charge that is generated by our algorithm is determined by
how we assign extraordinary configurations to servers. We use a marking
scheme to decide which servers \emph{may} receive an extraordinary
configuration. 
Formally, a (randomized) \emph{marking scheme} dynamically partitions the
servers into \emph{marked} and \emph{unmarked} servers and satisfies the following
properties:
\def\cost{\operatorname{cost}}
\begin{itemize}
	\item Initially, i.e., before the start of the algorithm, all servers are
		unmarked. 
	\item Let $h_m$ denote the number of servers with source vector $m$ that
		are assigned an extraordinary configuration by the ILP, i.e.,
		$h_m=\sum_{(r,m):r\not\ge m}x_{(r,m)}$. The marking scheme has to mark
		at least $h_m$ servers with source vector $m$.
\end{itemize}
The cost $\cost(\mathcal{M})$ of a marking scheme $\mathcal{M}$ is defined as
follows:
\begin{itemize}
	\item Switching the state of a server from marked to unmarked or vice versa
	induces a cost of $1$.
	\item If a marked server experiences a monochromatic merge, the cost increases
	by $|\ps|$, where $\ps$ is the smaller piece involved in the merge-operation.
\end{itemize}
Suppose for a moment that the marked servers always are exactly the servers that
are assigned an extraordinary configuration. Then the above cost is clearly an
upper bound on the total extra charge as define in Section~\ref{onl:sec:extra} (up
to constant factors). This is because the marking scheme pays whenever switching
between marked and unmarked, while in our analysis we only make one extra charge
of constant cost when a server switches to an extraordinary configuration for
the first time.

In the following, we enforce the condition that a server only has an
extraordinary configuration if it is marked by the marking scheme. However,
the marking scheme could mark additional servers that are not extraordinary.
Thus, by enforcing this condition our algorithm incurs additional cost.
Suppose, e.g., that the marking scheme decides to unmark a server $s$ that is
currently marked and has been assigned an extraordinary configuration.
Then we have to switch the (extraordinary) configuration $(r,m_s)$
assigned to $s$ with an ordinary configuration $(r',m_s)$ that currently is
assigned to a different marked server $s'$. Note that we always find such a
server because there exist at least $h_{m_s}$ marked servers with
source vector $m_s$. The switch can then be performed at constant cost.
We make an additional extra charge for this increased cost of our algorithm.
Note that the marking scheme accounts for this additional cost as it incurs
cost whenever the state of a server changes. Therefore, the cost of the marking
scheme can indeed serve as an upper bound on the total extra charge (including
the additional extra charge). This gives the following observation.
\begin{observation}
\label{onl:obs:markingscheme}
	Let $\mathcal{M}$ be a marking scheme. The total extra charge is
	at most $O(\cost(\mathcal{M}))$.
\end{observation}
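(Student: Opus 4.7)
The plan is to account for every component of the total extra charge that appears in Section~\ref{onl:sec:extra} (together with the additional extra charges introduced in the paragraph just above the observation) and to match each with a corresponding contribution to $\cost(\mathcal{M})$ that absorbs it up to constant factors. Recall that the algorithm enforces the invariant \emph{extraordinary $\Rightarrow$ marked}: a server may carry an extraordinary configuration only while it is marked.

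First I would enumerate the three sources of extra charge. (I)~From Case~\ref{onl:case:mono-mergeI} and Case~\ref{onl:case:mono-mergeII}, each monochromatic merge at an extraordinary server~$s$ produces an extra charge of $O(|\ps|)$. (II)~From Case~\ref{onl:case:mono-commit}, each server generates at most one extra charge of $O(1)$, performed when $s$ switches its configuration from ordinary to extraordinary \emph{for the first time}. (III)~Each time the marking scheme unmarks a server $s$ that currently carries an extraordinary configuration, the algorithm swaps $s$'s configuration with that of another marked ordinary server of the same source vector (which exists because at least $h_{m_s}$ marked servers with source vector $m_s$ are maintained), yielding an additional $O(1)$ extra charge.

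Next I would match these to $\cost(\mathcal{M})$. For (I), the invariant implies that $s$ is marked at the moment of the merge, so by definition of $\cost(\mathcal{M})$ the marking scheme pays exactly $|\ps|$ for a monochromatic merge at the marked server $s$; this absorbs the $O(|\ps|)$ extra charge. For (III), the swap is triggered by a marked~$\leftrightarrow$~unmarked state change, for which $\mathcal{M}$ pays $1$; this absorbs the $O(1)$ charge. For (II), since every server begins unmarked and $s$ must be marked at the instant it first becomes extraordinary, $s$ must have undergone at least one unmarked~$\to$~marked transition before that moment, contributing at least $1$ to $\cost(\mathcal{M})$; since the type~(II) charge for $s$ is a single $O(1)$ payment, I can inject the type~(II) charge of $s$ into this first marking event of $s$.

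Summing the three contributions gives a total extra charge of at most $O(\cost(\mathcal{M}))$. The main subtlety I expect is the bookkeeping in (II): I must ensure that the first marking event of each server is used to absorb only its own (unique) type~(II) charge and is not already spent on (I) or (III), which I would handle by noting that the type~(I) and type~(III) costs are each paid by \emph{separate} marking-scheme events (a monochromatic merge at a marked server for (I), and a marked~$\leftrightarrow$~unmarked switch at the moment of the unmark for (III)), none of which coincides with the initial unmarked~$\to$~marked transition charged against (II). Hence no marking-scheme cost is double-counted, and the stated bound follows.
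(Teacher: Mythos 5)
Your proposal is correct and follows essentially the same route as the paper: the paper's justification (given in the two paragraphs preceding the observation) likewise matches the per-merge extra charge of $O(|\ps|)$ to the marking scheme's $|\ps|$ cost at marked servers via the invariant that extraordinary servers are marked, charges the one-time first-extraordinary cost to the server's initial unmarked-to-marked transition, and charges the configuration-swap cost to the corresponding unmarking event. Your explicit check against double-counting is a minor addition of rigor over the paper's informal presentation, but the underlying argument is the same.
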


Next, we construct a marking scheme with small cost. For simplicity of
exposition we assume that we know $\hmax$, the maximum number of extraordinary
configurations that will be used throughout the algorithm, in advance. We
describe in Appendix~\ref{onl:sec:marking-doubling} how to adjust the scheme to work
without this assumption by using a simple doubling trick (i.e., make a guess for
$\hmax$ and increase the guess by a factor of 2 if it turns out to be wrong).

We will use results from a slight variant of online
paging~\cite{finely-competitive-paging}.  In this problem, a sequence of page
requests has to be served with a cache of size $z$. A request $(p,w)$ consists
of a page $p$ from a set of $\ell\ge z$ pages together with a weight
$w\le1$.\footnote{Note that our problem definition slightly differs from the
	model analyzed by Blum et al.~\cite{finely-competitive-paging}, which has
	$w=1$ for every request.  However, it is straightforward to show that the
	results of \cite{finely-competitive-paging} carry over to our model.}
If the requested page is in the cache, the cost for an algorithm serving the
request sequence is $0$.  Otherwise, an online algorithm experiences a cost of
$w$.  It can then decide to put the page into the cache (usually triggering the
eviction of another page) at an additional cost of $1$.

The cost metric for the optimal offline algorithm is different and provides an
advantage to the offline algorithm. If the offline algorithm does not have $p$
in its cache, it pays a cost of $w/r$, where $r\geq 1$ being a parameter of the
model, and then it can decide to put $p$ into its cache at an additional cost of
$1$.  In~\cite{finely-competitive-paging}, the authors show how to obtain a
competitive ratio of $O(r+\log z)$ in this model.

\textbf{The Paging Problems.}
Let $M$ denote the set of potential source vectors and recall that $|M|=O(1)$.
We introduce $|M|$ different paging problems, one for every potential source
vector $m\in M$.

Fix a potential source vector $m$. Let $S_m$ denote the set of servers that have a
source vector $m'\ge_p m$.
Essentially, we simulate a paging algorithm on the set 
$S_m$ (i.e., servers correspond to pages) with a cache of size $|S_m|-\hmax$
and parameter $r=\log k$.

Note that a server may leave the set $S_m$, but it is not possible for a server
to enter this set because the source vector $m_s$ of a server is non-increasing
w.r.t.\ $\ge_p$ (Observation~\ref{onl:obs:monotone}). The fact that servers may
leave $S_m$ is problematic for setting up our paging problem because this would
correspond to decreasing the cache size, which is usually not possible.
Therefore, we define the paging problem on the set of \emph{all servers} and we
set the cache size to $\ell-\hmax$, but we make sure that servers/pages not in
$S_m$ are always in the cache. This effectively reduces the set of pages to
$S_m$ and the cache size to $|S_m|-\hmax$.

We construct the request sequence of the paging problem for $S_m$ as follows.
A monochromatic merge for a server $s\in S_m$ is translated into a
page request for page $s$ with weight $|\ps|$, where $\ps$ is the smaller piece
that participates in the merge-operation. Following such a merge request, we
issue a page request (with weight~1) for every page/server not in $S_m$. This
makes sure that an optimum solution keeps all these pages in the cache at all
times, thus reducing the effective cache-size to $|S_m|-\hmax$.
The request sequence stops when $|S_m|=\hmax$.

\textbf{The Marking Scheme.}
We obtain a marking scheme from all the different paging algorithms as follows.
A server with source vector $m$ is marked if it is \emph{not} in the cache
for the paging problem on set $S_m$, or if $|S_m|\le\hmax$. The following lemma
shows that this gives a valid marking scheme.
\begin{lemma}
\label{onl:lem:marking-scheme}
	The marking scheme marks at least $h_m$ servers with source vector $m$.
\end{lemma}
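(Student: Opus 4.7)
The plan is to split the proof on the size of $S_m$. In the easy case $|S_m|\le h_{\max}$, the marking rule declares every server with source vector $m$ marked, giving $Z_m$ such marked servers. The ILP equality constraint $\sum_r x_{(r,m)}=Z_m$ forces $h_m\le Z_m$, so the required bound holds immediately.

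The main case is $|S_m|>h_{\max}$. After the dummy requests fill the cache, every server outside $S_m$ is pinned in the cache of the paging problem on $S_m$; this leaves $|S_m|-h_{\max}$ cache slots for servers in $S_m$, so exactly $h_{\max}$ servers of $S_m$ sit outside the cache at any moment (once the cache is full). A source-vector-$m$ server is marked iff it is out of cache in its own paging problem, so it suffices to show that at most $h_{\max}-h_m$ of these $h_{\max}$ out-of-cache servers in paging-$S_m$ have source vector strictly above $m$ in $\ge_p$. On the ILP side, at any moment the total number of extraordinary configurations is at most $h_{\max}$, hence $\sum_{m'>_p m}h_{m'}\le h_{\max}-h_m$; this is exactly the bound we need, provided the out-of-cache servers in paging-$S_m$ with source vector $>_p m$ can be matched to ILP extras at those larger source vectors.

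The central obstacle is aligning the online paging decisions in paging-$S_m$ with the offline ILP's extraordinary assignment. I would handle it by induction on $\ge_p$ from the top down. For a maximal $m^\star$, the first case applies directly to $S_{m^\star}$. For $m'>_p m$, the requests in paging-$S_m$ restricted to $S_{m'}$ are exactly the requests in paging-$S_{m'}$, so the two paging algorithms can be coupled so that the $h_{m'}$ source-vector-$m'$ servers forced out of cache in paging-$S_{m'}$ are also out of cache in paging-$S_m$. Summing the inductive bounds consumes at most $\sum_{m'>_p m}h_{m'}\le h_{\max}-h_m$ of the $h_{\max}$ out-of-cache slots of paging-$S_m$, leaving at least $h_m$ slots for source-vector-$m$ servers, as required. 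The cost vector~\eqref{onl:eq:cost-vector}, whose tie-breaker $\lambda\operatorname{id}(m)$ drives the ILP to pile extras onto the largest source vectors first, is what keeps this alignment compatible with ILP optimality at each level of the induction and prevents the hypothetical situation where $h_m>0$ while some $m'>_p m$ still has ordinary slack.
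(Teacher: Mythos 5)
Your easy case ($|S_m|\le\hmax$) matches the paper, and your overall skeleton for the main case (count the out-of-cache servers of paging-$S_m$ and show that not too many of them have source vector different from $m$) is the right one. But the way you discharge the main case has a genuine gap. What must be bounded from above is the number of out-of-cache servers of paging-$S_m$ that lie in $X_m:=\{s\in S_m: m_s\neq m\}$; the only a priori bound on this is $|X_m|$ itself, since the online paging algorithm is free to evict any servers of $S_m$ it likes and could in principle fill all $\hmax$ out-of-cache slots with members of $X_m$. Your coupling argument runs in the wrong direction: coupling paging-$S_{m'}$ with paging-$S_m$ so that servers forced out at level $m'$ are \emph{also} out at level $m$ yields a \emph{lower} bound on how many higher-level servers are out of cache, not an upper bound, and nothing in your argument prevents the algorithm from evicting far more than $h_{m'}$ servers of source vector $m'$ (indeed $Z_{m'}$ can greatly exceed $h_{m'}$ if many such servers are ordinary). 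So "leaving at least $h_m$ slots for source-vector-$m$ servers" does not follow.

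The missing ingredient is a structural fact about the \emph{ILP optimum}, not about the paging dynamics: Lemma~\ref{onl:lem:largeconfextraordinary} shows that if $h_m>0$ then the optimal solution uses \emph{no} ordinary configuration for any $m'\ge_p m$, hence \emph{every} server in $X_m$ is extraordinary and $|X_m|+h_m\le\hmax$. This bounds $|X_m|\le\hmax-h_m$ outright, so that however the paging algorithm chooses its out-of-cache set, at least $\hmax-|X_m|\ge h_m$ of the out-of-cache servers have source vector exactly $m$ and are therefore marked. You correctly sensed that the tie-breaking cost vector of Equation~(\ref{onl:eq:cost-vector}) is what "prevents the hypothetical situation where $h_m>0$ while some $m'>_p m$ still has ordinary slack," but that statement is precisely the lemma you would need to prove (via the exchange argument of Claim~\ref{onl:cla:shift}); it cannot be obtained by inducting over the coupled paging instances.
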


Let $\cost(S_m)$ denote the cost of the solution to the paging problem for
$S_m$. The following two claims give an upper bound on the cost of the marking
scheme.
\begin{claim}
\label{onl:cla:markingBoundedByPaging}
	We have that
	\begin{align*}
		\cost({\mathcal{M}})
		&\le \sum_m\big(\cost(S_m)+\hmax+O(\log k)\cdot\hmax\big) \\
		&= O\left(\sum_m\cost(S_m)+\log k\cdot\hmax\right).
	\end{align*}
\end{claim}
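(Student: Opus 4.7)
The plan is to decompose $\cost(\mathcal{M})$ into its two ingredients---state-switch events (cost~$1$ each) and monochromatic merges at marked servers (cost $|\ps|$ each)---and to attribute every unit of this cost to one of the $|M| = O(1)$ paging problems. The natural attribution uses the fact that a server $s$'s marking status is determined by exactly one paging problem at a time, namely the one on $S_{m_s}$ for its current source vector $m_s$; so every cost event at $s$ can be charged to paging problem $S_{m_s}$ at the moment it occurs. Since $|M| = O(1)$, the sum over $m$ will inflate each bound by only a constant.

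Next I would bound the per-$m$ contribution to the switching cost. A switch of $s$ with $m_s = m$ occurs for one of three reasons: (A) the paging algorithm on $S_m$ inserts or evicts $s$; (B) $s$'s source vector strictly decreases w.r.t.\ $\ge_p$ (Observation~\ref{onl:obs:monotone}), so its governing paging problem changes from $S_m$ to $S_{m'}$; or (C) $|S_m|$ drops through the threshold $\hmax$. Type (A) switches are absorbed directly into $\cost(S_m)$, since each cache insertion in the paging model is billed at least~$1$. Type (C) happens at most once per $m$ and can mark up to $\hmax$ in-set servers at once, contributing at most $\hmax$. For Type (B), I would invoke the request-sequence construction from the marking-scheme setup: after every merge, weight-$1$ requests for all pages outside $S_m$ are issued, so the moment $s$ leaves $S_m$ the paging algorithm is forced to accommodate $s$ in the cache. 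Any not-yet-cached insertion is booked to $\cost(S_m)$ and pays for the at most $O(1)$ switch induced by the handover of $s$ to $S_{m'}$.

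For the monochromatic-merge contribution at marked servers I would case on whether $|S_m| > \hmax$. If so, a merge at a marked $s$ with $m_s = m$ is realized in paging problem $S_m$ as a weight-$|\ps|$ request for a page outside the cache, i.e.\ a cache miss charging $|\ps|$ to $\cost(S_m)$. If $|S_m| \le \hmax$, the paging problem has terminated, every server in $S_m$ is marked, and at most $\hmax$ such servers exist; a standard union-by-size potential argument then bounds the total weight of monochromatic merges at any single server by $O(\log k)$, since each source vertex in $s$'s monochromatic piece at least doubles its piece-volume whenever it is on the smaller side of a monochromatic merge at $s$, so its contribution telescopes against $\log k$ per vertex of volume $1/k$. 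Summing over at most $\hmax$ servers yields an $O(\hmax \log k)$ contribution per $m$; combining with the switching bounds and summing over the $O(1)$ source vectors gives $\cost(\mathcal{M}) \le \sum_m (\cost(S_m) + \hmax + O(\log k) \hmax)$. I expect the trickiest step to be the Type-(B) bookkeeping, because a source-vector transition of $s$ can in principle flip $s$'s mark even when no cache operation is explicitly performed; the argument must carefully show that the forced weight-$1$ non-$S_m$ requests used to keep non-$S_m$ pages cached are enough to absorb every such flip, preventing a hidden $\Theta(\ell)$ residue from creeping into the bound.
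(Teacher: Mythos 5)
Your decomposition and accounting match the paper's proof: monochromatic merges at marked servers while $|S_m|>\hmax$ are exactly the out-of-cache requests of weight $|\ps|$ and are absorbed into $\cost(S_m)$; merges after $|S_m|\le\hmax$ are bounded by $O(\log k)$ per server (your doubling/telescoping argument is precisely Claim~\ref{onl:cla:extraChargeServer}) times at most $\hmax$ servers; and the additive $\hmax$ term comes from a one-time batch of markings per $m$ (the paper attributes it to initialization, you to the threshold crossing --- both are $O(\hmax)$ per $m$ and both occur). Your only loose end is the one you flag yourself: for a Type-(B) flip in the direction unmarked$\to$marked, the forced weight-$1$ requests are issued to the problem $S_m$ that $s$ is \emph{leaving} and only cost that problem something when $s$ is already outside its cache, so they pay for marked$\to$unmarked handovers but not obviously for the reverse direction; you would need an extra counting step (e.g., flips-to-marked exceed flips-to-unmarked by at most the final number of marked servers, or a bound on the number of source-vector transitions per server). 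The paper's own proof is entirely silent on post-initialization switching costs, so on this point your sketch is already more careful than the original.
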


\begin{claim}
\label{onl:cla:pagingBoundedByhmax}
	There is a randomized online algorithm for the paging problem on $S_m$ with
	(expected) cost $\cost(S_m)\le O((\log k+\log\ell)\cdot\hmax)$.
\end{claim}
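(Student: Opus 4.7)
The plan is to apply the competitive paging algorithm of Blum et al.~\cite{finely-competitive-paging} directly to the request sequence constructed for $S_m$, and reduce everything to showing that the offline optimum on this instance is $O(\hmax)$. The paging problem has $\ell$ pages, cache size $z = \ell - \hmax \leq \ell$, and the offline discount parameter is set to $r = \log k$. The result of~\cite{finely-competitive-paging} then yields a randomized online algorithm with expected cost at most $O(r + \log z) \cdot \OPT_{\text{paging}} = O(\log k + \log \ell) \cdot \OPT_{\text{paging}}$. It therefore suffices to prove $\OPT_{\text{paging}} = O(\hmax)$.

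To bound the total weight $W$ of the request sequence, fix a server $s \in S_m$ and consider only the requests arising from monochromatic merges on $s$, each of which has weight $|\ps|$. A standard union-by-size argument applies: whenever a piece containing a fixed vertex $v$ appears as the smaller side $\ps$ of such a merge, the resulting piece $\pm$ satisfies $|\pm| \geq 2|\ps|$, so the piece containing $v$ at least doubles in volume. Since every vertex has volume $1/k$ and no piece ever exceeds volume $1+O(\varepsilon)$, each vertex can be on the smaller side at most $O(\log k)$ times, contributing at most $(1/k)\cdot O(\log k)$ in total. The number of vertices ever contained in an $s$-monochromatic piece is $O(k)$ (mostly the $k$ source vertices of color $s$, plus a small amount of contamination bounded by the monochromaticity definition), so the weight of monochromatic merge requests for a single $s$ is $O(\log k)$. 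Summing over $|S_m|$ servers gives $W \leq O(|S_m| \log k)$. The filler requests have weight $1$ but never appear as misses in any reasonable offline strategy, so they do not contribute to $\OPT_{\text{paging}}$.

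For the offline bound, observe that the filler requests compel the optimum to keep all $\ell - |S_m|$ pages outside $S_m$ permanently in cache, leaving $|S_m| - \hmax$ of the $\ell - \hmax$ cache slots available for pages of $S_m$. Consider the static offline strategy that keeps in cache the $|S_m| - \hmax$ pages of $S_m$ with the largest total request weight and never fetches the other $\hmax$ pages. By an averaging argument, those $\hmax$ evicted pages account for total request weight at most $(\hmax / |S_m|) \cdot W = O(\hmax \log k)$, and the offline algorithm pays $w/r$ per such miss, yielding total offline cost at most $O(\hmax \log k)/r = O(\hmax)$. Combined with the competitive ratio from \cite{finely-competitive-paging} this gives $\cost(S_m) \leq O((\log k + \log \ell) \hmax)$, as claimed. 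The main technical obstacle is the union-by-size bound in the middle paragraph, specifically justifying that the set of vertices ever involved in $s$-monochromatic pieces is $O(k)$ despite the small non-$s$ contamination permitted by our definition of monochromaticity — this uses that committed non-majority volume is at most $\delta$ per large piece and that once two pieces merge their vertex set is frozen together.
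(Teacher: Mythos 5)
Your overall architecture matches the paper's: invoke the $O(r+\log z)$-competitive algorithm of Blum et al.\ with $r=\log k$, and reduce the claim to showing $\OPT_{\text{paging}}=O(\hmax)$ via a per-server $O(\log k)$ bound on the total weight of monochromatic merge requests (your union-by-size argument is essentially the paper's Claim~\ref{onl:cla:extraChargeServer}). The gap is in your choice of offline strategy. You keep in cache the $|S_m|-\hmax$ pages of $S_m$ with the largest total merge-request weight and dismiss the filler requests on the grounds that they ``never appear as misses in any reasonable offline strategy.'' But $S_m$ shrinks over time (Observation~\ref{onl:obs:monotone}), and the construction issues a weight-$1$ filler request to \emph{every page not in the current $S_m$} after each merge request. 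A page among your $\hmax$ evicted (lightest) pages may leave $S_m$ early; from that point on it receives a filler request after every subsequent merge, each costing the offline algorithm $1/r$. The number of such misses is not controlled by $\hmax$ — it can be on the order of the total number of merge requests — so your static strategy does not give $\OPT_{\text{paging}}=O(\hmax)$. (Dynamically fetching a page into the cache when it leaves $S_m$ does not rescue the argument either: that costs $1$ per fetch and there can be up to $|S_m|-\hmax$ of them.)

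The fix, which is exactly what the paper does, is to choose the $\hmax$ out-of-cache pages to be the $\hmax$ elements that leave $S_m$ \emph{last}. Since the request sequence stops as soon as $|S_m|=\hmax$, these pages are still in $S_m$ throughout the sequence and therefore never receive a filler request; the offline cost is then only the initialization plus $(1/r)\cdot O(\log k)$ per out-of-cache page for the merge requests, i.e.\ $O(\hmax)$. Note that with this choice no averaging over $W$ is needed: the per-server bound of $O(\log k)$ applied to each of the $\hmax$ chosen pages already suffices.
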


Now combining the two claims above with Lemma~\ref{onl:lem:lowerILP} and the
analysis of vertex charges from Section~\ref{onl:sec:analysis}, we obtain our main
theorem.

\begin{theorem}
\label{onl:thm:randomized}
	There is a randomized algorithm with competitive ratio $O(\log\ell+\log k)$.
\end{theorem}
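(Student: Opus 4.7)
The plan is to decompose the expected cost of the randomized algorithm into vertex charges and extra charges, exactly as in the deterministic amortization of Section~\ref{onl:sec:analysis}, and to bound each contribution separately. The only substantive change compared to the deterministic analysis is how extra charges are controlled: through the marking scheme of Section~\ref{onl:sec:marking-scheme} rather than the crude worst-case bound of Lemma~\ref{onl:lem:totalExtraCharge}.

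For the vertex charges, the augmented objective introduced in Section~\ref{onl:sec:augmented-ilp} only breaks ties among previously optimal ILP solutions, so the sensitivity statement of Lemma~\ref{onl:lem:adjustLemma} and the entire case analysis of Steps~I--III transfer verbatim. In particular, Lemma~\ref{onl:lem:vertexcharge} still bounds the charge accumulated at any vertex $v$ by $O(\log k \cdot \size(v))$, and Lemma~\ref{onl:lem:lowerVertex} then bounds the total vertex charge by $O(\log k) \cdot \cost(\OPT)$.

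For the extra charges, I would chain the three reductions developed in Section~\ref{onl:sec:randomized-algorithm}. Observation~\ref{onl:obs:markingscheme} replaces the total extra charge by $O(\cost(\mathcal{M}))$ for the marking scheme $\mathcal{M}$ constructed from the $|M| = O(1)$ paging instances (Claim~\ref{onl:claim:numberOfConfigurations}). Claim~\ref{onl:cla:markingBoundedByPaging} bounds $\cost(\mathcal{M})$ by $O\bigl(\sum_m \cost(S_m) + \log k \cdot \hmax\bigr)$, and Claim~\ref{onl:cla:pagingBoundedByhmax}, which relies on the $O(r+\log z)$-competitive randomized paging algorithm of Blum et al.~\cite{finely-competitive-paging} with $r = \log k$ and $z \le \ell$, bounds each $\Exp{\cost(S_m)}$ by $O((\log k + \log \ell) \cdot \hmax)$. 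Summing over the constantly many source vectors gives expected extra charge $O((\log k + \log \ell) \cdot \hmax)$. Then Lemma~\ref{onl:lem:lowerILP} applied at the step that attains the maximum ILP objective $\hmax$ yields $\cost(\OPT) = \Omega(\hmax)$, so the expected extra charge is $O(\log k + \log \ell) \cdot \cost(\OPT)$. Adding the two contributions completes the argument, noting that the special case $\hmax = 0$ is subsumed since it contributes zero extra charge.

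The main obstacle I expect is verifying that the marking scheme is compatible with the ILP solution at all times: even though the paging algorithm is competitive, one still has to switch extraordinary configurations between marked servers as the marked set changes, and ensure each such swap incurs only constant cost. This relies critically on the augmented cost vector from equation~\eqref{onl:eq:cost-vector}, which guarantees that the ILP always picks a matching set of source vectors, and on the monotonicity of source vectors under $\ge_p$ from Observation~\ref{onl:obs:monotone}, which keeps the paging request sequences well-defined as servers leave $S_m$. A secondary technicality is eliminating the assumption that $\hmax$ is known in advance through the doubling trick of Appendix~\ref{onl:sec:marking-doubling}, which only loses a constant factor and thus preserves the claimed competitive ratio.
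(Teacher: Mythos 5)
Your proposal is correct and follows essentially the same route as the paper: the vertex-charge analysis is carried over unchanged from the deterministic case, and the extra charge is bounded by chaining Observation~\ref{onl:obs:markingscheme}, Claim~\ref{onl:cla:markingBoundedByPaging}, and Claim~\ref{onl:cla:pagingBoundedByhmax}, then comparing to $\cost(\OPT)=\Omega(\hmax)$ via Lemma~\ref{onl:lem:lowerILP}. The supporting points you flag (tie-breaking objective, configuration swaps between marked servers, monotonicity under $\ge_p$, and the doubling trick) are exactly the ingredients the paper relies on.
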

The constant hidden in the $O(\cdot)$-notation in the theorem is
$(1/\varepsilon)^{O(1/\varepsilon^4)}$. This follows from the same arguments
mentioned after the statement of Theorem~\ref{onl:thm:det} and the fact that
Claim~\ref{onl:cla:markingBoundedByPaging} only adds another
$(1/\varepsilon)^{O(1/\varepsilon^2)}$-factor since the number of monochromatic
configurations is $(1/\varepsilon)^{O(1/\varepsilon^2)}$.

\section{Lower Bounds}
\label{onl:sec:lbs}

In this section, we derive lower bounds on the competitive ratios for
deterministic and randomized algorithms. In particular, we show that any
deterministic algorithm must have a competitive ratio of
$\Omega(\ell\log k)$ and any randomized algorithm must have a
competitive ratio of $\Omega(\log \ell + \log k)$.

We note that the lower bounds derived in this section also apply to the model
studied by Henzinger et al.~\cite{henzinger19efficient}. Their model is
slightly more restrictive than ours in that eventually, every server must have
exactly one piece of volume 1 (resp.~$k$ in their terminology); in contrast, in
our model, servers may eventually host multiple pieces smaller than 1. However,
our lower bounds are designed such that they also fulfill the definition of the
model by Henzinger et al.

\subsection{Lower Bounds for Deterministic Algorithms}
\label{onl:sec:lbs-det}

\begin{theorem}
\label{onl:thm:lb-det}
For any $k \ge 32$ and any constant $1/k\le \varepsilon \le 1/32$ such that
$\varepsilon k$ is a power of $2$,
	any deterministic algorithm must have a competitive ratio of
	$\Omega(\ell \lg k)$.
\end{theorem}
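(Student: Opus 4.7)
The plan is to construct an adversarial request sequence in $\ell$ phases so that the deterministic algorithm $\ONL$ is forced to pay cost $\Omega(\log k)$ per phase, while the offline optimum $\OPT$ maintains total cost $O(1)$; this yields the $\Omega(\ell \log k)$ competitive ratio. Following the intuition sketched in the introduction, each phase forces one server to become extraordinary and then exploits that status to extract $\Theta(\log k)$ movement out of $\ONL$.

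The single-phase gadget amplifies the $\Omega(\log k)$ construction of Henzinger et al. Fix a target color $c$ with home server $s_c$. The adversary plays a bottom-up binary-tree merging over vertices dominated by color $c$: at level $j$ the instance contains about $k/2^j$ pieces of volume $\approx 2^j/k$, each monochromatic-for-$c$ but containing a small amount of other-color vertices so that the collective volume of the level slightly exceeds the home-server capacity $1 + \varepsilon$. This forces at least one piece per level to be scheduled off of $s_c$, making $s_c$ extraordinary at that moment. The adversary inspects $\ONL$'s schedule before each merge and always merges two pieces currently on distinct servers, forcing $\ONL$ to move at least the smaller piece's volume. Summed over $\Theta(\log k)$ levels, this costs $\ONL$ at least $\Omega(\log k)$. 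The condition that $\varepsilon k$ is a power of two keeps piece volumes cleanly aligned with $\varepsilon$ at each doubling step, and $\varepsilon \le 1/32$ prevents the $(1+\varepsilon)$-slack from absorbing the movement.

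To iterate across $\ell$ phases, the adversary in phase $i$ adaptively picks an untargeted color $c_i$ based on $\ONL$'s current schedule. Existence of a suitable $c_i$ throughout all $\ell$ phases follows from a capacity-counting argument: after $i-1$ phases have produced $i-1$ consolidated monochromatic pieces of volume $\approx 1$ (which $\ONL$ must store roughly one per server), the total free capacity in $\ONL$'s schedule is only $O(\varepsilon \ell) \ll \ell$, so at least one untargeted color must have its source vertices spread across several servers and becomes a valid target for the gadget. Meanwhile $\OPT$, knowing the entire request sequence, places each of the $\ell$ final monochromatic pieces at its home server upon the first edge insertion and never moves again; by choosing the mixing volumes carefully (see below), the total non-majority volume $\OPT$ has to shuffle across all phases stays $O(1)$.

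The main obstacle is calibrating the amount of non-$c$ ``mixing'' vertices introduced at each level of the gadget. Too little, and the level's pieces all fit on $s_{c_i}$ within the capacity slack and $\ONL$ pays nothing (we cannot force $s_{c_i}$ to be extraordinary); too much, and $\OPT$ itself incurs $\Omega(\log k)$ or $\Omega(\ell)$ cost, which would destroy the ratio. The sweet spot is roughly a $\Theta(\varepsilon)$-fraction of mixing per level, chosen so that (i)~the level's total volume strictly exceeds $1+\varepsilon$ forcing some piece off $s_{c_i}$, and (ii)~the accumulated non-majority volume across all $\Theta(\ell \log k)$ merges of the whole instance remains $O(1)$ so that $\OPT$ absorbs it with a single consolidation step. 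A secondary difficulty is proving the cross-phase invariant that some untargeted color always has sufficiently spread source vertices; this reduces to the capacity-counting argument above together with a careful tracking of which vertices the adversary has touched in previous phases.
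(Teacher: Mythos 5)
Your high-level plan (an adaptive adversary that repeatedly doubles the small pieces of a targeted color over $\Theta(\log k)$ levels, for all $\ell$ colors, while $\OPT$ pays $O(1)$) has the same shape as the paper's construction, but the mechanism you use to force pieces off the home server contains a gap that you flag yourself and do not resolve, and it is fatal as stated. You make each level of the phase-$i$ gadget exceed the capacity $1+\varepsilon$ by injecting non-$c_i$ ``mixing'' vertices \emph{into the pieces of color $c_i$}. To exceed capacity this way you need strictly more than $\varepsilon$ volume of foreign vertices per targeted color, and those vertices end up as non-majority vertices of the final volume-$1$ piece of color $c_i$. Summed over the $\ell$ phases this already forces $\cost(\OPT)=\Omega(\varepsilon\ell)$, since $\OPT$ must move every vertex that does not have the majority color of its final piece; the ratio then collapses to $\Omega(\log k)$ rather than $\Omega(\ell\log k)$. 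Your requirements (i) and (ii) are in direct arithmetic contradiction: (i) demands at least $\varepsilon$ fresh mixing volume per phase, while (ii) demands $O(\varepsilon)$ in total over all $\ell$ phases. There is no ``sweet spot.''

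The paper escapes this by never placing the excess inside the targeted color's pieces. It creates a \emph{single} multi-colored ``special'' piece of volume $6\varepsilon$ at the very beginning (merging three volume-$2\varepsilon$ pieces from three servers) and keeps it for the entire instance, so the total multi-colored volume ever created outside the final consolidation is $O(\varepsilon)$ and $\OPT$ pays only $O(\varepsilon)$. A pigeonhole argument (Lemma~\ref{onl:lemma:split-server}) then shows that wherever this one piece sits, \emph{some} color must have at least $\varepsilon$ volume of its small pieces away from its main server (a ``deficient'' color), and merging the small pieces of that color extracts $\Omega(\varepsilon)$ movement per round. Note also that the deficient color may change from round to round; the paper's accounting is per color ($\log(\varepsilon k)$ rounds until a color is finished), whereas your phase structure assumes you can keep one fixed color split for $\log k$ consecutive levels, which your construction only guarantees via the costly per-color mixing. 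To repair your proposal you would need to replace the per-phase mixing with one persistent $O(\varepsilon)$-volume overflow gadget together with a pigeonhole/deficiency argument of this kind.
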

We devote the rest of this subsection to prove the theorem. 
 
Set $m$ be a positive integer such that
$\varepsilon k = 2^m$. As $\varepsilon \le 1/32$
it follows that $k \ge 2^{m+5}$.
Fix any deterministic algorithm $\ONL$. We will show that there exists
a sequence $\sigma_{\ONL}$ of edge insertions such that the cost of the optimum
offline algorithm is $O(\varepsilon)$, while the cost of ${\ONL}$ is 
$\Omega(\varepsilon \ell \log(\varepsilon k))$. 
The sequence $\sigma_{\ONL}$ depends on ${\ONL}$, i.e., edge insertions will
depend on which servers ${\ONL}$ decides to place the pieces.

\textbf{Definitions.}
We assume that the servers are numbered sequentially.
As before, each server has a color and every vertex is colored with
the color of its initial server. For simplicity, we assume server $i$ has color $i$.
The
\emph{main server} of a color $c$ is the server that, out of all servers, currently contains the largest volume of color-$c$ vertices {\em and} whose index number of all such servers is the smallest\footnote{The difference between {\em majority} and {\em main} color is that we added the second condition to guarantee that the main server of a color is unique.)}.

A piece is called {\em single-colored} if all vertices of the piece have the same color. 
If a single-colored piece with  color $c$ is not assigned to the main server for $c$, it
is called {\em $c$-away} or simply {\em away}.
Any piece of volume at least $2\varepsilon$ is called a {\em large} piece,
all other pieces are called {\em small}.
We say {\em two pieces are merged} if there is an edge insertion connecting the two pieces.

\textbf{Initial Configuration.} Initially each of the $\ell$
servers contains one large single-colored piece of volume $2
\varepsilon$ and $(1-2\varepsilon) k$ isolated vertices, each of volume $1/k$.
The large pieces of on servers  $1$, $2$, and $3$ are called 
 \emph{special}. 
A color $c$ is {\em deficient} if the total volume of all {\em small}  $c$-away pieces is at least $\varepsilon$.

\textbf{Sequence $\sigma_{\ONL}$.}
The first two edge insertions merge the three special pieces into one (multi-colored) special piece of volume $6\varepsilon$.
As we will show {\em any} algorithm now has at least one deficient color.
Note that all small pieces are single-colored and have volume $2^0/k = 1/k$.

Now $\sigma_{\ONL}$ proceeds in {\em rounds}. We will show that there is a deficient color at the end and, thus, also at the beginning of every round. In each round
only small pieces of the same (deficient) color are merged such that their volume doubles.
As a result, all small pieces continue to be single-colored and, at the end of each round, all  small pieces of the same color have the same volume, namely
$2^i/k$ for some integer $i$, {\em except} for potentially one piece of smaller volume, which we call the {\em leftover piece}. A leftover piece is created if the number of small items of color $c$ and volume $2^i/k$ at the beginning of a round is an odd number. 
If this happens, it is merged with the leftover piece of color $c$ of the previous rounds (if it exists) to guarantee that there is always just one leftover piece of color $c$. To simplify the notation we will use the term {\em almost all small pieces of color $c$} to denote all pieces of color $c$ except the leftover piece of color $c$.

A round of $\sigma_{\ONL}$ consists of the following sequence of requests among the small
pieces: 
If there exists a deficient color $c$ such that the volume of almost all 
small pieces of color $c$ is $2^i/k$ for some integer $i$ and $2^i/k < \varepsilon,$ 
then $\sigma_{\ONL}$ contains the following steps.\\
If there are small pieces of color $c$ and volume  $2^i/k$ that are currently on different servers, they are connected by an edge, otherwise two such pieces on
the same server are connected by an edge.
Repeat this until there is at most one small piece of color $c$ of volume $2^i/k$ left.
Once this happens and if such a piece exists, it becomes a {\em leftover} piece of color $c$ and if another leftover piece of color $c$ exists from earlier rounds, the two are merged. Note that almost all pieces of color $c$ have now volume $2^{i+1}/k$ and the leftover piece has smaller volume.
If $2^{i+1}/k  \ge \varepsilon$, merge all {\em non-special} (i.e. the small and the non-special large) pieces of color $c$ and call color $c$ {\em finished}.
As long as there are at least two unfinished deficient colors, start a new round.

Once there are no more rounds we will show that there is exactly one unfinished deficient color $c^*$ left and there are at least $2^{j+3}$ small pieces  of color $c^*$ and volume $\varepsilon/2^j$ for some  integer $j \ge 1$.
Furthermore there exists the special piece of volume $6\varepsilon$ (which is not single-colored) and for every other color there exists one piece of volume 1 (if it does not belong to $\{1,2,3\}$) or of volume $1-2\varepsilon$ (if it belongs to $\{1,2,3\}$).

\textbf{Final Merging Steps.} To guarantee that each piece has volume exactly 1 at the end, the remaining pieces of volume less than 1 are now suitably merged. First $2^{j+3}$ of the pieces of color $c^*$ and volume $\varepsilon/2^j$ are merged into 3 pieces of volume $2\varepsilon$ each, the rest is merged into one piece. Then consider two cases:
If $c^* \in \{1,2,3\}$, let $c'$ and $c''$ be the other two colors of
$\{1,2,3\}$. In this case $\sigma_{\ONL}$  merges the first small piece of  volume $2\varepsilon$ of 
$c^*$ with the non-special piece of $c'$ and then merges the second small piece of volume $2\varepsilon$ of 
$c^*$ with the non-special piece of $c''$. 
Then all the remaining pieces of color $c^*$ are merged with each other and with the special piece.

If $c^* \not\in \{1,2,3\}$, then $\sigma_{\ONL}$  merges the small pieces of 
 volume $\varepsilon/2$ of color
$c^*$ with the non-special piece of color $1$ and then does the same with color
$2$ and 3. Then all the remaining (small and large)  pieces of color $c^*$ are merged with each other and with the special piece.

Note that as a consequence all piece now have volume 1.

We show first that  all the assumptions made in the description of $\sigma_{\ONL}$ hold. Specifically the next three lemmata will show that
(1) after initialization and after each round there exists a deficient color for any algorithm,  that (2) for each color $c$ almost all  small pieces
of color $c$ have volume $2^i/k$ and the leftover piece of color $c$ has volume less than $2^i/k$, and that (3) at the beginning of the final merging steps
there is exactly one unfinished deficient color left and there are at least
$2^{j+3}$ small pieces of this color that have volume $\varepsilon/2^j$ for some
integer $j \ge 1$. Then we will show that algorithm ${\ONL}$ 
has cost at least $\Omega(\varepsilon \ell \log(\varepsilon k))$ to process the sequence.

\begin{lemma}
\label{onl:lemma:split-server}
	At the beginning of each round there exists an unfinished deficient color
	for algorithm ${\ONL}$.
\end{lemma}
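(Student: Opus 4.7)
The plan is to argue by contradiction: suppose that at the start of some round no unfinished color is deficient for $\ONL$, and derive that $\ONL$ must then place strictly more than $1+\varepsilon$ volume on some server. First I would eliminate finished colors from the discussion: once a color $c$ becomes finished, all of its non-special vertices have been merged into a single piece of volume $1$ (if $c\notin\{1,2,3\}$) or $1-2\varepsilon$ (otherwise), so color $c$ has no small pieces at all and the defining inequality of ``deficient'' is vacuously false. Combined with the assumption, this gives that \emph{no} color is deficient.

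Second, I would show that under this strengthened assumption every color has its own distinct main server, so that the map ``color $\mapsto$ main server'' is a bijection on the $\ell$ servers. For any unfinished color $c$, the small pieces of color $c$ have total volume $1-2\varepsilon$ and at most $\varepsilon$ of this is $c$-away, so the main server of $c$ carries color-$c$ pieces of total volume at least $1-3\varepsilon$; for any finished color the main server carries a single merged color-$c$ piece of volume at least $1-2\varepsilon$. If two distinct colors shared a main server, that server would then hold volume at least $2(1-3\varepsilon)=2-6\varepsilon$, which exceeds $1+\varepsilon$ whenever $\varepsilon\le 1/32$, a contradiction.

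Third, I would pin the special piece (of volume $6\varepsilon$) on whichever server $s^{*}$ hosts it. By the previous step $s^{*}$ is the main server of some color $c^{*}$, so in addition to the $6\varepsilon$ of the special piece it must carry color-$c^{*}$ pieces of total volume at least $1-3\varepsilon$ (if $c^{*}$ is unfinished) or at least $1-2\varepsilon$ (if $c^{*}$ is finished). Either way the total load on $s^{*}$ is at least $1+3\varepsilon>1+\varepsilon$, contradicting the capacity constraint on $\ONL$.

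The hard part will be the volume bookkeeping around the pigeonhole step: one must verify that the at-most-one leftover piece of color $c$ (of volume strictly less than $\varepsilon$) does not spoil the $1-3\varepsilon$ estimate on the main server, handle the subcase $c^{*}\in\{1,2,3\}$ in which the $2\varepsilon$ of color-$c^{*}$ mass sitting inside the special piece itself affects the identification of the main server of $c^{*}$, and confirm that the argument runs uniformly at the start of \emph{every} round because merging small color-$c$ pieces into larger small color-$c$ pieces preserves the total small-volume-of-color-$c$ budget of $1-2\varepsilon$ used in the pigeonhole.
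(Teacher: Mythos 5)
Your proof is correct and follows essentially the same route as the paper's: the core step in both is that under a no-deficient-color assumption each server is the main server of exactly one color, so the server hosting the $6\varepsilon$ special piece also carries at least $1-3\varepsilon$ volume of its color's non-special pieces and thus exceeds the $1+\varepsilon$ capacity. The one place you diverge is in handling the word ``unfinished'': the paper runs a separate, somewhat more involved argument showing the special piece cannot sit on the main server of a finished or non-deficient color, whereas you observe up front that a finished color has no small pieces and hence can never be deficient, which collapses the lemma to ``some color is deficient'' and is arguably cleaner.
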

\begin{proof}
After initialization  and after each round there exists (1) the special piece of volume $6 \varepsilon$ that is not single-colored and (2)
for each color there exist small single-colored pieces of 
	total volume at least $1-2\varepsilon$. {\em Now suppose by contradiction that no color is deficient.} Then for each color
	$c$ the total volume of small $c$-away pieces is less than $\varepsilon$, i.e. the volume of the small pieces on the main server for $c$ is at least
	$1-3\varepsilon$. As no server can have pieces of total volume more than $1+\varepsilon$ assigned to it and $\varepsilon \le 1/8$, 
	it follows that the non-special pieces on the main server of $c$ require volume more than $(1+\varepsilon)/2$,
	and, thus, each server can be the main server for at most one color. As there are as many colors as there are server, each server is the main server for exactly one color and each color has exactly one main server. 
	
	Now consider the server $s^*$ on which the special piece 
	is placed and let it be the main server for some color $c$.
	Then the total volume of the pieces on $s^*$ is $6 \varepsilon$ for the special piece.
	If $c$ is not deficient, $s^*$ has load at least $1 - 3\varepsilon$ for the non-special pieces of color $c$.
        Thus, the server's load is at least $1+3\varepsilon$ which is not possible.
				Hence, there must exist a deficient color.	
							
Next we show that there is always an unfinished deficient color. This is trivially true after initialization as all colors are unfinished. Let us now consider the end of a round.
 Note that every color $c$ that is finished has a non-special piece of volume at least $1-2\varepsilon$ and, thus, {\em the special piece cannot be placed on the main server of a finished color $c$}.
Recall that every non-deficient color has pieces of total volume at least $1-\varepsilon$ on its main server. Thus, {\em the special piece cannot be placed the main server of any non-deficient color}. 
Thus, the special piece can  only be placed on a server that is not the main server of a finished deficient or a non-finished color. If every deficient color is finished, every color has a main server and the special piece cannot be placed on any of them. As, however, there are as many servers as there are colors, it would follow that the special piece is not placed on any server, which is not possible. Thus, there must exist a deficient unfinished color.
\end{proof}

\begin{lemma}\label{onl:lemma:smallsizes}
For each color $c$ it holds at the beginning and end of each round that almost all  small pieces of color $c$ have volume $2^i/k$ for some integer $i$ and the other small piece has even smaller volume. 
\end{lemma}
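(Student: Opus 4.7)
The plan is to prove the lemma by induction on the number of rounds completed, maintaining the invariant stated in the lemma as the induction hypothesis. Specifically, I will show that at the end of round $r$ (equivalently, at the start of round $r+1$) the following holds for every color $c$: there exists a non-negative integer $i_c$ such that every small piece of color $c$ except possibly one has volume $2^{i_c}/k$, and the exceptional leftover piece (if present) has volume \emph{strictly} less than $2^{i_c}/k$.

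For the base case I would verify that after the initial configuration and the two initial edge insertions that merge the three special pieces, every small piece is an isolated vertex of volume $1/k = 2^0/k$ and no leftover pieces exist; so the invariant holds with $i_c = 0$ for each color $c$. For the inductive step, let $c^\ast$ be the deficient unfinished color chosen at the start of round $r$, and set $i := i_{c^\ast}$. Since the round only touches pieces of color $c^\ast$, the invariant is trivially preserved for every other color, and I only need to analyze color $c^\ast$.

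Let $n$ be the number of small pieces of color $c^\ast$ of volume $2^i/k$ at the start of round $r$. Pairwise merging produces $\lfloor n/2\rfloor$ pieces of volume $2^{i+1}/k$, and possibly one unpaired piece of volume $2^i/k$ when $n$ is odd. I would split into two cases. If $n$ is even, no new unpaired piece is created; the previous leftover (if any) is untouched and, by the inductive hypothesis, has volume strictly less than $2^i/k < 2^{i+1}/k$. If $n$ is odd, one piece of volume $2^i/k$ remains; when no previous leftover exists, this piece becomes the new leftover of volume $2^i/k < 2^{i+1}/k$, and when a previous leftover of volume $v < 2^i/k$ exists, the construction of $\sigma_{\ONL}$ merges the two into a single new leftover of volume $v + 2^i/k < 2 \cdot 2^i/k = 2^{i+1}/k$. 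In every case, almost all small pieces of color $c^\ast$ have volume $2^{i+1}/k$, and the leftover (if any) has strictly smaller volume; setting $i_{c^\ast} \leftarrow i+1$ preserves the invariant.

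The only delicate point, and the one I expect to require the most care, is the strict-inequality bookkeeping for the leftover: the inductive hypothesis must supply a strict bound $v < 2^i/k$ (not merely $v \le 2^i/k$), because only then does the arithmetic $v + 2^i/k < 2^{i+1}/k$ close the induction. This is exactly why $\sigma_{\ONL}$ maintains a single leftover piece per color by eagerly merging any unpaired piece with the existing leftover; I would emphasize that this design choice is what guarantees the invariant propagates through every round.
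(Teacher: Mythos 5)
Your proof is correct and follows essentially the same route as the paper: induction on rounds, with the key observation that the unpaired piece is merged with the strictly smaller previous leftover so that the new leftover stays strictly below $2^{i+1}/k$. Your write-up is somewhat more detailed (explicit even/odd case split), and your base case index $i=0$ is in fact the right one for isolated vertices of volume $1/k$.
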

\begin{proof}
By induction on the number of rounds. The claim holds after initialization for $i = 1$ for every color. During each round for some color $c$ 
the pieces of  color $c$ and volume $2^i/k$ are merged pairwise, and the possible left-over piece of volume $2^i/k$ is merged with the leftover piece of earlier rounds, if it exists. From the induction claim it follows that the leftover piece  of earlier rounds has volume less than $2^i/k$. Thus, the resulting leftover piece has volume less than $2^{i+1}/k$.
Furthermore, the pieces of the other colors remain unchanged.
Thus, the claim follows.
\end{proof}

\begin{lemma}\label{onl:lemma:finalsteps}
At the beginning of the final merging steps
there is exactly one unfinished deficient color left and there are
at least $2^{j+3}$ small pieces have volume $\varepsilon/2^j$ for some integer $j \ge 1$.
\end{lemma}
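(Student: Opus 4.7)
The statement has three parts: (i) the number of unfinished deficient colors at the end equals exactly one; (ii) the regular small pieces of that color share a common volume of the form $\varepsilon/2^j$ with $j\ge 1$; and (iii) the number of such pieces is at least $2^{j+3}$. The plan is to derive (i) directly from the stopping rule combined with Lemma~\ref{onl:lemma:split-server}, to read (ii) off Lemma~\ref{onl:lemma:smallsizes} together with the fact that $c^*$ is still unfinished, and to establish (iii) by a simple volume-conservation count using the assumption $\varepsilon\le 1/32$.

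For (i), recall that a round is triggered whenever at least two unfinished deficient colors exist. Therefore the round loop terminates precisely when at most one remains. On the other hand, Lemma~\ref{onl:lemma:split-server} guarantees that after every round there is still \emph{some} unfinished deficient color. These two facts force the termination state to have exactly one unfinished deficient color; call it $c^*$.

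For (ii) and (iii), I first observe that throughout all rounds, the total volume of small single-colored pieces of color $c^*$ is preserved at $1-2\varepsilon$. Indeed, rounds that process a color $c\neq c^*$ never touch pieces of color $c^*$, and small pieces of $c^*$ can only be merged into a non-special large piece when $c^*$ becomes finished, which has not yet happened. Lemma~\ref{onl:lemma:smallsizes} then says that almost all small pieces of color $c^*$ have the same volume $2^i/k$ for some integer $i\ge 0$, plus possibly one leftover piece of strictly smaller volume. Since $c^*$ is still unfinished, $2^i/k<\varepsilon=2^m/k$, so $i\le m-1$; setting $j:=m-i\ge 1$ gives the required volume $\varepsilon/2^j$. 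Letting $N$ be the number of the regular pieces and $v_L<\varepsilon/2^j$ the leftover volume,
\[
N\cdot\frac{\varepsilon}{2^j}+v_L=1-2\varepsilon,
\quad\text{so}\quad
N\;\ge\;\frac{(1-2\varepsilon)2^j}{\varepsilon}-1.
\]
Since $\varepsilon\le 1/32$, we have $(1-2\varepsilon)/\varepsilon\ge 30$, whence $N\ge 30\cdot 2^j-1\ge 8\cdot 2^j=2^{j+3}$ for every $j\ge 1$.

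The only mildly delicate point, and the one to be careful about when writing out the argument in full, is the volume-conservation claim used in the counting step: one must verify that no small piece of color $c^*$ has been absorbed into a non-special large piece or otherwise reclassified during the processing of other colors. Once this is spelled out (by going through the round description case by case), the arithmetic for the $2^{j+3}$ bound is straightforward given the assumption $\varepsilon\le 1/32$.
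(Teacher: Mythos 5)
Your proposal is correct and follows the same overall structure as the paper's proof: exactness of the single unfinished deficient color via the stopping rule plus Lemma~\ref{onl:lemma:split-server}, the common volume $\varepsilon/2^j$ with $j\ge 1$ via Lemma~\ref{onl:lemma:smallsizes} and the unfinished condition, and a counting argument for the $2^{j+3}$ bound. The only real difference is in that last counting step: the paper counts \emph{pieces}, starting from the $k-2\varepsilon k$ initial singletons, rounding down to a power of two $k'$, and tracking that after $m-j$ rounds at least $k'/2^{m-j}\ge 2^{j+4}-2^j$ regular pieces survive; you instead count \emph{volume}, using that the total small-piece volume of color $c^*$ is conserved at $1-2\varepsilon$ and dividing by $\varepsilon/2^j$. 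Your version is slightly cleaner in that it avoids tracking the number of rounds executed for $c^*$, at the price of the conservation claim you correctly flag as the point needing verification; that claim does hold, since rounds only ever merge small pieces of a single color with each other (the special pieces are disjoint from the singletons, and $c^*$ is never finished before the final merging steps), so the gap is minor and easily closed. Both counts comfortably exceed $2^{j+3}$ (yours gives roughly $30\cdot 2^j$, the paper's $15\cdot 2^j$).
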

\begin{proof}
Lemma~\ref{onl:lemma:split-server} holds after each round, thus, also after the final round. It shows that there is still at least one deficient unfinished color. As there are no more rounds, there at most one deficient unfinished color, which implies that there is exactly one deficient unfinished color.
As it is unfinished,
all its small pieces have volume less than $\varepsilon$.
For the rest of the proof we only consider small pieces of this color.

Recall that $\varepsilon k = 2^m$ and $k \ge 2^{m+5}$.
Initially there are $k - 2\varepsilon k \ge 2^{m+5}-2^{m+1}$ small pieces of volume $1/k$ each.
Let $k'$ be the largest power of 2 that is at most $k - 2\varepsilon k$.
It follows that $k -2\varepsilon k \ge k' > k/2 - \varepsilon k
\ge 2^{m+4}-2^m$. Thus, initially there are at least $k'$ small pieces of volume $1/k = \varepsilon / 2^m$ each. Let $j$ be any integer with $0 \le j \le m$ such that exactly $m-j$ rounds were executed for this color.
Thus, there are at least $k'/2^{m-j}$ pieces of volume
$2^{m-j} \varepsilon / 2^m = \varepsilon/2^j$ at the beginning of the final merging steps. 
Note that
$k'/2^{m-j} \ge (2^{m+4}-2^m)/2^{m-j} = 2^{j+4}-2^j \ge 2^{j+3}$.
Thus, there are at least $2^{j+3}$ pieces of volume $\varepsilon/2^j$.
As the color is unfinished, each small piece has volume less than $\varepsilon$, i.e. $j \ge 1$.
Thus the lemma holds.
\end{proof}

Next we analyze how many rounds are performed for a given color until it is finished.
Consider any color $c$.
The number of rounds necessary to increase the volume of almost all small pieces of color $c$  from $1/k$ to $\varepsilon$ is $\log (\varepsilon k)$ as
$\varepsilon k$ is a power of 2. Each round roughly halves the number of small pieces. Thus, we only have to show that  there are enough small pieces available initially so that $\log (\varepsilon k)$ many rounds are possible for color $c$.

\begin{lemma}
For each finished color $\log(\varepsilon k)$ many rounds are executed.
\end{lemma}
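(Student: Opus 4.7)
The proof reduces to two easy observations, both sketched in the paragraph preceding the lemma, and my plan is to make each of them explicit. First, I would note that Lemma~\ref{onl:lemma:smallsizes} already tells us the common volume of almost all small pieces of color $c$ is $2^{r-1}/k$ at the start of round $r$ and doubles each round. Since color $c$ is declared finished exactly when this common volume first reaches $\varepsilon$, and since $\varepsilon k$ is assumed to be a power of $2$, finishing happens precisely at round $\log(\varepsilon k)$ --- no earlier, no later --- \emph{provided} that $\log(\varepsilon k)$ rounds can actually be performed.

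The remaining task, then, is to verify feasibility. I would track $n_r$, the number of non-leftover small pieces of color $c$ at the start of round $r$. Server $c$ initially holds $(1-2\varepsilon)k$ isolated color-$c$ vertices as volume-$1/k$ single-colored pieces, and no other server holds color-$c$ vertices, so $n_1 = (1-2\varepsilon)k$. A round pairs these same-volume pieces; by Lemma~\ref{onl:lemma:smallsizes} any leftover from the pairing has volume strictly less than the new common volume and is absorbed into the previous leftover, so it never contributes to $n_{r+1}$. Hence $n_{r+1} = \lfloor n_r/2 \rfloor$, and by induction $n_r \ge (1-2\varepsilon)k/2^{r-1} - 2$.

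Round $r$ is feasible as long as $n_r \ge 2$, and the tightest constraint is $r = \log(\varepsilon k)$, which yields $n_r \ge 2(1-2\varepsilon)/\varepsilon - 2$. Since $\varepsilon \le 1/32$, this is comfortably above $2$, so all $\log(\varepsilon k)$ rounds can indeed be performed for color $c$; combined with the doubling observation above, the lemma follows. I do not anticipate any real obstacle: the only point requiring care is keeping the leftover piece separate from the running count $n_r$, and Lemma~\ref{onl:lemma:smallsizes} already handles this, reducing everything to the geometric-halving calculation that $(1-2\varepsilon)k \gg \varepsilon k = 2^{\log(\varepsilon k)}$ in the assumed parameter range.
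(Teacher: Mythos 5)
Your proof is correct and follows essentially the same route as the paper's: both reduce the lemma to counting the initial $(1-2\varepsilon)k$ color-$c$ singleton pieces and verifying that after $\log(\varepsilon k)$ halvings enough non-leftover pieces remain (the paper rounds the initial count down to a power of two instead of carrying a $-2$ error term through the floor recursion, but the calculation is the same). No gaps.
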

\begin{proof} Fix a color $c$ and consider in this proof only pieces of color $c$.
As  $\varepsilon \le 1/8$ and each initial small piece is a single vertex, there are $k - 2 \varepsilon k \ge 3k/4$ many such small pieces initially.
Let $k'$ be the largest power of 2 that is at most $3k/4$. Note that $k' > 3k/8$.
Thus the number of small pieces of volume
at least $\varepsilon$ is at least $k' /2^{\log (\varepsilon k)} > 3/(8\varepsilon) \ge 3$. Hence for each  finished color $\log (\varepsilon k)$ rounds will
be executed.
\end{proof}

As there are $\ell$ different colors, it suffices to show that in almost every
round algorithm ${\ONL}$ moves pieces with total volume $\Omega(\varepsilon)$ to achieve
the desired lower bound of $\Omega(\varepsilon \ell \log(\varepsilon k))$ for
the cost of ${\ONL}$.

\begin{lemma}
	In one round of the above process, except in the last round for each color,
	${\ONL}$ moves vertices with volume
	$\Omega(\varepsilon)$. In total, the algorithm moves vertices with volume 
	$\Omega(\varepsilon \ell \log(\varepsilon k))$
\end{lemma}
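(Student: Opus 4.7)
The plan is to prove that each non-last round for each color incurs $\Omega(\varepsilon)$ cost for $\ONL$, and then to sum over the $\Omega(\ell\log(\varepsilon k))$ non-last rounds. The latter count is immediate from the previous lemma: each of the $\ell-1$ finished colors contributes $\log(\varepsilon k)$ rounds, hence at least $(\ell-1)(\log(\varepsilon k)-1)=\Omega(\ell\log(\varepsilon k))$ non-last rounds in total. Multiplying by the per-round bound gives the claimed total $\Omega(\varepsilon\ell\log(\varepsilon k))$.

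For the per-round bound, fix a non-last round processing deficient color $c$ with piece volume $2^i/k$, so $2^{i+1}/k<\varepsilon$ and hence $2^i/k<\varepsilon/2$. Combining the deficient condition with Lemma~\ref{onl:lemma:smallsizes} (which bounds the leftover piece's volume below $2^i/k$), the number of $c$-away pieces of volume exactly $2^i/k$ at the round's start is at least $L_{\mathrm{away}}\ge(\varepsilon-2^i/k)/(2^i/k)\ge \varepsilon k/(2\cdot 2^i)$. The adversary always selects cross-server pairs when any exist, and each such merge forces $\ONL$ to relocate one of the two pieces at minimum cost $2^i/k$. A token-game argument then gives that the number of cross-server merges the adversary can force is at least $\lceil(N-L_{\max})/2\rceil$, where $N$ is the total number of volume-$2^i/k$ pieces of color $c$ and $L_{\max}$ is the maximum such count on any single server.

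It therefore suffices to establish $N-L_{\max}\ge L_{\mathrm{away}}/2$; multiplying by $2^i/k$ then yields per-round cost $\ge \varepsilon/4$. If $L_{\max}=M$ sits on the main server, then $N-L_{\max}=L_{\mathrm{away}}$ directly. Otherwise $L_{\max}=L_a$ for some non-main server $a$; combining $V_a\le V_m$ (definition of main server) with the decomposition $V_m\le M\cdot 2^i/k + 2\varepsilon + 2^i/k$ (large-piece plus leftover contribution on the main server) yields $L_a\le M+2\varepsilon k/2^i+1$, from which $N-L_{\max}=M+(L_{\mathrm{away}}-L_a)\ge L_{\mathrm{away}}/2$ follows by a short manipulation using $L_{\mathrm{away}}\ge \varepsilon k/(2\cdot 2^i)$.

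The main obstacle is the non-main case of the bound on $N-L_{\max}$, especially the regime where $\ONL$ tries to concentrate nearly all $c$-away pieces on a single non-main server so as to make $L_a$ large compared to $M$. Handling this cleanly requires invoking the server-capacity constraint $1+\varepsilon$ and the fact that the total color-$c$ volume is close to $1$, to conclude that either the main server still carries enough volume-$2^i/k$ tokens for the adversary to force $\Omega(L_{\mathrm{away}})$ cross-server merges, or $\ONL$ has already paid $\Omega(\varepsilon)$ in earlier moves to reach such a concentrated state. Either way the round-cost of $\Omega(\varepsilon)$ can be charged.
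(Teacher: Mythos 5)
Your overall strategy is the same as the paper's: show that in every non-last round the deficiency condition forces $\Omega(\varepsilon)$ worth of cross-server merges, then multiply by the $\Omega(\ell\log(\varepsilon k))$ non-last rounds coming from the preceding lemma. The first and last parts are fine. The gap is in the key quantitative step, the claim that $N-L_{\max}\ge L_{\mathrm{away}}/2$ when the most-loaded server $a$ is not the main server. Writing $N=M+L_{\mathrm{away}}$ (with $M$ the count on the main server), your inequality $L_a\le M+2\varepsilon k/2^i+1$ only yields $N-L_a\ge L_{\mathrm{away}}-2\varepsilon k/2^i-1$. Deficiency gives you $L_{\mathrm{away}}\ge \varepsilon k/2^{i+1}$, but the subtracted term $2\varepsilon k/2^i$ equals $4\cdot\varepsilon k/2^{i+1}$, i.e.\ four times your guaranteed lower bound on $L_{\mathrm{away}}$, so the right-hand side can be negative and the ``short manipulation'' does not close. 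You correctly identify in your last paragraph that the missing ingredient is the total color-$c$ volume, but you do not carry it out, and the fallback of ``charging $\Omega(\varepsilon)$ to earlier moves $\ONL$ made to reach the concentrated state'' is not sound as stated: those moves occurred in earlier rounds whose cost is already being counted, so this charge would double-count.

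The fix is exactly the additional fact the paper exploits: for an unfinished color the small pieces have total volume about $1-2\varepsilon$, a constant far exceeding $\varepsilon$. With that, your case analysis closes: either $M\ge\varepsilon k/2^{i+2}$, in which case $N-L_a\ge (L_{\mathrm{away}}-L_a)+M\ge M$ already gives $\Omega(\varepsilon k/2^i)$ forced cross-server merges; or $M<\varepsilon k/2^{i+2}$, in which case $L_{\mathrm{away}}=N-M\ge(1-3\varepsilon)k/2^i\gg 2\varepsilon k/2^i$ and your subtraction becomes harmless. The paper packages this same dichotomy differently --- it splits on whether the main server carries at least $\varepsilon/2$ of small-piece volume, pairing each away piece cross-server in the first case and arguing in the second case that collapsing total volume $\approx 1-2\varepsilon$ onto a server that holds less than $\varepsilon/2$ forces most of the volume to cross servers --- but the substance is the same. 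As written, your proof has a genuine gap at this step; with the total-volume ingredient inserted it becomes a correct (counting-flavored) rendition of the paper's argument.
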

\begin{proof} Fix a color $c$ and only consider pieces of color $c$ in this proof.
Note that when two pieces of different servers, of volume $2^i/k$ each, are merged,
at least one of them has to change its server, resulting in a cost of $2^i/k$ for the algorithm.
We proved in Lemma~\ref{onl:lemma:split-server} that at the beginning of each round a deficient color exists.
A deficient color  has away pieces of
	total volume at least $\varepsilon$, i.e., there are small pieces of total volume
	at least
	$\varepsilon$ not on the main server. During a round, as shown by
	Lemma~\ref{onl:lemma:smallsizes}, almost all of these pieces have volume
	$2^i/k$ for some integer $i$ and their total contribution to the total volume of all away pieces of color $c$ is larger than $\varepsilon - 2^i/k$ (subtracting out the volume of  the potentially existing leftover piece of even smaller volume). Thus, as long as $\varepsilon - 2^i/k \ge \varepsilon/2$, i.e., in all but the last round, the total
	volume of all the away pieces excluding the leftover piece is larger then $\varepsilon/2$. In the following when we talk about a small piece, we mean a small piece that is {\em not} the leftover piece and we fix a round that is not the last round.
	We will show that at least $\varepsilon/2$ volume is merged by pieces on different servers in this round, resulting in at least $\varepsilon/4$ cost for the algorithm.
	
	Now consider two cases: (1) If the main server  $s^*$ contains small pieces of total volume at least $\varepsilon/2$, then every away piece can be merged with a small piece either on $s^*$ or on a different server. Thus at least $\varepsilon/2$ volume is merged by pieces on different servers. (2) If, however, the main server contains small pieces of total volume less than $\varepsilon/2$, then {\em every} server contains
small pieces of total volume less than $\varepsilon/2$. Small pieces of different servers are merged until all remaining small pieces are on the same server. However, this server has less than $\varepsilon/2$ volume of small pieces, i.e., more than
$\varepsilon/2$ volume must have been merged between different servers.
Thus in both cases the algorithm has cost at least $\varepsilon/4$.
	The second claim
	follows immediately from the discussion preceding the lemma.
\end{proof}

Next, we prove an upper bound on the volume of vertices moved by $\OPT$.
\begin{lemma}
	In total, $\OPT$ moves vertices with volume $O(\varepsilon)$.
\end{lemma}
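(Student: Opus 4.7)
The plan is to exhibit an explicit placement used by $\OPT$ and bound the total volume that must be moved. Since the initial configuration places every color-$c$ vertex on server $c$ and $\OPT$ moves each vertex at most once (after the first edge insertion), the cost of $\OPT$ equals the total volume of vertices that end up on a server different from their starting server under the chosen placement.

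First, I would catalogue the final pieces. By the construction there are exactly $\ell$ final pieces, each of volume $1$: the ``big'' piece, which contains the three special sub-pieces (of colors $1,2,3$, of total volume $6\varepsilon$) together with all color-$c^*$ vertices that were not routed to non-special pieces in the final merging step; for each color $c\notin\{1,2,3,c^*\}$, a monochromatic color-$c$ piece of volume $1$ (produced when $c$ was finished in its last round); and for each $c\in\{1,2,3\}\setminus\{c^*\}$, a piece containing the $1-2\varepsilon$ non-special color-$c$ vertices together with an $O(\varepsilon)$ amount of color-$c^*$ material injected during the final merging step.

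Next, I would describe $\OPT$'s placement. Place the big piece on some server in $\{1,2,3\}\cap\{c^*\}$ in Case~1 ($c^*\in\{1,2,3\}$) or on server $c^*$ in Case~2 ($c^*\notin\{1,2,3\}$); place each monochromatic piece of color $c\notin\{1,2,3,c^*\}$ on server $c$; for each $c\in\{1,2,3\}\setminus\{c^*\}$ place the corresponding mixed piece on server $c$. Every server receives exactly one piece of volume $1$, so the capacity constraint is satisfied. I then count moved volume piece by piece. Each monochromatic color-$c$ piece with $c\notin\{1,2,3,c^*\}$ is already entirely on its home server $c$ and contributes $0$. The big piece contains at most $6\varepsilon$ volume of non-$c^*$ vertices (the three specials) and at most $O(\varepsilon)$ color-$c^*$ vertices must be shipped out to the mixed pieces at colors in $\{1,2,3\}\setminus\{c^*\}$, so it contributes $O(\varepsilon)$. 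Each of the at most three mixed pieces of type $c\in\{1,2,3\}\setminus\{c^*\}$ contains only $O(\varepsilon)$ foreign color-$c^*$ volume that must be moved to server $c$.

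Summing these contributions over the $O(1)$ ``mixed'' pieces yields a total moved volume of $O(\varepsilon)$, proving the lemma. The main obstacle is purely bookkeeping: one must verify that in both cases, and for every admissible value of the parameter $j$ from Lemma~\ref{onl:lemma:finalsteps}, the final merging step transfers only $O(\varepsilon)$ total volume of color $c^*$ out of the big piece into the mixed pieces (indeed, at most three sub-pieces of color $c^*$ of volume $2\varepsilon$ each are detached), so the off-color content of every final piece is bounded by a constant times $\varepsilon$.
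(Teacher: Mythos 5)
Your proposal is correct and follows essentially the same route as the paper's proof: $\OPT$ moves everything to the final placement after the first insertion, and the only off-home volume in that placement is the $O(\varepsilon)$ special material inside the big piece plus the $O(1)$ detached color-$c^*$ sub-pieces of volume $O(\varepsilon)$ each that are merged into other colors' pieces in the final merging step. Your version just spells out the piece-by-piece bookkeeping that the paper leaves implicit.
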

\begin{proof}
Right at the beginning $\OPT$ places the special piece of volume $6\varepsilon$ on
server $s^*$ and moves the small pieces of color $c^*$ that are merged in
the final merging step with a different color to the 
  main server for the corresponding color. Thus,
none of the other steps cause any cost for $\OPT$. Thus,
	$\OPT$ only has cost $O(\varepsilon)$.
\end{proof}

The previous two lemmas imply a lower bound on the competitive ratio of
$\Omega(\ell \log(\varepsilon k))$ for deterministic algorithms. This finishes
the proof of the theorem.

\subsection{Lower Bounds for Randomized Algorithms}
\label{onl:sec:lbs-rand}

\begin{theorem}
	Any randomized online algorithm must have a competitive ratio of
	$\Omega(\lg \ell + \lg k)$.
\end{theorem}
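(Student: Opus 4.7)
The plan is to prove this lower bound via Yao's minimax principle: it suffices to exhibit a probability distribution $\mathcal{D}$ over input sequences such that every deterministic online algorithm has expected competitive ratio $\Omega(\lg\ell+\lg k)$ against $\mathcal{D}$. Since $\lg\ell+\lg k=\Theta(\max\{\lg\ell,\lg k\})$, I would prove two separate bounds, $\Omega(\lg k)$ and $\Omega(\lg\ell)$, using two distributions, and then combine them.

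For the $\Omega(\lg k)$ term, I would recycle the structure of Theorem~\ref{onl:thm:lb-det} restricted to $\ell=O(1)$ servers and randomize the adversary's choices. Recall that in the deterministic construction, each round doubles the committed volume of small pieces of some deficient color, and the algorithm is provably forced to move $\Omega(\varepsilon)$ volume per round because a deficient color must exist. To lift this to randomized algorithms, I would fix $\ell$ to a small constant (say $\ell=3$) and, at each round, pick uniformly at random which pair of equal-volume pieces of the current deficient color to merge next. Using a potential-function argument analogous to the standard randomized paging lower bound, one shows that any deterministic algorithm faces $\Omega(\varepsilon)$ \emph{expected} movement per round, since before each random merge, no deterministic strategy can guarantee that both chosen pieces already sit on the same server. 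Across the $\Theta(\lg(\varepsilon k))=\Theta(\lg k)$ rounds this yields expected cost $\Omega(\varepsilon\lg k)$, while $\OPT$, being offline, still pays only $O(\varepsilon)$ by planning its initial placement based on the realized random merges.

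For the $\Omega(\lg\ell)$ term, I would reduce from the randomized online paging problem with $\ell$ pages and cache size $\ell-1$, for which Fiat--Karp--Luby--McGeoch--Sleator--Young established a tight randomized lower bound of $H_{\ell-1}=\Omega(\lg\ell)$ against oblivious adversaries. The encoding I have in mind mirrors the upper bound's reduction to paging from Section~\ref{onl:sec:marking-scheme} but run in reverse. Concretely, I would identify each of the $\ell$ colors with a page, partition each server's source vertices into a single ``token'' piece of volume $\approx\varepsilon$ together with filler vertices, and then have the random adversary issue, for each page request, a merge of the token of the corresponding color with a tiny piece whose current location encodes the online algorithm's cache state. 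The invariant to maintain is that at any time, exactly $\ell-1$ tokens are monochromatically hosted on their own source servers (``cached'') and one token is displaced (``out of cache''). A page-fault merge then forces the online algorithm to move $\Omega(\varepsilon)$ volume to restore the cc-condition without violating server capacity, whereas $\OPT$ can serve the request sequence as cheaply as the optimal offline pager, namely at a $\Theta(1/\lg\ell)$ factor smaller cost.

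The main obstacle will be the second reduction: one has to design the initial coloring and the merge sequence so that (i)~the cc-condition is consistently enforceable, (ii)~the $(1+\varepsilon)k$ capacity constraint remains satisfiable throughout (including for $\OPT$, otherwise the ratio is undefined), and (iii)~a ``cache miss'' in the paging instance corresponds exactly, up to a $\Theta(\varepsilon)$ scaling, to a forced vertex move of the online algorithm, while a ``cache hit'' forces no move. Property~(iii) in particular must hold in both directions, so that the competitive ratio of the online graph-partitioning algorithm is sandwiched between constant multiples of the competitive ratio of the induced paging algorithm. Making this simulation oblivious-adversary-compatible, i.e.\ ensuring the random bits of the request sequence are fixed in advance and invisible to the graph-partitioning algorithm, is precisely what Yao's principle and the oblivious-adversary model give us for free.
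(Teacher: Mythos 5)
Your high-level structure (Yao's principle, splitting into an $\Omega(\lg k)$ bound and an $\Omega(\lg \ell)$ bound) matches the paper, but both halves of your plan have genuine gaps. For the $\Omega(\lg k)$ part, your distribution is not oblivious: you propose to merge random pairs ``of the current deficient color,'' but which color is deficient depends on the algorithm's placements, so this is an adaptive adversary and Yao's principle does not apply to it. If you instead drop the deficiency mechanism and merge random same-colored pairs, a lazy algorithm that keeps every color on its source server pays nothing (all pieces of one color have total volume at most $1 \le 1+\varepsilon$), so no cost is forced. The paper's Proposition~\ref{onl:prop:lb-lg-k} avoids both problems by picking a uniformly random perfect matching over \emph{all} pieces in each of $\lg k$ rounds, across colors: then any two matched pieces lie on different servers with probability $\Omega(1)$ regardless of the algorithm's schedule, giving expected cost $\Omega(\ell)$ per round, while $\OPT$ pays only $O(\ell)$ in total (it moves each vertex once). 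The comparison point is $O(\ell)$, not $O(\varepsilon)$ as you claim---your assertion that $\OPT$ pays $O(\varepsilon)$ cannot hold for cross-color random merges, and for same-color merges the online algorithm pays nothing either.

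For the $\Omega(\lg \ell)$ part, your reduction from randomized paging with cache size $\ell-1$ is a plausible blueprint, but you explicitly defer the three properties (cc-condition enforceability, capacity feasibility for $\OPT$, and the two-sided correspondence between cache misses and forced moves) that constitute the entire technical content; as written it is a plan, not a proof. The paper instead gives a direct, self-contained construction: each server's vertices are first merged into monochromatic pieces of volume $2\varepsilon$, three of them are joined into a ``special'' piece of volume $6\varepsilon$ whose presence forces, by a capacity-counting argument, that at every time some unfinished server is \emph{split} (has $\ge \varepsilon$ of its monochromatic volume elsewhere); then in round $i$ a uniformly random unfinished server is ``finished,'' so the split server is hit with probability $1/(\ell-i+1)$ and costs the algorithm $\varepsilon$, yielding the harmonic sum $\Omega(\varepsilon \lg \ell)$ against $\OPT$'s cost of $O(\varepsilon)$. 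If you want to pursue your reduction route, the capacity-pressure gadget (the special piece) and the split-server counting argument are exactly the missing ingredients you would need to make property~(iii) hold.
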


\begin{proposition}
	If $\varepsilon < 1/6$, then any randomized online algorithm must have a
	competitive ratio of $\Omega(\lg \ell)$.
\end{proposition}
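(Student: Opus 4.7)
The plan is to invoke Yao's minimax principle and exhibit a distribution $\mathcal{D}$ over request sequences such that every deterministic online algorithm incurs expected cost $\Omega(\log\ell)\cdot\Exp{\cost(\OPT)}$ on $\mathcal{D}$. The distribution will be engineered so that serving it is essentially equivalent to solving a randomized online paging instance with $\ell$ pages and cache of size $\ell-1$. Since the classical result of Fiat, Karp, Luby, Lipton, Sleator, and Young gives a lower bound of $\Omega(\log\ell)$ for randomized online paging against an oblivious adversary, this will transfer to our setting.

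First, I would set up a ``scaffolding'' that is oblivious to the algorithm's randomness. Initialize the $\ell$ servers with $k$ source vertices each and perform a preprocessing phase of edge insertions that partitions each color's source vertices into small pieces of volume roughly $\varepsilon/2$ each. This preprocessing is fixed in advance and the offline optimum handles it at zero cost. Then, create a single multi-colored \emph{special piece} $p^*$ of volume $\Theta(\varepsilon)$ by merging one small piece each from a few distinct colors. The hypothesis $\varepsilon<1/6$ ensures that $p^*$ can be placed on any server at the cost of moving only $\Theta(\varepsilon)$ volume of that server's source color (and no more), so that the server currently hosting $p^*$ is uniquely identifiable and plays the role of the ``missing page'' in an $(\ell-1)$-sized cache over $\ell$ servers.

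Next, I would describe the adversarial request sequence in $\mathcal{D}$, driven by the adversary for randomized online paging. Each ``request for page $s$'' is implemented by a short block of edge insertions designed so that, unless $p^*$ is already on server $s$, the algorithm is forced to relocate $p^*$ onto $s$ at a cost of $\Theta(\varepsilon)$. Crucially, in the online run a paging ``hit'' incurs no extra cost, while a ``miss'' costs $\Theta(\varepsilon)$; in the offline run, the optimum---knowing the whole sequence---places $p^*$ according to Belady's rule and pays only $\Theta(\varepsilon)\cdot\cost(\OPT_{\mathrm{paging}})$. Combining these identifications, the competitive ratio of any randomized online algorithm for our problem is at least that of randomized paging on $\ell$ pages, namely $\Omega(\log\ell)$.

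The main obstacle is the design of the ``request gadget'', i.e., the short edge-insertion sequence that implements a single page request. Because merges are irreversible and can only grow pieces, the scaffold of small pieces must be carefully preserved across the many gadgets that will be applied within one instance, and $p^*$ must never grow beyond what fits on any single server---this is precisely where $\varepsilon<1/6$ enters, and may require periodically ``refreshing'' $p^*$ by promoting a fresh small multi-colored piece in its place. Checking that the sequence remains feasible throughout, that the online cost is indeed dominated (up to constants) by the paging miss count, and that $\OPT$'s cost matches the offline paging optimum, is the bulk of the technical bookkeeping.
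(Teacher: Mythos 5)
Your high-level plan (Yao's principle plus a paging-style harmonic lower bound driven by a $\Theta(\varepsilon)$-volume multi-colored special piece) matches the spirit of the paper's proof, but the central gadget you defer to ``the bulk of the technical bookkeeping'' is exactly the part that does not work as described. You want a request for ``page $s$'' to force the algorithm to relocate $p^*$ onto server $s$ at cost $\Theta(\varepsilon)$. In this model the only tool is an edge insertion merging $p^*$ with some piece $q$ currently on $s$, and that merely forces the merged piece to be colocated \emph{somewhere}: the algorithm is free to move $q$ to $p^*$'s current server instead, paying only $|q|$, which can be made arbitrarily small. There is no mechanism to pin $p^*$ to a prescribed server, so a ``miss'' need not cost $\Theta(\varepsilon)$, and the identification ``server hosting $p^*$ equals the missing page'' is not forced by anything. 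Moreover, repeatedly merging $p^*$ with pieces on requested servers makes $p^*$ grow (and charges $\OPT$ as well), and ``refreshing'' $p^*$ consumes volume irreversibly; none of this is resolved in your sketch.

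The paper's construction avoids forcing $p^*$ anywhere. The special piece of volume $6\varepsilon$ is created once; a capacity/pigeonhole argument (using $\varepsilon<1/6$ and the $1+\varepsilon$ load bound) shows that at every point in time \emph{some} unfinished server $s$ must be ``split'', i.e., at least $\varepsilon$ volume of $s$'s own monochromatic $2\varepsilon$-pieces sits on other servers, which means the algorithm has \emph{already} paid at least $\varepsilon$ for color $s$. The adversary then picks a uniformly random unfinished server and merges \emph{that server's own single-colored pieces} (cost $0$ for $\OPT$), hitting a split server with probability $1/(\ell-i+1)$ in round $i$; summing over the $\ell-1$ rounds gives $\Omega(\varepsilon\log\ell)$ expected online cost against $\OPT$'s $O(\varepsilon)$. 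So the randomness plays the role of ``guessing which page is outside the cache'' rather than ``issuing a request that must be served'', and the cost is charged to the split server's own-colored volume, not to moving $p^*$. To repair your proof you would essentially have to abandon the relocation gadget and adopt this guess-the-split-server argument.
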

\begin{proof}
	We use Yao's principle~\cite{yao77probabilistic} to derive our lower bound
	and provide a randomized hard instance against a deterministic algorithm.
	The hard instance starts by merging the vertices of each server into
	monochromatic pieces of volume $2\varepsilon$ each. Now the hard instance
	arbitrarily picks three pieces with different majority colors and merges them
	into a piece of volume $6\varepsilon$ and we call this piece \emph{special}.
	Next, the hard instance proceeds in $\ell-1$ rounds. Before the first round
	all servers are \emph{unfinished}. In round $i$, the hard instance picks an
	unfinished server $s$ uniformly at random. Now the hard instance uniformly
	at random picks monochromatic pieces with color $s$ of total volume
	$1-2\varepsilon$ and merges them in arbitrary order; after that we call $s$
	\emph{finished}. When all $\ell-1$ rounds are over, a final configuration in
	which all pieces have volume $1$ is obtained as follows.  First, observe that
	there is a unique unfinished server $s^*$. Now the hard instance merges the
	special piece and monochromatic pieces of color $s^*$ of total volume
	$1-6\varepsilon$. The remaining monochromatic pieces of color $s^*$ are
	merged with the components of the finished servers from which the vertices
	of the special piece originated. All other monochromatic components of
	volume $2\varepsilon$ are merged with the large monochromatic components
	with the same color as the piece itself.

	For a given schedule, we say that an unfinished server $s$ is \emph{split}
	if monochromatic pieces with color $s$ and of volume at least $\varepsilon$
	are not scheduled on $s$.  Now observe that after each round there exists a
	server which is split: First, observe that none of the finished servers can
	store its monochromatic piece of volume $1-2\varepsilon$ together with the
	special piece of volume $6\varepsilon$. Now if none of the (unfinished)
	servers was split, one of them would contain all of its monochromatic pieces
	of total volume at least $1-2\varepsilon$ together with the special piece of
	volume $6\varepsilon$.  Thus, the total load of the server is
	$1+4\varepsilon$ which is not a valid schedule.

	Next, we show that if a server $s$ is split, then the algorithm has moved
	monochromatic pieces with color $s$ of volume at least $\varepsilon$: First,
	suppose the algorithm has scheduled all monochromatic pieces of color $s$ on
	some server $s' \neq s$.  Then the algorithm has paid at least
	$1-2\varepsilon\geq\varepsilon$ to move the monochromatic pieces of color
	$s$ to $s'$.  Second, suppose the monochromatic pieces of color $s$ are
	scheduled on at least two different servers.  Then the algorithm must have
	moved at least one monochromatic piece of color $s$ away from $s$.  Since
	$s$ is unfinished and all monochromatic pieces of $s$ have volume
	$2\varepsilon$, the algorithm has paid at least $\varepsilon$ for moving
	monochromatic pieces of color~$s$. 

	Now we analyze the cost paid by the algorithm. Observe that before round $i$
	there are $\ell-i+1$ unfinished servers and at least one of them is split.
	Let $s$ be a split server.  Thus with probability $1/(\ell-i+1)$ the hard
	instance picks the split server $s$.  It follows from the previous claims
	that the algorithm paid at least $\varepsilon$ to move pieces of color~$s$.
	Since the above arguments hold for each round, the total expected cost of
	the algorithm is
	\begin{align*}
		\sum_{i=1}^{\ell-1} \varepsilon \frac{1}{\ell-i+1}
		= \sum_{i=2}^\ell \varepsilon \frac{1}{i}
		= \Omega(\varepsilon \lg \ell).
	\end{align*}

	Next, observe that $\OPT$ never moves more than $O(\varepsilon)$ volume:
	Indeed, the hard instance only merges pieces in which all vertices have the
	same color except when (1)~creating the special piece of volume
	$O(\varepsilon)$, (2)~merging the special piece with the vertices from $s^*$
	and (3)~merging the small pieces from $s^*$ with the large pieces of the
	servers from which the special piece originated. All of these steps can be
	performed by only moving volume $O(\varepsilon)$.
	
	Thus, the competitive ratio is $\Omega(\lg \ell)$.
\end{proof}

\begin{proposition}
\label{onl:prop:lb-lg-k}
	Any randomized algorithm must have a competitive ratio of at least
	$\Omega(\lg k)$.
\end{proposition}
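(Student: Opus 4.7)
I would apply Yao's minimax principle: it suffices to exhibit a distribution over input sequences such that every deterministic algorithm has expected cost $\Omega(\lg k)\cdot\cost(\OPT)$ on a random instance drawn from the distribution. Since the target bound is independent of $\ell$, I may restrict to a constant number of servers, say $\ell=3$, and essentially rerun the deterministic lower bound of Theorem~\ref{onl:thm:lb-det} with the adversary's choices made non-adaptively and randomly. Thus the desired bound will be the ``randomized'' analogue of the $\ell=3$ special case of Theorem~\ref{onl:thm:lb-det}.

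The construction has three phases. \emph{Preamble (deterministic):} as in the proof of Theorem~\ref{onl:thm:lb-det}, each of the three servers starts with $k$ source vertices; two edge insertions merge three single-vertex pieces of distinct colors into a ``special'' piece of volume $6\varepsilon$. A load-balancing argument identical to Lemma~\ref{onl:lemma:split-server} shows that after the preamble every valid schedule has at least one deficient color. \emph{Randomized phase:} sample $c^\star\in\{1,2,3\}$ uniformly at random, and then run $m=\Theta(\lg(\varepsilon k))$ rounds; in round $i$, pair the current $c^\star$-colored small pieces of volume $2^i/k$ uniformly at random and merge each pair, exactly as in Theorem~\ref{onl:thm:lb-det} but with a random pairing instead of an adversary-chosen one. \emph{Cleanup:} merge the remaining small $c^\star$-pieces of size $2\varepsilon$ with the non-special pieces of the other two colors and with the special piece, so that every final piece has volume exactly $1$, matching the model.

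\emph{Analysis of $\OPT$:} An offline algorithm that knows $c^\star$ in advance pre-places the special piece on some server and pre-positions the two $2\varepsilon$-size $c^\star$-pieces that will eventually merge with the other colors; this costs $O(\varepsilon)$ in total, independent of the random pairings, because all remaining merges within color $c^\star$ are monochromatic and can be served on a single $c^\star$-server. \emph{Analysis of a deterministic online algorithm $\ALG$:} I will show that in each round $i$ with $2^{i+1}/k<\varepsilon$, the expected cost of $\ALG$ is $\Omega(\varepsilon)$. The key lemma, analogous to Lemma~\ref{onl:lemma:split-server}, is that any schedule satisfying the capacity constraint $1+O(\varepsilon)$ must keep a total volume $\Omega(\varepsilon)$ of the current small $c^\star$-pieces away from the main $c^\star$-server, because the server that holds the special piece and the servers whose own source volume is near $1$ leave no room to absorb all of them. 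Given this, for a uniformly random pairing of these small pieces an $\Omega(1)$-fraction of the pairs straddle two distinct servers, so $\ALG$ pays $\Omega(\varepsilon)$ in expectation per round.

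\emph{Main obstacle.} The step I expect to be delicate is the per-round claim that, regardless of how $\ALG$ rearranges pieces before seeing the random pairing, a constant fraction of the small $c^\star$-pieces remains off the main server. This is a purely structural capacity argument, but it must be uniform over all rearrangements $\ALG$ may perform between rounds and it must account for the leftover piece that may shift size classes, so it will require the same careful bookkeeping as Lemma~\ref{onl:lemma:split-server} and Lemma~\ref{onl:lemma:smallsizes}, plus the assumption $\varepsilon<1/6$ (as in Proposition~\ref{onl:prop:lb-lg-k}) to keep the capacity slack small. Summing the $\Omega(\varepsilon)$ expected per-round cost over the $\Theta(\lg k)$ rounds gives expected cost $\Omega(\varepsilon\lg k)$ for $\ALG$ versus $O(\varepsilon)$ for $\OPT$, so by Yao's principle every randomized online algorithm has competitive ratio $\Omega(\lg k)$.
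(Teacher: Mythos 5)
There is a genuine gap, and it sits exactly at the step you flagged as delicate: the claim that \emph{every} valid schedule must keep $\Omega(\varepsilon)$ volume of small $c^\star$-pieces away from the main $c^\star$-server is false once $c^\star$ is fixed in advance. The capacity argument of Lemma~\ref{onl:lemma:split-server} only shows that \emph{some} color is deficient, and the online algorithm gets to decide which one. Against your oblivious adversary, the very first merge of the randomized phase reveals $c^\star$ to the algorithm, which can then pay a one-time $O(\varepsilon)$ to consolidate all small $c^\star$-pieces onto server $c^\star$ (this is feasible: their total volume is $1-2\varepsilon$, and the $6\varepsilon$ special piece can be parked on another server $c'$ with $3\varepsilon$ of $c'$-volume displaced to the third server, making $c'$ the deficient color instead). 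From then on every uniformly random pairing of $c^\star$-pieces is internal to a single server and costs nothing, so the algorithm's total cost is $O(\varepsilon)$ against $\cost(\OPT)=O(\varepsilon)$, i.e., competitive ratio $O(1)$ rather than $\Omega(\lg k)$. Randomizing $c^\star$ per round does not rescue this either, since the identity of the target color is still revealed before any cost is incurred; this is precisely why the paper's $\Omega(\lg \ell)$ construction re-randomizes the \emph{server} choice each round over the shrinking set of unfinished servers, whereas a single color cannot carry $\lg k$ rounds of forced cost on its own.

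The paper's actual proof of this proposition avoids the issue entirely by giving up on the ``deficient color'' mechanism: in each of $\lg k$ rounds it merges \emph{all} current pieces according to a uniformly random perfect matching, so after round $i$ every piece has volume $2^i/k$. Because every piece participates in every round, there is nothing for the algorithm to consolidate: for any schedule obeying the capacity bound, at most roughly a $(1+\varepsilon)/\ell$ fraction of the other pieces can share a server with a given piece, so a uniformly random partner lies on a different server with probability $\Omega(1)$, and by linearity of expectation the algorithm moves expected volume $\Omega(\ell)$ per round, hence $\Omega(\ell\lg k)$ in total. Since $\OPT$ moves each vertex at most once it pays at most $\ell$, giving the ratio $\Omega(\lg k)$. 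If you want to keep your three-server, color-targeted construction, you would need a mechanism that prevents the algorithm from hiding the targeted color after it is revealed -- which is exactly what an all-pieces random matching provides.
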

\begin{proof}
	We use Yao's principle~\cite{yao77probabilistic} to derive our lower bound
	and provide a random instance against a deterministic algorithm. In the
	instance all pieces initially have volume $1/k$, i.e., the pieces consist of
	single vertices. The lower bounds proceeds in $\lg k $ rounds.
	In each round, we pick a perfect matching between all pieces uniformly at
	random.  Thus, after $i$ rounds, all pieces have volume $2^i/k$. Note that
	after $\lg k$ rounds all pieces have volume $1$ and we have obtained a valid
	final configuration.

	We claim that in each round the algorithm has to move volume
	$\Omega(\ell)$. Suppose we are currently in round $i$. Now consider two
	pieces $\ps$ and $\pl$ which are merged during a single round. Then the
	probability that $\ps$ and $\pl$ are assigned to different servers is
	$\Omega( (\ell-1)/\ell ) = \Omega(1)$.  Furthermore, observe that each piece
	has volume $2^i/k$ and in total there are $n/2^i$ pieces.  Now by linearity
	of expectation we obtain that the expected volume moved by the algorithm in
	round $i$ is $\Omega( 2^i/k \cdot n/2^i ) = \Omega(\ell)$.

	Next, observe that the total cost paid by the algorithm is
	$\Omega(\ell\cdot\lg k)$ since there are $\log k$ rounds. Furthermore,
	$\OPT$ never moves volume more than $O(\ell)$ because it moves each vertex
	at most once. Thus, the competitive ratio is $\Omega(\log k)$.
\end{proof}

\section*{Acknowledgements}
Monika Henzinger is supported by the Austrian Science Fund (FWF): 33775-N.
Stefan Neumann gratefully acknowledges the financial support from the Doctoral Programme “Vienna Graduate School on Computational Optimization” which is funded by the Austrian Science Fund (FWF, project no. W1260-N35) and from the European Research Council under the European Community’s Seventh Framework Programme (FP7/2007-2013) / ERC grant agreement No.~340506.
Stefan Schmid is supported by the European Research Council under the European Community's  Horizon 2020 Programme / ERC grant agreement No.~864228 and by the Austrian Science Fund (FWF): 33775-N.

{
\bibliographystyle{plain} 
\bibliography{main}
}

\appendix
\section{Omitted Proofs}
\label{onl:sec:omitted}

\addtocontents{toc}{\setcounter{tocdepth}{-10}}

\subsection{Proof of Claim~\ref{onl:claim:numberOfConfigurations}}
	Both $r$ and $m$ are vectors with one entry for each size class.
	As argued in Section~\ref{onl:sec:preliminaries}, there are only $O(1)$
	different size classes. By definition of $r$ and $m$, each entry
	is a multiple of $\delta = O(1/\varepsilon^2)$ between $0$ and $1+\gamma$.
	Therefore, there are only $O(1)$ choices for each entry of $r$ and $m$.

\subsection{Proof of Lemma~\ref{onl:lem:lowerILP}}
	Consider the solution of $\OPT$, i.e., an assignment of the pieces to
	servers at minimum moving cost. We show that this assignment implies an ILP
	solution with small cost.  In particular, for each server $s$ for which $\OPT$
	assigns at least a $(1-\gamma+\delta)$-fraction of its source vertices on
	$s$ itself, we show that we can construct an ordinary configuration.  For
	all other servers (for which $\OPT$ assigns at least a
	$(\gamma-\delta)$-fraction of their source vertices on other servers), we
	assign an extraordinary configuration.  Note for each server which has
	obtained an extraordinary configuration in the previous construction, $\OPT$
	has cost at least $\gamma-\delta$.  Since the ILP minimizes the total number
	of extraordinary configurations, we get that $\cost(\OPT)\ge (\gamma-\delta)h$.

	We now construct the configurations for ordinary servers.  Fix a server $s$
	for which $\OPT$ assigns at least a $(1-\gamma+\delta)$-fraction of its
	source vertices on $s$ itself. Let $v$ be a vector such that $v_i$ is the
	real volume of pieces in the $i$-th class, that $\OPT$ assigned to $s$. Note
	that the $v_i$ are \emph{not} rounded to multiples of $\delta$ and that
	$\|v\|=1$.  We construct a $\gamma$-valid reservation vector~$r$ for
	server~$s$ as follows.

	For every piece~$p$ in class~$i$ that is assigned on~$s$, we decrease $v_i$
	by the \emph{uncommitted} volume of the piece and add this value to $v_0$.
	For a piece~$p$ of class~$i$ which is not assigned on~$s$, we just consider
	the source vertices of~$s$ in~$p$. We increase~$v_i$ by the committed volume
	of these vertices and $v_0$ by the uncommitted volume.

	The above modifications did not increase $\|v\|_1$ if the piece $p$ is
	assigned on $s$. The total increase of $\|v\|_1$ due to pieces not
	assigned on $s$ can be at most $\gamma-\delta$ because each such increase is
	caused by an element with source-server $s$ that is not assigned on~$s$ (and
	the total volume of such vertices is only $\gamma-\delta$).
	Now we round $v_0$ up to the nearest multiple of $\delta$. This
	increases $\|v\|_1$ by at most $\delta$. Therefore, $\|v\|_1\le1+\gamma$.

	We use the final vector $v$ as our reservation
	vector $r$.
	By construction $r_0$ is at least as large as the total uncommitted
	volume
	with source server $s$, i.e., $r_0\ge m_{s0}$. Further, $r_i$ is at least as
	large as the committed volume of monochromatic pieces with source $s$, i.e.,
	$r_i\ge m_{si}$. Finally, the $r_i$ value is still as large as the total
	committed volume of class $i$ elements assigned on $s$. This means that the
	first set of constraints in the ILP still holds.

\subsection{Proof of Lemma~\ref{onl:lem:augmentation}}
	The committed volume in large pieces can be at most $1+\gamma-r_0$ because
	$\|r\|_1\le 1+\gamma$. The uncommitted volume scheduled on $s$ can be at most
	$r_0+14\epsilon$ due to Property~\ref{onl:p:two}.

\subsection{Proof of Lemma~\ref{onl:lem:slack}}
	We want to show that $\slack(s) = r_{s0}-v_u(s)\geq -14\epsilon$. 
	We prove the lemma by contradiction, i.e., assume that $\slack(s)<-14\epsilon$. 
	We will use the following claim, which we prove at the end of the subsection.
	\begin{claim}
	\label{onl:cla:budget}
		$\mathit{budget(s)}\ge v_u(s)-r_{s0}-2\epsilon = -\slack(s)-2\varepsilon$.
	\end{claim}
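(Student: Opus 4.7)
I would prove the claim by introducing the potential $\phi(s) := \mathit{budget}(s) + \slack(s) = \mathit{budget}(s) + r_{s0} - v_u(s)$ and showing, by induction on the sequence of algorithmic operations, that $\phi(s) \ge -2\varepsilon$ holds at all times. Rearranging then gives the claim $\mathit{budget}(s) \ge v_u(s) - r_{s0} - 2\varepsilon = -\slack(s) - 2\varepsilon$.

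The base case is immediate: initially $\mathit{budget}(s) = 0$, and the starting reservation satisfies $r_{s0} \ge 1 = v_u(s)$ (since $s$ begins with $k$ uncommitted source vertices), so $\phi(s) \ge 0$.

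The inductive step proceeds by case analysis over every operation that can change $\phi(s)$. The defining rule ``every out-of-balancing decrease of $\slack(s)$ by $\Delta$ is matched by an increase of $\mathit{budget}(s)$ by $\Delta$''---together with the symmetric behaviour of the balancing procedure, which moves a small piece $p$ satisfying $|p|=|p|_u$ and thereby raises $\slack(s)$ and lowers $\mathit{budget}(s)$ by the same $|p|$---is designed precisely so that $\phi(s)$ is preserved by Cost Inflation~A moves into $s$, by decreases of $r_{s0}$ under configuration changes, by Step~I with small $\ps$ (where the algorithm explicitly raises $\mathit{budget}(s)$ by $|\ps|=|\ps|_u$), and by every move inside the balancing procedure. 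The complementary operations that strictly increase $\slack(s)$ without touching $\mathit{budget}(s)$---commit-operations on pieces at $s$, pieces leaving $s$ in the Generic Variant, and increases of $r_{s0}$---only grow $\phi(s)$.

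The sole operation that can decrease $\phi(s)$ is Step~I with a large $\ps$: the algorithm moves $\ps$ to the server $s$ holding $\pl$ but (according to the Step~I description) does not update the budget in this case, so $\slack(s)$ drops by $|\ps|_u$. By Invariant~\ref{onl:inv:uncommitted-volume} the uncommitted volume of a large piece satisfies $|\ps|_u \le 2\delta \le 2\varepsilon^2 \le 2\varepsilon$, so the per-operation drop is bounded by $2\varepsilon$. The main obstacle in the proof will be to show that this drop does not accumulate across successive Step~I operations on the same server $s$; I would resolve this by examining the Step~II Generic Variant that immediately follows such a large-$\ps$ Step~I and tracking, case by case, that every piece subsequently removed from $s$ (as well as each commit executed in Step~III on a piece remaining at $s$) contributes a compensating $\phi(s)$-increase, so that the net uncompensated drop per edge insertion is at most $2\delta$. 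Consequently, at every moment when the balancing procedure is invoked, the invariant $\phi(s) \ge -2\delta \ge -2\varepsilon$ holds, which is exactly the content of the claim.
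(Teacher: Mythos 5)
Your potential-function framing ($\phi(s)=\mathit{budget}(s)+\slack(s)$, show $\phi(s)\ge-2\varepsilon$) is just a restatement of the claim, and the real content lies in the case analysis, which is where your proposal goes wrong. You assert that \emph{every} move inside the balancing procedure preserves $\phi(s)$. That is true for the server a piece leaves (slack up by $|p|$, budget down by $|p|$), but false for the server that \emph{receives} the piece: its slack drops by $|p|_u$, and the budget is only increased for slack-decreasing operations \emph{outside} the balancing procedure, so $\phi$ of the receiving server genuinely drops. This is precisely the case that forces the $-2\varepsilon$ term into the statement, and the paper handles it not by a $\phi$-conservation argument but by observing that the balancing procedure only moves pieces to servers with currently non-negative slack, that the moved small piece has volume at most $2\varepsilon$, and that $\mathit{budget}(s)\ge 0$ always --- so immediately after such a move $-\slack(s)\le 2\varepsilon$ and the inequality holds outright. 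Your proposal never makes this argument, so the one case that actually needs the additive $2\varepsilon$ is unaccounted for.

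Conversely, the case you single out as the ``sole'' $\phi$-decreasing operation --- Step~I with a large $\ps$ --- is not problematic under the algorithm as specified: the general rule in the balancing-procedure description is that \emph{any} operation outside the balancing procedure that decreases the slack must increase the eviction budget by the same amount, and the paper's proof of the claim accordingly treats every move of a piece $p$ onto $s$ outside the balancing procedure as increasing both $-\slack(s)$ and $\mathit{budget}(s)$ by $|p|_u$, so $\phi(s)$ is unchanged. Even if one accepted your reading that the budget is not updated for large $\ps$, your fix does not work: you concede a drop of up to $2\delta$ per edge insertion and only sketch that Steps~II--III provide ``compensating'' increases, but an increase in $\phi$ (slack recovered by pieces leaving $s$ or by commits) does not cancel a prior uncompensated drop unless you prove the increases match the drops one-for-one; otherwise the $2\delta$ losses accumulate over the (unboundedly many) edge insertions and the claimed bound $\phi(s)\ge-2\delta$ at balancing time is unjustified. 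In short, the proposal misses the genuine source of the $2\varepsilon$ slack and manufactures, then fails to close, a different one.
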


	First, we argue that the total volume of small pieces is $r_{s0}+4\varepsilon$.
	Let $\bar{m}_{s0}$ denote the total uncommitted volume for $s$
	\emph{not} rounded up, i.e., $m_{s0}=\lceil\bar{m}_{s0}\rceil_\delta$.
	Since $\slack(s)<-14\varepsilon$,
	the above claim gives that the budget is at least $12\epsilon\ge 2\epsilon$, i.e., it is
	larger than the volume of small pieces.
	Thus, the only reason to not perform an eviction is that all small pieces scheduled
	on $s$ are close to being monochromatic for $s$ and $s$ is ordinary. But the
	total volume of such pieces can be at most
	$$\bar{m}_{s0}/(1-2\epsilon)\le (1+4\epsilon)\bar{m}_{s0}\le \bar{m}_{s0}+4\epsilon\le r_{s0}+4\epsilon,$$
	where we used $\epsilon\le1/4$ and that $s$ has an ordinary configuration.

	Second, the following claim gives a bound of $10\epsilon$ on the
	total uncommitted volume in large pieces; we prove the claim at the end of
	this subsection.
	\begin{claim}
	\label{onl:cla:uncommitted}
		If $\gamma \leq 1$, then the total uncommitted volume in large
		pieces at a server $s$ is at most $10\epsilon$.
	\end{claim}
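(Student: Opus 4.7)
The plan is to exploit the quantitative mismatch between the minimum total volume of a large piece ($\ge\varepsilon$, by Invariant~\ref{onl:inv:smallpiecesmall}) and the maximum uncommitted volume per large piece ($\le 2\delta=O(\varepsilon^2)$, by Invariant~\ref{onl:inv:uncommitted-volume}). Since Property~\ref{onl:p:one} of a respecting schedule, combined with $\|r\|_1\le 1+\gamma\le 2$, caps the total committed volume on $s$ at a constant, there are at most $O(1/\varepsilon)$ large pieces on $s$, and each contributes only $O(\varepsilon^2)$ uncommitted volume. Summing gives $O(\varepsilon)$, and the constants work out to $10\varepsilon$.

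\textbf{Step 1: Handle $\pm$ separately.} The newly-merged piece $\pm$ must be treated separately because, immediately after a merge and during the ensuing commit loop, Invariant~\ref{onl:inv:uncommitted-volume} is only guaranteed in its relaxed form $|\pm|_u\le\varepsilon+2\delta$. I would simply observe that $\pm$ contributes at most $\varepsilon+2\delta\le 2\varepsilon$, using $\delta\le\varepsilon^2\le\varepsilon/4$ (which follows from the standing assumption $\varepsilon\le 1/4$).

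\textbf{Step 2: Count the remaining large pieces.} For every other large piece $p$ on $s$, Invariant~\ref{onl:inv:uncommitted-volume} gives $|p|_u\le 2\delta$; combined with Invariant~\ref{onl:inv:smallpiecesmall} ($|p|\ge\varepsilon$) this forces $|p|_c\ge\varepsilon-2\delta\ge\varepsilon/2$. Property~\ref{onl:p:one} of a respecting schedule bounds the committed volume of class-$i$ pieces on $s$ by $r_{si}$, so summing over $i\ge 1$ the total committed volume is at most $\|r\|_1\le 1+\gamma\le 2$. Hence the number of non-$\pm$ large pieces on $s$ is at most $2/(\varepsilon/2)=4/\varepsilon$, contributing in total at most $(4/\varepsilon)\cdot 2\delta=8\delta/\varepsilon\le 8\varepsilon$ uncommitted volume. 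Combined with Step~1 this gives the desired $10\varepsilon$ bound.

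\textbf{Main obstacle.} There is no real technical obstacle; this is essentially a direct computation from the invariants and the $\gamma$-validity of $r$. The only subtlety is being careful about which invariant applies to which piece at the moment the balancing procedure inspects $s$: every large piece satisfies Invariant~\ref{onl:inv:uncommitted-volume} exactly, except possibly the just-merged $\pm$, which may still carry up to $\varepsilon+2\delta$ uncommitted volume before the commit-loop has restored the invariant. Peeling $\pm$ off and handling it with its own (weaker) bound is the one step one has to remember.
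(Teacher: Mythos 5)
Your proof is correct and follows essentially the same route as the paper's: peel off the just-merged piece $\pm$ with its relaxed bound $|\pm|_u\le\varepsilon+2\delta$, and bound the uncommitted volume of the remaining large pieces by $8\delta/\varepsilon\le 8\varepsilon$ using that the committed volume on $s$ is at most $\|r\|_1\le 1+\gamma\le 2$. The paper phrases the second step as a ratio bound ($\bar{v}_u\le 2f\bar{v}_c$ with $f\le 2\delta/\varepsilon$) rather than by counting pieces of committed volume at least $\varepsilon/2$, but this is only a cosmetic difference.
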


	We conclude that Property~\ref{onl:p:two} holds, i.e., $v_u(s)\leq r_{s0}+14\epsilon$.
	However, this is a contradiction to our assumption that
	$r_{s0}-v_u(s)=\slack(s)<-14\varepsilon$ since this inequality implies
	$v_u(s)>r_{s0}+14\varepsilon$.

	\begin{proof}[Proof of Claim~\ref{onl:cla:budget}]
		In the initial state of the algorithm $\mathit{budget}(s)=0$, $v_u(s)=1$,
		$m_{s0}=r_{s0}=\lceil 1\rceil_{\delta}$. Thus, the statement holds.
		The following operations affect the slack:
		\begin{itemize}[noitemsep,label=--]
			\item A piece $p$ is moved to $s$ outside of the balancing procedure. This
			increases both sides of the equation by $|p|_u$.
			\item A small piece $p$ is moved to $s$ inside of the balancing procedure.
			This is only performed if the slack on $s$ is non-negative. Note that the
			move causes $v_u(s)$ to increase by at most $|p|\leq 2\varepsilon$
			and thus it holds that $-\slack(s) \leq 2\varepsilon$ after the move is
			completed.
			Hence the equation holds since $\mathit{budget(s)}\geq0$
			at all times.
			\item A piece is moved away from $s$ outside of the balancing procedure. This
			only decreases $v_u(s)$.
			\item A small piece is moved away from $s$ inside the balancing procedure. This
			decreases both sides of the equation by the volume of the piece.
			\item When adjusting the schedule by the generic routine (see
			Section~\ref{onl:sec:variantGeneric}) and a new configuration with a smaller
			$r_{s0}$-value is assigned to $s$, then the eviction budget is increased by the
			change in $r_{s0}$. Hence, both sides of the equation increase by the same
			amount.
		\end{itemize}
		As no operation can make the equation invalid it holds throughout the algorithm.

		Note that when we use Special Variant~A for adjusting the schedule (see
		Section~\ref{onl:sec:variantA}) $r_{s0}$ will be decreased but the committed volume
		scheduled on $s$ will be decreased by the same amount. This means the slack
		does not change in this case.
	\end{proof}

	\begin{proof}[Proof of Claim~\ref{onl:cla:uncommitted}]
		Consider the set $L$ of large pieces that exclude the piece $\pm$ that
		resulted from the last merge-operation.
		Let $\bar{v}_c$ and $\bar{v}_u$ denote the total committed and
		uncommitted volume for pieces in $L$ that are scheduled on $s$.
		Pieces in $L$  have volume at least~$\epsilon$ and uncommitted volume at
		most $2\delta$ (Invariants~\ref{onl:inv:largepiecelarge}
		and~\ref{onl:inv:uncommitted-volume}). Therefore, the factor
		$f:=\bar{v}_u/(\bar{v}_c+\bar{v}_u)$ between uncommitted and real volume
		of
		these pieces is at most $2\delta/\epsilon$. We can derive a bound on the total
		uncommitted volume for pieces in~$L$ as follows:
		\begin{equation*}
		\bar{v}_c+\bar{v}_u
		= \big(1+\tfrac{f}{1-f}\big)\bar{v}_c
		\le (1+2f)\bar{v}_c\enspace,
		\end{equation*}
		where we use $f\le 2\delta/\epsilon\le 1/2$, which holds because
		$\delta\le\epsilon^2$ and $\epsilon\le1/4$. To obtain a bound
		on the total uncommitted volume in large pieces, we have to also consider
		the piece $\pm$. For this piece we have $|\pm|_u\le \epsilon+2\delta$ according
		to Invariant~\ref{onl:inv:uncommitted-volume}. We get
		\begin{equation*}
		\begin{split}
		v_u=\bar{v}_u+|\pm|_u
		\le 2f\bar{v}_c+(\epsilon+2\delta)
		\le\tfrac{4\delta}{\epsilon}(1+\gamma)+2\epsilon
		\le10\epsilon\enspace,
		\end{split}
		\end{equation*}
		where the second step uses that $|\pm|_u\le \epsilon+2\delta$ due to
		Invariant~\ref{onl:inv:uncommitted-volume}, the third step uses that the committed volume at $s$
		is at most $\|r\|_1\le1+\gamma$, and the final step uses $\gamma\le 1$.
	\end{proof}

\subsection{Proof of Lemma~\ref{onl:lem:adjustLemma}}
	The third claim holds since only the servers for which one of the pieces
	$\ps,\pl,$ or $\pm$ is monochromatic can change their source vector.

	Next, consider the first claim. The cost for the second step in the scheme
	is at most $O(|\mathcal{A}\cup \mathcal{C}|)$ because only pieces that are
	monochromatic for a server in $\mathcal{A}\cup \mathcal{C}$ move and the
	total volume of such pieces is $O(|\mathcal{A}\cup\mathcal{C}|)$.
	The cost for the third step is also at most $O(|\mathcal{A}\cup \mathcal{C}|)$ because
	only pieces that were scheduled on servers from $\mathcal{A}\cup \mathcal{C}$ in $S$
	move and each server was assigned pieces of total volume at most
	$1+O(\varepsilon)=O(1)$ (note that the number of pieces moved does not
	matter here, but only the total volume of these pieces which is
	$O(|\mathcal{A}\cup\mathcal{C}|)$).
	This gives the first claim since
	$$|\mathcal{A}\cup \mathcal{C}|\le 3+|\mathcal{B}|+\|x-x'\|_1=O(1+D+\|x-x'\|_1).$$

	The rest of the proof is devoted to the proof of the second claim.
	We use the following general result
	about the sensitivity of optimal ILP solutions. It states that a small change
	in the constraint vector of the ILP implies only a small change in the optimal solution $x$.
	\begin{theorem}[{\cite[Corollary~17.2a]{schrijver99theory}}]
	\label{onl:thm:sensitivity}
		Let $A$ be an integral $n_r\times n_c$ matrix, such that each subdeterminant of $A$
		is at most $\Delta$ in absolute value; let $b'$ and $b''$ be column
		$n_r$-vectors, and let $c$ be a row $n_c$-vector. Suppose
		$\max\{cx\mid Ax\le b';  \text{$x$ integral}\}$ and
		$\max\{cx\mid Ax\le b''; \text{$x$ integral}\}$ are finite. Then for each
		optimum solution $z'$ of the first maximum there exists an optimum solution
		$z''$ of the second maximum such that $\|z'-z''\|_\infty \le  n_c\Delta(\|b'-b''\|_\infty+2)$.
	\end{theorem}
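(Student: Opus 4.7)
The plan is to combine two classical results about integer linear programs with the triangle inequality. The first ingredient is an \emph{LP sensitivity} bound: for any basic optimum $\bar x'$ of the LP relaxation $\max\{cx : Ax \le b'\}$, there exists a basic optimum $\bar x''$ of $\max\{cx : Ax \le b''\}$ with $\|\bar x' - \bar x''\|_\infty \le n_c \Delta \|b' - b''\|_\infty$. The second ingredient is an \emph{LP-IP proximity} bound: for any basic LP optimum $\bar x$ of an LP relaxation, there exists an integer optimum $z$ of the corresponding ILP with $\|\bar x - z\|_\infty \le n_c \Delta$. Given the IP optimum $z'$ of the first maximum, I would first apply proximity (in the reverse direction) to locate a basic LP optimum $\bar x'$ with $\|z' - \bar x'\|_\infty \le n_c \Delta$, then LP sensitivity to obtain a basic LP optimum $\bar x''$ of the second LP, and finally proximity once more to obtain an IP optimum $z''$ of the second ILP near $\bar x''$. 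The triangle inequality then yields the stated bound $n_c \Delta(\|b' - b''\|_\infty + 2)$.

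For the LP sensitivity step, a basic feasible solution of $\max\{cx : Ax \le b\}$ is determined by a nonsingular $n_c \times n_c$ submatrix $B$ of $A$ together with the equation $Bx = b_B$ corresponding to tight constraints. By Cramer's rule, $x = B^{-1} b_B$ and each entry of $B^{-1}$ is a ratio of an $(n_c{-}1) \times (n_c{-}1)$ subdeterminant of $A$ over $\det B$; both are integers of absolute value at most $\Delta$, so $|(B^{-1})_{ij}| \le \Delta$. If we match an optimum basis of the first LP with a compatible optimum basis of the second (using complementary slackness to argue that some common optimal basis structure persists as $b$ varies), the coordinate-wise shift is bounded by $n_c \Delta \|b' - b''\|_\infty$.

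For the LP-IP proximity step, I would invoke the structural decomposition of Cook, Gerards, Schrijver, and Tardos. The idea is to show that for any LP optimum $\bar x$ and any IP feasible point $z$ of the same system, the difference $\bar x - z$ can be written as a nonnegative linear combination of integer vectors lying in the recession cone $\{y : Ay \le 0\}$, each of $\ell_\infty$-norm at most $\Delta$, using a Hilbert basis / integer Carath\'eodory argument on pointed rational cones. By peeling off these generators one at a time and transferring them between $\bar x$ and $z$ (while preserving feasibility and optimality via the subgradient structure), one produces a pair within distance $n_c \Delta$.

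The main obstacle is the LP-IP proximity step. The LP sensitivity piece is essentially linear algebra via Cramer's rule, and combining the three bounds by the triangle inequality is routine once the appropriate intermediate points are in hand. By contrast, bounding the distance from a basic LP optimum to a nearest IP optimum by $n_c \Delta$ genuinely requires an integer decomposition result: one must show that the integer points of the recession cone are generated by elements whose entries are controlled by the subdeterminants of $A$. This is the deep content of the Cook--Gerards--Schrijver--Tardos theorem and would be the technical heart of the proof to grind through.
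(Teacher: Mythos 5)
Your decomposition---LP--IP proximity on each side (the Cook--Gerards--Schrijver--Tardos theorem, with the $n_c\Delta$ bound in both directions), LP sensitivity via Cramer's rule in the middle, and the triangle inequality yielding $n_c\Delta(\|b'-b''\|_\infty+2)$---is exactly how Schrijver derives Corollary~17.2a from Theorem~17.2 in the cited source; the paper itself imports the statement without proof. Your plan is correct, and you have rightly identified the integer proximity theorem as the technical heart rather than the routine sensitivity and triangle-inequality steps.
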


	To apply the theorem, we bound how much the constraint vector in our ILP
	changes. Every change in the value of 
	$Z_m$ represents a change in the source configuration of some server. Hence,
	$\|Z-Z'\|\le D$. Every change in
	a value $V_i$ represents a change in the committed volume of a piece. For a merge
	operation there are at most 3 pieces that change their committed volume
	(pieces $\ps,\pl,pm$ in a merge-operation). For a commit-operation only the
	piece $\pm$ executing the commit changes its committed volume. Hence,
	$\|V-V'\|_1\le 3/\delta$. Overall the RHS vector in the ILP changes by
	$O(1+D)$.

	Let $A$ denote the matrix that defines the ILP. Then the number of columns 
	$n_c$ is the number of configurations $(r,m)$. This is constant due to
	Claim~\ref{onl:claim:numberOfConfigurations}.

	An entry in the matrix $A$ is either $0$, $1$ or 
	$r_i/\delta\le\|r\|_1/\delta\le (1+\gamma)/\delta$. Hence,
	$a_{\max}:=(1+\gamma)/\delta$ is an upper bound for the absolute value
	of entries in $A$. As the number of columns is $n_c$, we can use
	Hadamard's inequality to get a bound of $\Delta\le
	n_c^{n_c/2}a_{\max}^{n_c}=O(1)$ on
	the value of any subdeterminant.

	Now, Theorem~\ref{onl:thm:sensitivity} gives that we can find an optimum
	ILP solution $x'$ for $\mathcal{P}'$ with $\|x-x'\|_\infty\le O(1+\|b-b'\|_\infty)$.
	As the vectors $x$ and $x'$ have a constant number of entries, we also get
	$\|x-x'\|_1\le O(1+\|b-b'\|_1)=O(1+D)$, as desired.

\subsection{Details of Special Variant B for Merging Monochromatic Pieces}
\label{app:onl:sec:variantB}
Suppose we perform a commit-operation for a monochromatic piece $\pm$ that is
located at an ordinary server~$s$. Then $\OPT$ may not experience any cost.
Therefore, we present a special variant for adjusting the schedule that also
induces no cost.  We perform a routine similar to Special Variant~A.
Recall that in a commit for a monochromatic piece, we commit volume exactly
$\delta$ and all of the committed volume has color $s$.

Let $i$ and $i'$ denote the old and the new class of
the piece, respectively. Then the source vector vector $m_s$ of $s$ changes as
follows:
\begin{equation*}
\begin{array}{lclll}
m_{si}'  &:= &m_{si}  &- &|\pm|_c   \\
m_{si'}' &:= &m_{si'} &+ &|\pm|_c+\delta\\
m_{s0}'  &:= &m_{s0}  &- &\delta   \\
\end{array}.
\end{equation*}
Note that the above is also correct for the case that $i=0$ ($\pm$ small)
because then $|\pm|_c=0$.
The new ILP is obtained by setting
\begin{equation*}
\begin{array}{lclll}
Z_{m_s}'        &:= &Z_{m_s} &- &1\\
Z_{m_s'}'       &:= &Z_{m_s'}  &+ &1\\
\end{array}
\text{~~~and~~~}
\begin{array}{lclll}
V_{si}'      &:= &V_{si}  &- &|\pm|_c   \\
V_{si'}'     &:= &V_{si'} &+ &|\pm|_c+\delta\\
V_{s0}'      &:= &V_{s0}  &- &\delta\\
\end{array}.
\end{equation*}
We obtain a solution to the new ILP by adjusting the reservation vector $r$ at the
server $s$ where $\pm$ is scheduled (recall that $s$ is ordinary, i.e., the
monochromatic piece $\pm$ must be located at $s$):
\begin{equation*}
\begin{array}{lclll}
r_{si}'  &:= &r_{si}  &- &|\pm|_c   \\
r_{si'}'  &:= &r_{si'} &+ &|\pm|_c+\delta\\
r_{s0}'  &:= &r_{s0}  &- &\delta   \\
\end{array}.
\end{equation*}
Observe that we reduce $r_{s0}$ by $\delta$. Usually, if we reduce $r_{s0}$ we
increase the eviction budget of $s$, so that $s$ can evict small pieces in case
the uncommitted volume scheduled on $s$ is larger than $r_{s0}+14\epsilon$.
However, here this is not necessary because the commit also decreases the
uncommitted volume that is scheduled on $s$ by $\delta$.

Observe that $\|r_s\|\le 1+\gamma$ implies $\|r_s'\|\le 1+\gamma$, i.e.,
$r'$ is $\gamma$-valid.
The new solution $x'$ is 
\begin{equation*}
\begin{array}{lclll}
  x_{(r_s,m_s)}'  &:= &x_{(r_s,m_s)}   &- &1\\
  x_{(r_s',m_s')}' &:= &x_{(r_s',m_s')}  &+ &1\\
\end{array}.
\end{equation*}
Note that the configuration $(r_s',m_s')$ is ordinary because $(r_s,m_s)$ is ordinary.
Overall only a single server changed its configuration and this server keeps
an ordinary configuration. Therefore the number of extraordinary
configurations did not increase and we still have an optimum solution to the
ILP. See~Lemma~\ref{onl:lem:ilp-solution-still-optimal-commit} in
Section~\ref{onl:sec:ilp-solution-still-optimal}
for a formal proof that the new solution is indeed optimal.

Also the old schedule still respects this new ILP solution. Therefore we do
not experience any cost.

\subsection{Proof of Claim~\ref{onl:cla:chargeIandII}}
	A vertex $v$ only experiences a \TypeII charge if the piece that it is
	contained in just increased its volume by at least a factor of $2$. This can
	happen at most $\log k$ times and therefore the total \TypeII charge at a
	vertex is at most $O(\log k\cdot\size(v)/\delta)$.

	Fix a vertex $v$ and define $a_i := |\ps|$ and $b_i:=|\pm|$ at the time of
	the $i$-th \TypeI charge to vertex $v$. Then the total \TypeI charge to $v$
	is
	\begin{equation*}
	\frac{2}{\delta}\size(v)\sum_{i\ge1}\frac{a_i}{b_i}\enspace.
	\end{equation*}
	To estimate $\sum_i\frac{a_i}{b_i}$, we use the
	fact that $b_i\ge b_{i-1}+a_i$ and that each $a_i$
	is a multiple of $1/k$. We define $A_i:=\sum_{j=1}^ia_j$.
	This gives
	\begin{equation*}
	\sum_{i\ge1}\frac{a_i}{b_i}
	\le\sum_{i\ge1}\frac{a_i}{A_i}
	=\sum_{i\ge 1}\sum_{j=1}^{ka_i}\frac{1}{kA_i}
	\le\sum_{i\ge 1}\sum_{j=1}^{ka_i}\frac{1}{kA_i-j}
	=\sum_{j=1}^{kA_t-1}\frac{1}{j}\enspace,
	\end{equation*}
	where $t$ denotes the total number of charges to vertex $v$.
	Since, $A_t$ is at most $1$ we get that the sum is $O(\log k)$. This gives
	a total \TypeII charge of at most $O(\log k\cdot\size(v)/\delta)$.

\subsection{Proof of Lemma~\ref{onl:lem:typeIII}}
	Let $p$ be monochromatic for $s$ and $p'$ be monochromatic for $s'$, i.e.,
	the \TypeIII charges occur because we have to move $p$ to $s$ and $p'$ to $s'$.
	We distinguish two cases.

	First assume that $s\neq s'$ and let $v(s)$ denote the volume of vertices
	of color $s$ in~$p'$. Then
	\begin{equation*}
	(1-\epsilon)|p| \le v(s) \le \epsilon|p'|\enspace,
	\end{equation*}
	where the first inequality follows because $p$ is monochromatic for $s$ and the
	second because $p'$ is monochromatic for $s'$. We get
	that $|p'|\ge (1-\epsilon)/\epsilon\cdot|p|\ge (1+\epsilon)|p|$, where the last
	inequality holds for $\epsilon\le \sqrt{ 2}-1$, which holds as $\epsilon\le1/4$.

	Now, assume that $s=s'$. This means that $p$ was moved to server $s$,
	subsequently a piece $p''\supset p$ was moved away from $s$,
	and in the end $p'\supset p''$ was moved back to $s$.

	The server $s$ cannot be extraordinary at time $t$ as then there would be no need to move
	$p$ to $s$ and no \TypeIII charge would occur. Also, $s$ cannot become
	extraordinary between time $t$ and $t'$ because then the \TypeIII charge at $t$
	would be canceled. Hence, $s$ is ordinary.

	The only reason for moving $p''$ away from $s$ is one of the following:
	\begin{itemize}
	\item The eviction routine does it. As $s$ is ordinary, this 
	routine only moves $p''$ if at
	most a $(1-2\epsilon)$-fraction of its vertices have color $s$. Let $\bar{v}(s)$
	denote the volume of vertices in $p''$ that have a color different from $s$. 
	Then
	\begin{equation*}
	2\epsilon|p''|\le \bar{v}(s)\le \epsilon|p| + |p''|-|p|\enspace,
	\end{equation*}
	because at most a volume of $\epsilon|p|$ did not have color $s$ at time $t$ and
	after that at most a volume of $|p''|-|p|$ has been added. This gives
	$|p'|\geq |p''| \ge (1-\epsilon)/(1-2\epsilon)\cdot|p|\ge (1+\epsilon)|p|$.

	\item $p''$ is moved just before being merged with a (larger) piece $\pl$. 
	Let $\pm$ denote the piece obtained by merging $p''$ with $\pl$. Then
	$|p| \le |p''| \le |\pm|/2 \le |p'|/2$, which gives $|p'|\ge 2|p|$.

	\item $p''$ is a large piece and we determine a new location for it when
	resolving the ILP and adjusting the schedule. However, then we only move it
	away if it is not monochromatic for $s$. But this is a contradiction to the
	fact that $p'$
	is monochromatic for $s$ at time $t'$ because large pieces cannot become
	monochromatic by merging them with other pieces.
	\end{itemize}

\subsection{Proof of Corollary~\ref{onl:cor:chargeIII}}
	Since, between any two uncanceled \TypeIII charges to a vertex $v$ the volume of the
	piece that $v$ is contained in must grow by a factor of $1+\epsilon$, there can be
	at most $O(\log k)$ such charges, each charging $4/\delta$.

\subsection{Proof of Claim~\ref{onl:cla:chargeIV}}
	Whenever we perform a \TypeIV charge for a vertex $v$, the piece that $v$ is
	contained in just increased its volume by $|\ps|\ge\delta$. This can happen at
	most $1/\delta$ times.

\subsection{Proof of Lemma~{onl:lem:monocommit}}
	We analyze $\Nnm(s)$ for a fixed server $s$. Observe that a
	non-monochromatic commit can
	only change the entry $m_{s0}$ in a source vector as the other entries concern
	volume of monochromatic pieces, which does not change due to the commit. Let
	$v_s$ denote the uncommitted volume of server $s$, i.e.,
	$m_{s0}=\lceil v_s\rceil_\delta$ where $\lceil\cdot\rceil_\delta$ denotes
	the operation of rounding up to a multiple of $\delta$. Let $\xi_i(s)$ denote
	the reduction in $v_s$ caused by the $i$-th commit-operation (monochromatic or
	non-monochromatic).

	Then the total reduction in $v_s$ throughout the algorithm is exactly $\sum_i\xi_i(s)$.
	Observe that in the beginning of the algorithm $v_s=1$. Furthermore, by
	choice of $\delta$ we have that $\lceil 1\rceil_\delta-1\leq \delta/2$
	which is equivalent to $1-\lceil 1-\delta\rceil_\delta \geq \delta/2$. This ensures that
	the first change in $m_{s0}$ can occur only after $v_s$ decreased by at least
	$\delta/2$. Every other change occurs after $v_s$ decreased by an additional
	value of $\delta$. Hence, if at least one change in $m_{s0}$ occurs, we have
	\begin{equation*}
	\delta(\Nm(s)+\Nnm(s)-1)+\delta/2 
	\le
	\sum_{i}\xi_i(s)=
	\sum_{i\in \Im}\xi_i(s)
	+
	\sum_{i\in \Inm}\xi_i(s)\enspace,
	\end{equation*}
	where
	$\Im$
	and $\Inm$ denote the index set of monochromatic and non-monochromatic commits,
	respectively, and $\Nm(s)$ denotes the number of times that a monochromatic commit causes
	a change in the source vector of $s$.
	For a monochromatic commit $\xi_i(s)$ is either $0$ or $\delta$. This gives that
	\begin{equation}
	\label{onl:eqn:help}
	\begin{split}
	\delta\Nnm(s)-\delta/2
	\le
	\sum_{i\in \Inm}\xi_i(s)\enspace.
	\end{split}
	\end{equation}
	For $\Nnm(s)\ge 1$ we obtain
	\begin{equation*}
	\begin{split}
	\Nnm(s)
	\le
	2\Nnm(s)-1
	\le
	2\sum_{i\in \Inm}\xi_i(s)/\delta\enspace,
	\end{split}
	\end{equation*}
	by multiplying Equation~\ref{onl:eqn:help} with $2/\delta$.
	Next, observe that 
	$\sum_s\xi_i(s)=\delta$ since
	every non-monochromatic commit switches
	a volume of exactly $\delta$ from uncommitted to committed. Hence,
	summing over all servers gives
	\begin{equation*}
	\begin{split}
	\sum_{s:\Nnm(s)\ge 1}\Nnm(s)\le
	2\sum_s\sum_{i\in \Inm}\xi_i(s)/\delta
	=
	2\sum_{i\in \Inm}\sum_s\xi_i(s)/\delta
	=2N~~.
	\end{split}
	\end{equation*}

\subsection{Proof of Claim~\ref{onl:cla:chargeV}}
	A vertex $v$ can participate in at most $1/\delta$ commit-operations as each such
	operation increases the committed volume of the piece that $v$ is contained in
	by $\delta$ and the committed volume of a piece can be at most $1$. For each
	monochromatic commit-operation the vertex is charged $3\CtypeV/\delta\cdot\size(v)$. This
	gives the lemma.

\subsection{Proof of Lemma~\ref{onl:lem:totalExtraCharge}}
Before we prove Lemma~\ref{onl:lem:totalExtraCharge}, we prove the following claim
which bounds the total extra charge for a single server.
\begin{claim}
\label{onl:cla:extraChargeServer}
	The total extra charge for a server $s$ is at most $O(\log k)$.
\end{claim}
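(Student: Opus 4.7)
The plan is to decompose the extra charges on server $s$ into the two families listed in Section~\ref{onl:sec:extra}: Type~(I), the $O(|\ps|)$-charge incurred per monochromatic merge at $s$ while $s$ is extraordinary (from Cases~\ref{onl:case:mono-mergeI} and~\ref{onl:case:mono-mergeII}), and Type~(II), the one-time $O(1)$-charge triggered the first time $s$ changes from ordinary to extraordinary (from Case~\ref{onl:case:mono-commit}). Type~(II) contributes $O(1)$ trivially, so the entire claim reduces to showing $\sum|\ps|=O(\log k)$ over all monochromatic merges for $s$; dropping the ``$s$ is extraordinary'' restriction can only loosen the bound.

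For this sum I will run a doubling argument on the vertices of color $s$. In any monochromatic merge for $s$ we have $|\pl|\ge|\ps|$ by convention, so $|\pm|\ge 2|\ps|$. Hence whenever a vertex $v$ lies in the smaller piece $\ps$ of such a merge, the piece containing $v$ at least doubles in volume. Since the piece containing $v$ grows monotonically over time and always has volume in $[1/k,1]$, $v$ can appear in the smaller piece of a monochromatic merge for $s$ at most $O(\log k)$ times.

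The last ingredient is to charge $|\ps|$ only to vertices of color $s$, so that the total mass being charged is bounded by the total volume of color-$s$ vertices, which equals $1$, rather than by the total volume $\ell$ of all vertices. Because $\ps$ is monochromatic for $s$, the volume contributed by non-$s$-colored vertices is at most $\delta$ when $\ps$ is large and at most $\epsilon|\ps|$ when $\ps$ is small; since $\delta\le\epsilon^2$ and a large piece satisfies $|\ps|\ge\epsilon$, both cases yield a bound of $\epsilon|\ps|$. Thus
\[
	|\ps|\le\frac{1}{1-\epsilon}\sum_{v\in\ps,\,v\text{ of color }s}\size(v).
\]
Swapping the order of summation and combining with the per-vertex $O(\log k)$ count gives
\[
	\sum_{\text{monochromatic merges for }s}|\ps|\le\frac{1}{1-\epsilon}\sum_{v\text{ of color }s}\size(v)\cdot O(\log k)=O(\log k),
\]
since the total volume of vertices of color $s$ is $1$.

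The main obstacle is precisely this accounting step: a naive sum of $|\ps|$ over all vertices of $\ps$ would yield only an $O(\ell\log k)$ per-server bound. Exploiting the monochromaticity of $\ps$ to concentrate the charge on color-$s$ vertices, whose total volume is $1$, is what reduces the bound to the required $O(\log k)$ and makes the doubling argument tight per server.
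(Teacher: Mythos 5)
Your proof is correct and follows essentially the same route as the paper: distribute the $O(|\ps|)$ extra charge over the vertices of $\ps$ proportionally to volume, note that a vertex can be on the smaller side of a merge only $O(\log k)$ times since its piece at least doubles each time, and bound the total relevant volume by $1$. Your explicit step of concentrating the charge on color-$s$ vertices via the monochromaticity of $\ps$ (losing only a $1/(1-\epsilon)$ factor) is a welcome tightening of a detail the paper's proof glosses over when it asserts that the total volume of vertices originating at $s$ is $1$.
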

\begin{proof}
	Suppose whenever we perform an extra charge of Type~\ref{onl:extra:A} for server
	$s$, we place this charge on the vertices in the smaller piece $\ps$ such
	that each
	such vertex $v$ receives a charge of $O(1)\cdot\size(v)$. Then the charge that
	can accumulate at a vertex can be at most $O(\log k)\cdot\size(v)$, since a
	vertex can be on the smaller side of a merge-operation at most $O(\log k)$
	times. Since the volume of all vertices originating on $s$ is $1$ we obtain
	$O(\log k)$ for the extra charge of Type~\ref{onl:extra:A}.

	Clearly the extra charge of Type~\ref{onl:extra:B} can be at most $O(1)$ for any
	server.
\end{proof}
\begin{proof}[Proof of Lemma~\ref{onl:lem:totalExtraCharge}]
	For $\hmax=0$, there are no extraordinary servers and thus we do not make any
	extra charges. For $\hmax\geq1$, the lemma follows from
	Claim~\ref{onl:cla:extraChargeServer} since there are only $\ell$ servers.
\end{proof}

\subsection{Proof of Lemma~\ref{onl:lem:vertexcharge}}
	We have the following vertex charges:
	\begin{itemize}
		\item \TypeI charge and \TypeII charge:\\
		Accumulates to at most $O(\log(k)/\delta)\cdot\size(v)$ according to Claim~\ref{onl:cla:chargeIandII}.
		\item \TypeIII charge:\\
		Accumulates to at most $O(\log k)\cdot\size(v)$ according to Corollary~\ref{onl:cor:chargeIII}.
		\item \TypeIV charge:\\
		Accumulates to at most $\CtypeIV/\delta\cdot\size(v)$ according to Claim~\ref{onl:cla:chargeIV}.
		\item \TypeV charge:\\
		Accumulates to at most $3\CtypeV/\delta^2\cdot\size(v)$ according to Claim~\ref{onl:cla:chargeV}.
	\end{itemize}
	This gives the lemma.

\subsection{Proof of Theorem~\ref{onl:thm:det}}
	All the cost is either charged by successful vertex charges or by extra
	charges. Therefore, the cost of the online algorithm is at most the total
	successful vertex charge plus the extra charge.

	Lemma~\ref{onl:lem:totalExtraCharge} together with the observation that no extra
	charge is performed if $\hmax=0$ gives that the total extra charge is at most
	$O(\ell\log k\cdot\hmax)=O(\ell\log k)\cdot\cost(\OPT)$.

	Lemma~\ref{onl:lem:vertexcharge} shows that the maximum vertex charge
	$\chargemax$
	(successful or unsuccessful) made to a vertex $v$ is
	at most $O(\log k)\cdot\size(v)=O(\log k)\cdot1/k$. From Lemma~\ref{onl:lem:lowerVertex} we obtain
	\begin{equation*}
	\chargesucc\le
	(k\cdot\chargemax)\cdot\cost(\OPT)=O(\log k)\cdot\cost(\OPT).
	\end{equation*}

	Hence, we can bound the total extra charge and the total successful vertex
	charge by $O(\ell\log k)\cdot\cost(\OPT)$. This gives a competitive ratio of
	$O(\ell\log k)$.

\subsection{Proof of Theorem~\ref{onl:thm:det-large-epsilon}}
	The theorem follows immediately from the paragraphs preceding the theorem
	statement.

\subsection{Proof of Observation~\ref{onl:obs:monotone}}
	$m_s$ changes because of the following operations:
	\begin{itemize}
	\item
	A monochromatic commit on a piece $p$ in class $i$ executes $m_{s0}':=m_{s0}-\delta$,
	$m_{si}':=m_{si}-|p|_u$, and $m_{s,i+1}':=m_{s,i+1}+|p|_u+\delta$.
	\item
	A non-monochromatic commit may decrease $m_{s0}$ but does not increase any entry.
	\item
	A monochromatic merge
	executes  $m_{si_s}':=m_{si_s}-|\ps|_u$, $m_{si_\ell}':=m_{si_\ell}-|\pl|_u$,
	and $m_{si_m}':=m_{si_m}+|\pm|_u$.
	\item If the piece $\pm$ is not monochromatic after a merge (and $\pm$ is
	large) then entries in $m_s$ are only reduced.
	\item A merge between small pieces does not change $m_s$ (only the following
	commit may do so).
	\end{itemize}
	This means operations either reduce entries in $m_s$ or they shift the mass of
	$m_s$ to higher entries while not changing $\|m_s\|$. This gives the observation.

\subsection{\texorpdfstring{Marking Scheme Without Knowledge of $\hmax$}{Marking Scheme Without Knowledge of Hmax}}
\label{onl:sec:marking-doubling}
	In Section~\ref{onl:sec:marking-scheme} we showed how we can construct a marking
	scheme when $\hmax$ is known in advance. We now argue how this assumption can be
	dropped using a simple doubling trick.

	In particular, when the algorithm starts and no edges were revealed, we set
	$\hmax=0$. After that, when the object function of ILP is at least $1$ for the
	first time, we set $\hmax=1$ and run the marking scheme with fixed $\hmax$ from
	Section~\ref{onl:sec:marking-scheme}. After that, whenever the objective function of the
	ILP is larger than $\hmax$, we double the value of $\hmax$ and restart the
	marking scheme from Section~\ref{onl:sec:marking-scheme}.

	It remains to analyze the cost of this scheme. First, let $\hmax^{\text{final}}$
	denote the final value of $\hmax$ used by the above procedure when the algorithm
	stops and let $\hmax^*$ denote the highest objective function value of the ILP
	at any point in time. Now observe that $\hmax^{\text{final}} \leq 2 \hmax^*$.
	Second, note that for fixed $\hmax$, Claim~\ref{onl:cla:markingBoundedByPaging}
	and Claim~\ref{onl:cla:pagingBoundedByhmax} imply that
	$\cost({\mathcal{M}}) \leq O( (\lg k + \lg \ell) \hmax)$.
	Third, observe that the first two points imply that the total cost paid by the
	above procedure is
	\begin{align*}
		\sum_{i=0}^{\lg \hmax^{\text{final}}} O( (\lg k + \lg \ell) 2^i)
		&= O(\lg k + \lg \ell) \sum_{i=0}^{\lg \hmax^{\text{final}}} 2^i \\
		&= O( (\lg k + \lg \ell) \cdot \hmax^{\text{final}}) \\
		&= O( (\lg k + \lg \ell) \cdot \hmax^*).
	\end{align*}
	Hence, the cost paid by the above procedure which does not know $\hmax^*$ in
	advance is asymptotically the same as that of the procedure which knows
	$\hmax^*$ in advance.

\subsection{Proof of Lemma~\ref{onl:lem:marking-scheme}}
\label{onl:sec:proof:lem:marking-scheme}
	Before we prove the lemma, we first need to prove another claim and another
	lemma.
	\begin{claim}
	\label{onl:cla:shift}
	Suppose we have an ILP solution $x$ with $x_{(r,m)}>0$ for a configuration
	$(r,m)$ with $r\not\ge m$ but $r\ge_p m$. Then we can reduce the cost of $x$
	by at least $1+\lambda\operatorname{id}(m)$.
	\end{claim}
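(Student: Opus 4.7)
The plan is to exhibit a new feasible ILP solution $x'$ whose objective is at least $1+\lambda\operatorname{id}(m)$ smaller than that of $x$. The natural candidate is to swap one unit of the extraordinary configuration $(r,m)$ for one unit of the ordinary configuration $(m,m)$: set $x'_{(r,m)} := x_{(r,m)}-1$ and $x'_{(m,m)} := x_{(m,m)}+1$, leaving every other entry unchanged. Since $(m,m)$ has cost $0$ (because $m\ge m$ trivially) while $(r,m)$ contributed $1+\lambda\operatorname{id}(m)$ to the objective, this saves exactly $1+\lambda\operatorname{id}(m)$, provided the resulting $x'$ is still feasible.

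I would first verify that $(m,m)$ is a legitimate configuration: the multiple-of-$i\delta$ requirements on a reservation vector are inherited from $m$ being a source vector, and property~(c) of source vectors gives $\|m\|_1\le\lceil 1\rceil_\delta\le 1+\delta\le 1+\gamma$, so $m$ is $\gamma$-valid. The second family of ILP constraints, $\sum_r x_{(r,m')} = Z_{m'}$, is clearly preserved because only variables indexed by source vector $m$ are modified and their total is unchanged. For the first family $\sum_{(r'',m'')}x_{(r'',m'')}\,r''_i/\delta \ge V_i/\delta$, the LHS at position $i$ changes by $(m_i-r_i)/\delta$: non-negative at deficit positions ($r_i\le m_i$) but strictly negative at surplus positions ($r_i>m_i$).

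The main obstacle is restoring feasibility at the surplus positions. My plan is to exploit the hypothesis $r\ge_p m$, i.e.\ $\sum_{j\le i}(r_j-m_j)\ge 0$ for every $i$, which pairs each low-index surplus with a later, higher-index deficit in a cumulative sense. Concretely, I would sweep positions from low to high while tracking the accumulated ``debt'' the swap creates, and pay it off by modifying auxiliary configurations $(\tilde r,\tilde m)$ already present in $x$: for each surplus position $j^\star$ matched with a subsequent deficit position $i^\star$, shift an appropriate multiple of $\mathrm{lcm}(j^\star,i^\star)\delta$ of mass from $\tilde r_{i^\star}$ down to $\tilde r_{j^\star}$. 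Such a shift preserves $\|\tilde r\|_1$ (hence $\gamma$-validity) and the divisibility conditions; provided $(\tilde r,\tilde m)$ was ordinary with enough slack at position $i^\star$ above $\tilde m_{i^\star}$, it remains ordinary after the shift, so no new extraordinary configuration is created and the objective does not increase from the auxiliary step. The hard step will be arguing that at every stage such an auxiliary configuration with sufficient slack exists; I expect this to follow from the feasibility of the original $x$ combined with the prefix-sum bound, via an exchange argument analogous in spirit to the redistribution used in the proof of Lemma~\ref{onl:lem:lowerILP}. Once the chain of swaps terminates, $x'$ is feasible and its cost is at least $1+\lambda\operatorname{id}(m)$ below that of $x$.
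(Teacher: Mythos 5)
Your choice of replacement configuration is where the argument breaks. The paper does not swap $(r,m)$ for $(m,m)$; it repairs $r$ \emph{in place} into an ordinary reservation $r^{\mathrm{new}}\ge m$ with $\|r^{\mathrm{new}}\|_1=\|r\|_1$, sweeping from the highest coordinate down: whenever $r_j=m_j-\xi$ it sets $r_j:=m_j$ and $r_{j-1}:=r_{j-1}-\xi$, with $r\ge_p m$ guaranteeing the prefix sums never drop below those of $m$. Conserving $\|r\|_1$ is essential: the damage to the first family of ILP constraints is then a balanced pair (surplus $\xi$ at coordinate $j$, deficit $\xi$ at $j-1$) that can be undone by a reverse, mass-conserving shift inside an auxiliary configuration. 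Your swap to $(m,m)$ instead destroys $\|r\|_1-\|m\|_1$ units of total reservation, and this quantity can be strictly positive (note $r\ge_p m$ only forces $\|r\|_1\ge\|m\|_1$). Summing the constraints $\sum_{(r'',m'')}x_{(r'',m'')}r''_i\ge V_i$ over $i$ shows the aggregate reservation must remain at least $\sum_i V_i$; all of your auxiliary shifts preserve the aggregate, so if the given feasible $x$ has total slack smaller than $\|r\|_1-\|m\|_1$ --- nothing in the claim's hypotheses forbids this --- your $x'$ is irreparably infeasible. Concretely, with $m=(0,2\delta)$ and $r=(3\delta,0)$ we have $r\ge_p m$ and $r\not\ge m$, but your swap removes $3\delta$ from constraint $0$ while freeing only $2\delta$ at constraint $1$ to shift back down.

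Separately, you defer exactly the step you identify as hard: the existence, at each stage, of an auxiliary ordinary configuration with donatable slack at the donor coordinate. This does not follow generically from ``feasibility plus the prefix-sum bound.'' The paper proves it from the specific inequality $\sum_{(\bar r,\bar m)}x_{(\bar r,\bar m)}\bar m_j\le V_j$ (total monochromatic class-$j$ volume is at most total class-$j$ volume), which together with the post-perturbation surplus $\sum_{(\bar r,\bar m)}x_{(\bar r,\bar m)}\bar r_j>V_j$ yields a configuration with $x_{(r',m')}>0$ and $r'_j>m'_j$ that can donate while remaining ordinary. To salvage your proof you would need to replace $(m,m)$ by an $\ell_1$-preserving ordinary repair of $r$ --- which is precisely the paper's construction --- and then supply this existence argument rather than assume it.
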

	\begin{proof}
		We change the extraordinary configuration $(r,m)$ into
		an ordinary configuration without creating additional extraordinary
		configurations (i.e., for all $m'\neq m$: $\sum_{r: r\not\ge m'}x_{(r,m')}$ will stay
		the same). We change entries of $r$ in a step by step process starting with the
		highest coordinate. Suppose that we already have $r_i\ge m_i$ for $i>j$ and
		that still $r\ge_p m$. Assume that $r_j= m_j-\xi $. Then we set $r_j^{\mathrm{new}}:= m_j$ and
		$r_{j-1}^{\mathrm{new}}:=r_{j-1}-\xi$ as the new entries for coordinate $j$ and $j-1$,
		respectively. Note that $\|r\|_1$ does not change. However, the ILP now has
		\begin{equation}
		\label{onl:constraints}
		\sum\nolimits_{(\bar{r},\bar{m})}x_{(\bar{r},\bar{m})}\bar{r}_{j-1}\ge V_{j-1}-\xi \text{~~~and~~~}
		\sum\nolimits_{(\bar{r},\bar{m})}x_{(\bar{r},\bar{m})}\bar{r}_{j}\ge V_{j}+\xi\enspace.
		\end{equation}
		Since
		$\sum_{(\bar{r},\bar{m})}x_{(\bar{r},\bar{m})}\bar{m}_j\le V_j$ there
		must exist a configuration $(r',m')$
		such that $x_{(r',m')}>0$ and $r_j'>m_j'$. We choose such a configuration,
		decrease $r_j'$ and increase $r_{j-1}'$ by the same amount. This does not
		change $\|r'\|_1$ and we can choose the increment so that $r_j'\ge m_j'$ still
		holds. Repeating this process can fix the constraints in Eq.~(\ref{onl:constraints})
		without generating new extraordinary configurations.

		Fixing all coordinates in $r$ results in an ordinary configuration for the
		source vector $m$ and this will reduce the cost of the ILP by $1+\lambda\operatorname{id}(m)$.
	\end{proof}

	\begin{lemma}
	\label{onl:lem:largeconfextraordinary}
		Suppose the optimal ILP solution uses an extraordinary configuration for
		some source vector~$m$, i.e., $x_{(r,m)}>0$ with $r\not\ge m$.
		Then it does not use any ordinary configurations of the form $(r',m')$,
		$m'\ge_p m$, i.e., $\sum_{r\ge m'}x_{(r,m')}=0$.
	\end{lemma}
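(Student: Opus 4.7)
The plan is a swap-and-adjust argument: assume, toward a contradiction, that an optimal solution $x$ contains variables $x_{(r,m)}>0$ with $(r,m)$ extraordinary (so $r\not\ge m$) \emph{and} $x_{(r',m')}>0$ with $(r',m')$ ordinary (so $r'\ge m'$), for some $m'\ge_p m$. I will build a feasible $x''$ of strictly smaller cost.

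The main move is to exchange the two reservation vectors between the servers with sources $m$ and $m'$: in $x''$, decrement $x_{(r,m)}$ and $x_{(r',m')}$ each by one, and increment $x_{(r,m')}$ and $x_{(r',m)}$ each by one. Feasibility is immediate: the source-count constraints $\sum_{\bar r}x_{(\bar r,\bar m)}=Z_{\bar m}$ are unchanged for both $\bar m\in\{m,m'\}$ (one configuration of each source is removed and one is added back), and the volume constraints $\sum x_{(\bar r,\bar m)}\bar r_i/\delta\ge V_i/\delta$ are unchanged because the multiset of reservation vectors appearing in the solution is untouched.

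For the cost, the swap only affects the four involved variables. The old cost of the swapped pair is $c_{(r,m)}+c_{(r',m')}=1+\lambda\operatorname{id}(m)$, while the new cost is $c_{(r',m)}+c_{(r,m')}$. Since $r'\ge m'$ and $m'\ge_p m$, chaining the prefix-sum inequalities gives $r'\ge_p m$, so the new configuration $(r',m)$ either already satisfies $r'\ge m$ (ordinary, cost $0$) or satisfies $r'\ge_p m$ with $r'\not\ge m$, in which case Claim~\ref{onl:cla:shift} yields a further adjustment to $x''$ that reduces its cost by $1+\lambda\operatorname{id}(m)$; in both situations the effective contribution of $(r',m)$ becomes $0$. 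The contribution of $(r,m')$ is at most $1+\lambda\operatorname{id}(m')$, whether ordinary or extraordinary. When $m\neq m'$, the id-assignment convention together with the uniqueness of ids forces $\operatorname{id}(m')<\operatorname{id}(m)$, and hence $1+\lambda\operatorname{id}(m')<1+\lambda\operatorname{id}(m)$, giving a strict improvement over $x$ and contradicting its optimality.

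The remaining case $m=m'$ (where the pure swap is vacuous) is the main obstacle. Here the plan is to proceed differently: optimality of $x$ together with Claim~\ref{onl:cla:shift} forces $r\not\ge_p m$ (otherwise the extraordinary $(r,m)$ could already be turned ordinary). The ordinary $(r',m)$ then supplies the prefix slack needed to repair the deficient prefix of $r$; the resulting rearranged pair lies in the regime where Claim~\ref{onl:cla:shift} applies and yields a cost reduction by $1+\lambda\operatorname{id}(m)$. The technical effort is in carrying out this mass redistribution explicitly while respecting the multiples-of-$i\delta$ structure of the reservation vectors and maintaining $\gamma$-validity of both.
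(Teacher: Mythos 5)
Your treatment of the case $m'\ne m$ is, in substance, exactly the paper's argument: the same reservation swap (decrement $x_{(r,m)},x_{(r',m')}$, increment $x_{(r',m)},x_{(r,m')}$), the same observation that feasibility is preserved because the multiset of reservation vectors and the per-source counts are untouched, the same two subcases according to whether $r'\ge m$, the same invocation of Claim~\ref{onl:cla:shift} via $r'\ge m'\ge_p m\Rightarrow r'\ge_p m$ in the second subcase, and the same strict decrease coming from $\operatorname{id}(m')<\operatorname{id}(m)$. That part is correct.

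The only divergence is the reflexive case $m'=m$, which you rightly notice the swap cannot handle, but your sketched repair does not close the gap. Making both halves of a coordinate-preserving split of $r+r'$ satisfy $\ge_p m$ requires $\sum_{i\le j}(r_i+r'_i)\ge 2\sum_{i\le j}m_i$ for every $j$, and nothing guarantees this: the extraordinary reservation $r$ may carry much less mass than $\|m\|_1$ (the volume constraints can be covered by other servers), in which case no such redistribution exists and the argument stalls. You should know, however, that the paper's own proof has the identical blind spot: its swap is literally the identity map when $m'=m$, and the claimed strict cost decrease in its first subcase also rests on $\operatorname{id}(m')<\operatorname{id}(m)$, which fails for $m'=m$. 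This is harmless for the paper because the lemma is invoked only in the proof of Lemma~\ref{onl:lem:marking-scheme}, applied to the servers in $X_m$, which by definition have source vector different from $m$. So your proof establishes the same statement that the paper's proof actually establishes, by the same route, and you were more explicit than the authors about the degenerate case.
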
 
	\begin{proof}
		Assume for contradiction that the lemma does not hold. Let $m'\ge_p m$,
		where $m$ and $m'$ are source vectors with reservation $r$ and $r'$,
		respectively. Further assume that $r\not\ge m$ and $r'\ge m'$.

		We switch the reservation vector between the configurations, i.e., we increase
		$x_{(r',m)}$ and $x_{(r,m')}$ and decrease $x_{(r,m)}$ and $x_{(r',m')}$.

		If $r'\ge m$ then this step decreased the cost of the ILP because
		we decreased the number of extraordinary configurations of the form $(\cdot,
		m)$ and we (may) have increased the number of extraordinary configurations
		of the form $(\cdot, m')$. This decreases the overall cost and
		contradicts that $x$ is an optimal ILP solution.

		Now assume $r'\not\ge m$. We may have increased the cost of the ILP solution
		by $1+\lambda\operatorname{id}(m')$.
		However, $r'\ge m'\ge_p m$ implies $r'\ge_p m$. This means that we can reduce
		the cost of this solution by $1+\lambda\operatorname{id}(m)$ due to
		Claim~\ref{onl:cla:shift}. Altogether the cost decreases because
		$\operatorname{id}(m)>\operatorname{id}(m')$. This contradicts the
		fact that $x$ is an optimal ILP solution.
	\end{proof}

	Now we prove Lemma~\ref{onl:lem:marking-scheme}.
	First, suppose that $|S_m|\le\hmax$. Then all servers with source vector $m$
	are marked and the lemma clearly holds.
	Otherwise, let $X_{m}:= \{s\mid s\in S_m; m_s\neq m\}$. Since the paging
	problem for $S_m$ must leave at least $\hmax$ pages from $S_m$ outside the
	cache, there are at least $\hmax-|X_m|$ of these that have source vector
	$m$. The ILP has at most $\hmax$ extraordinary configurations. If at least one
	server with source vector $m$ is extraordinary (i.e., $h_m>0$) then all servers
	in $X_m$ are assigned an extraordinary configuration due to Lemma~\ref{onl:lem:largeconfextraordinary}.
	Hence, $h_m\le \hmax-|X_m|$ and the lemma follows.

\subsection{Proof of Claim~\ref{onl:cla:markingBoundedByPaging}}
	Initially, we have to choose for every paging problem $\hmax$ servers/pages
	that are not in the cache. The marking scheme marks these servers and
	experiences a cost of $1$ for each marking. This gives a total cost
	of $|M|\hmax$ for the initialization.

	As long as $|S_m|>\hmax$, a monochromatic merge-operation on a server with
	source vector $m$ introduces a request of weight $|\ps|$. If the marking scheme
	has to pay for the merge (because the corresponding server $s$ is
	extraordinary) then the page $s$ is outside of the cache in the paging problem
	for $S_m$ and the paging problem has to pay for the request. The total extra
	charge for monochromatic merge-operations that occur after $|S_m|\le\hmax$ can
	be at most $O(\log k)\cdot\hmax$ because the total extra charge for a specific
	server $s$ is at most $O(\log k)$ due to Claim~\ref{onl:cla:extraChargeServer}.

\subsection{Proof of Claim~\ref{onl:cla:pagingBoundedByhmax}}
	Note that an offline paging algorithm for the constructed request sequence on
	$S_m$ can simply determine the $\hmax$ elements that leave the set $S_m$ last
	and put all other servers into the cache. It then experiences a cost of at most
	$\hmax+O(\log k)/r\cdot\hmax=O(\hmax)$, where the first $\hmax$-term is due to
	the initialization. The second term comes from the fact that the total weight
	of all requests to a specific page is equal (up to constant factors) to the
	total extra charge for a specific server. The latter is at most $O(\log k)$ due
	to Claim~\ref{onl:cla:extraChargeServer}. Hence, by not moving any page after the
	initialization $\OPT$ pays $O(\log k)/r\cdot\hmax$ for serving the page
	requests.

	Since $r=\log k$ and since the online algorithm of Blum et al.\ is
	$O(r+\log\ell)$-competitive, we obtain
	$\cost(S_m)\le O(\log k+\log\ell)\cdot\hmax$.

\subsection{Proof of Theorem~\ref{onl:thm:randomized}}
	Analyzing the cost for the vertex charges is identical to the deterministic
	case. Combining Observation~\ref{onl:obs:markingscheme} with
	Claim~\ref{onl:cla:markingBoundedByPaging} and Claim~\ref{onl:cla:pagingBoundedByhmax}
	gives that the expected total extra charge is only $O(\log \ell+\log
	k)\cdot\hmax$. As $\cost(\OPT)=\Omega(\hmax)$ the theorem follows.

\subsection{Claim~\ref{onl:claim:picking-delta}}
\label{onl:sec:claim:picking-delta}
\begin{claim}
\label{onl:claim:picking-delta}
	Let $\varepsilon\in(0,\frac{1}{4})$ and $k\geq 10/\varepsilon^4$. Then there exists $\delta$
	such that (1)~$\frac{1}{2}\epsilon^2\le\delta\le\epsilon^2$,
	(2)~$\delta=i\frac{1}{k}$ with $i\in\mathbb{N}$ and
	(3)~$\lceil 1\rceil_\delta - 1 \leq \delta/2$, where
	$\lceil\cdot\rceil_\delta$ is the operation of rounding up to a multiple of
	$\delta$.
\end{claim}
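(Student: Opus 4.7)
The plan is to construct $\delta$ explicitly as $j^*/k$ for an integer $j^*$ chosen so that $1/\delta$ lands in the interval $[n+1/2,\,n+1)$ for some integer $n$. Such a placement automatically forces $\lceil 1/\delta\rceil - 1/\delta \le 1/2$, and multiplying by $\delta$ then gives condition~(3), since $\lceil 1\rceil_\delta - 1 = \delta\,(\lceil 1/\delta\rceil - 1/\delta)$. To additionally fit the window of condition~(1), the integer $n$ should be chosen so that $k/(n+1)\approx \varepsilon^2 k/2$ and $k/(n+1/2)\approx \varepsilon^2 k$, suggesting $n\approx 1/\varepsilon^2$.

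Concretely, I would fix $n := \lceil 1/\varepsilon^2\rceil$. Because $\varepsilon\le 1/4$ gives $1/\varepsilon^2\ge 16$, a short calculation shows $n+1/2 > 1/\varepsilon^2$ and $n+1 < 2/\varepsilon^2$. I then consider the candidate interval
\[
J \;:=\; \bigl(k/(n+1),\; k/(n+1/2)\bigr]
\]
for $j^*$ and show it contains a natural number by estimating
\[
|J| \;=\; \frac{k}{2(n+1)(n+1/2)} \;\ge\; \frac{k\varepsilon^4}{8},
\]
which is at least $10/8 > 1$ by the hypothesis $k\ge 10/\varepsilon^4$. Setting $\delta := j^*/k$ for any integer $j^*\in J$, all three conditions fall out quickly: (2) is immediate; (1) follows by translating the endpoints of $J$ into $\varepsilon^2/2 < \delta < \varepsilon^2$ via the bounds on $n+1$ and $n+1/2$; and (3) follows from $k/j^*\in[n+1/2,\,n+1)$ together with the identity above.

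The main (and essentially only) nontrivial step is the packing estimate $|J|>1$ — one needs enough multiples of $1/k$ to sit inside the ``upper half'' of the gap between two consecutive integer values of $1/\delta$. This is precisely where the quantitative hypothesis $k\ge 10/\varepsilon^4$ enters, and it explains the particular constant in the statement; everything else reduces to routine arithmetic using $\varepsilon\le 1/4$.
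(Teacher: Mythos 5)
Your proof is correct. The identity $\lceil 1\rceil_\delta-1=\delta\bigl(\lceil 1/\delta\rceil-1/\delta\bigr)$ correctly reduces condition~(3) to placing $1/\delta$ in $[n+\tfrac12,\,n+1)$; the choice $n=\lceil 1/\varepsilon^2\rceil$ does satisfy $n+\tfrac12>1/\varepsilon^2$ and $n+1<2/\varepsilon^2$ (the latter only needs $1/\varepsilon^2\ge 2$); the length bound $|J|=\frac{k}{2(n+1)(n+1/2)}>\frac{k\varepsilon^4}{8}\ge\frac{10}{8}>1$ is right, and a half-open interval of length greater than $1$ contains a positive integer, so all three conditions follow as you say.

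Your route is genuinely different from the paper's. The paper starts from $\delta^*$, the largest multiple of $1/k$ not exceeding $\varepsilon^2$, and repeatedly decrements by $1/k$ until~(3) holds; it then needs an inductive argument that $\lceil 1/\delta\rceil$ stays equal to $\lceil 1/\delta^*\rceil$ throughout the descent, plus a rate comparison showing the left side of~(3) decreases much faster than the right side, so the process terminates before $\delta$ falls below $\varepsilon^2/2$. You instead identify in one shot the sub-interval $\bigl(1/(n+1),\,1/(n+\tfrac12)\bigr]$ of $[\varepsilon^2/2,\varepsilon^2]$ on which~(3) holds automatically, and show by a pigeonhole/length estimate that it contains a multiple of $1/k$. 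Both arguments invoke $k\ge 10/\varepsilon^4$ at the same conceptual point (there must be enough multiples of $1/k$ in the relevant window), but your direct existence argument avoids the induction and the termination analysis entirely, and is the cleaner of the two.
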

\begin{proof}
	Set $\delta^* = \max\{ i \frac{1}{k} : i \frac{1}{k} \leq
	\varepsilon^2 \}$ and observe that $\varepsilon^2 - \frac{1}{k} < \delta^*
	\leq \varepsilon^2$.  Thus, $\delta^*$ satisfies~(1) and~(2).
	Next, if~(3) is satisfied, we have found a value $\delta = \delta^*$ with
	the desired properties.  Otherwise, we set $\delta=\delta^*$ and then we
	keep on decreasing $\delta$ by $\frac{1}{k}$ until~(3) holds. Clearly, when
	this procedure stops,~(2) and~(3) are satisfied.  It remains to show
	that~(1) holds.

	Observe that for every $\delta$, we have that
	$\lceil 1 \rceil_\delta = \lceil \frac{1}{\delta} \rceil \delta$, where
	$\lceil\cdot\rceil$ denotes rounding up to the next integer.
	Note that when the above procedure decreases $\delta$ by $\frac{1}{k}$,
	we still have $\lceil \frac{1}{\delta} \rceil = \lceil \frac{1}{\delta^*} \rceil$:
	After the first decrease, the value of $\lceil 1 \rceil_{\delta^*}$ drops by
	$\lceil \frac{1}{\delta^*} \rceil \cdot \frac{1}{k}
	\leq \frac{2}{\varepsilon^2} \cdot \frac{\varepsilon^4}{10}
	\leq \frac{\varepsilon^2}{5}$. Thus, for $\delta = \delta^*-\frac{1}{k}$ we
	still have $\lceil 1 \rceil_\delta - 1 \geq 0$. Using that
	$\delta\leq\delta^*$ implies $\lceil \frac{1}{\delta} \rceil \geq
	\lceil\frac{1}{\delta^*}\rceil$, gives the claim for
	$\delta=\delta^*-\frac{1}{k}$. Now the argument can be extended using
	induction.

	Now we show that~(1) holds when the procedure finishes. Observe that every time
	the procedure decreases $\delta$ by $\frac{1}{k}$, 
	the RHS of~(3) decreases by $\frac{1}{2k} \leq \frac{\varepsilon^4}{20}$ and
	the LHS of~(3) decreases by
	$\lceil \frac{1}{\delta} \rceil \cdot \frac{1}{k} = \lceil
	\frac{1}{\delta^*} \rceil \cdot \frac{1}{k}
	\geq \frac{1}{2\varepsilon^2} \cdot \frac{1}{k} \gg \frac{1}{k}$.
	Since we have that $\lceil 1 \rceil_{\delta^*} - 1 - \delta^*/2 \leq
	\varepsilon^2/2$ and we just saw that the procedure decreases the LHS much
	faster than $\delta$, the above procedure finishes with
	$\delta \geq \frac{1}{2}\varepsilon^2$.
\end{proof}

\subsection{Optimality of Monochromatic Merges and Commits on Ordinary Servers}
\label{onl:sec:ilp-solution-still-optimal}
Let $c$ be a cost vector for the ILP from Section~\ref{onl:sec:ilp}. Note that $c$
has an entry $c_{(r,m)}$ for each $\gamma$-feasible configuration $(r,m)$.
For the deterministic algorithm we simply have
\begin{equation}
	c_{(r,m)}:=
	\begin{cases}
		0, & r\geq m, \\
		1, & r\not\geq m,
	\end{cases}
\end{equation}
and for the randomized algorithm $c$ is defined as in
Equation~\ref{onl:eq:cost-vector}.

Recall the definition of the partial order $\leq_p$ from
Section~\ref{onl:sec:augmented-ilp}. We say that a cost vector $c$ is
\emph{$\leq_p$-respecting} if it satisfies (1) $c_{(r,m)} = 0$ for all ordinary
configurations $(r,m)$ and (2)~$m_1 \geq_p m_2 \Longrightarrow
c_{(r_1,m_1)} \leq c_{(r_2,m_2)}$ for all extraordinary configurations
$(r_1,m_1)$ and $(r_2,m_2)$.  It is easy to see that the cost vectors for the
deterministic and the randomized algorithm both satisfy this condition.

\begin{lemma}
\label{onl:lem:ilp-solution-still-optimal}
	Suppose the ILP is equipped with a $\leq_p$-respecting cost vector.
	Let $s$ be a server and let $\ps$ and $\pl$ be two large pieces monochromatic
	for $s$. Further, let $x$ be an optimal ILP solution before the pieces $\ps$
	and $\pl$ get merged and let $x'$ be an optimal ILP solution directly after
	the merge. Then the objective function value of the ILP for $x'$ is not
	smaller than the objective function value of the ILP for $x$.
\end{lemma}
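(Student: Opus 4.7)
The plan is to show that for any post-merge feasible ILP solution $x'$ there exists a pre-merge feasible $x$ with $c \cdot x \le c \cdot x'$; applied to an optimum post-merge solution this yields the lemma.

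First I would spell out exactly how the two ILP instances differ. Because $\ps$ and $\pl$ are both monochromatic for $s$ and both large, only the source vector of $s$ changes, from $m_s$ to $m_s'$, where mass $i_1\delta$ and $i_2\delta$ at coordinates $i_1$ and $i_2$ is compacted into mass $i_m\delta$ at the single coordinate $i_m = i_1 + i_2$. Correspondingly, on the RHS of the ILP, $Z_{m_s}$ drops by $1$, $Z_{m_s'}$ rises by $1$, the volume totals $V_{i_1}$ and $V_{i_2}$ shrink by $i_1\delta$ and $i_2\delta$, and $V_{i_m}$ grows by $i_m\delta$. The structural fact I will exploit is $m_s \ge_p m_s'$: moving mass to a strictly higher coordinate can only raise prefix sums. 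By the id convention and the $\le_p$-respecting assumption, this gives $c_{(r, m_s)} \le c_{(r', m_s')}$ whenever both configurations are extraordinary.

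The main construction is a single configuration swap. Pick any $(r^*, m_s')$ with $x'_{(r^*, m_s')} > 0$ and let $r$ be obtained from $r^*$ by adding $i_1\delta$ at coordinate $i_1$, adding $i_2\delta$ at $i_2$, and subtracting $i_m\delta$ at $i_m$; this preserves $\|r^*\|_1$ and all divisibility properties, so $r$ is $\gamma$-valid provided $r^*_{i_m} \ge i_m\delta$. Replacing one unit of $(r^*, m_s')$ by one unit of $(r, m_s)$ in $x'$ shifts $Z$ and $V$ exactly by the required pre-merge deltas. A coordinate-wise check shows $r \ge m_s$ if and only if $r^* \ge m_s'$, so ordinariness is preserved by the swap, and the induced cost change is $0$ if both configurations are ordinary and at most $0$ (by the $\le_p$-respecting property) if both are extraordinary.

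The main obstacle is the degenerate case where every $(r^*, m_s')$ used by $x'$ has $r^*_{i_m} = 0$. Such configurations must be extraordinary (since the $i_m$-coordinate of $m_s'$ is at least $i_m\delta > 0$), and Lemma~\ref{onl:lem:largeconfextraordinary} then also rules out ordinary configurations with source $m_s'$ being used by $x'$. Here I would use a compound swap: first perform a pure source swap $(r^*, m_s') \to (r^*, m_s)$, which leaves every $V$-constraint untouched and corrects the $Z$-constraints; because $(r^*, m_s')$ is extraordinary, either $(r^*, m_s)$ is also extraordinary (and then $c_{(r^*, m_s)} \le c_{(r^*, m_s')}$ by $\le_p$-respecting and $m_s \ge_p m_s'$) or $(r^*, m_s)$ has become ordinary (making the cost change trivially $\le 0$). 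Then I would repair the $V$-constraints by a separate reservation shift of magnitude $i_m\delta$ from class $i_m$ into classes $i_1, i_2$ on a different configuration $(r^{**}, m^{**})$ used by $x'$. The delicate step is to exhibit an $(r^{**}, m^{**})$ with $r^{**}_{i_m} \ge i_m\delta$ and enough slack that the shift creates no new extraordinary configuration; I expect this to follow from the $V_{i_m}$ constraint for the post-merge ILP together with the optimality of $x'$, invoking Claim~\ref{onl:cla:shift} to rule out the tight boundary configurations that would otherwise cause the shift to strictly increase the cost.
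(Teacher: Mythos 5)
Your overall strategy---turn an optimal post-merge solution into a feasible pre-merge solution of no larger cost, contradicting optimality of $x$ if the objective dropped---matches the paper's, but your route is genuinely different: you manipulate the ILP variables directly, whereas the paper works through a schedule $S'$ respecting $x'$. That difference is what creates your degenerate case. In the paper's proof the piece $\pm$ is physically scheduled on some server $s'$, and Property~1 of a respecting schedule forces $r'_{s'\im}\ge|\pm|_c=\im\delta$, so the reservation ``give-back'' at coordinate $\im$ is always available at $s'$'s configuration, while the source-vector change happens (possibly on a different server) at $s$'s configuration; the two modifications are decoupled. Your single-swap construction insists on performing both changes on one configuration with source vector $m_s'$, and nothing forces such a configuration to satisfy $r^*_{\im}\ge\im\delta$, since $\pm$ need not be hosted on a server with that source vector. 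Your non-degenerate case is worked out correctly (the identity $r-m_s=r^*-m_s'$ coordinatewise, the preservation of $\|\cdot\|_1$ and of the $Z$- and $V$-constraints, and the cost comparison via $m_s\ge_p m_s'$ all check out).

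The gap is the degenerate case. The step you flag as delicate---exhibiting a configuration $(r^{**},m^{**})$ used by $x'$ with $r^{**}_{\im}\ge\im\delta$ on which the reservation shift creates no new extraordinary configuration---is asserted rather than proven, and the tools you point to (optimality of $x'$, Claim~\ref{onl:cla:shift}, Lemma~\ref{onl:lem:largeconfextraordinary}) are not what closes it. What does close it is an averaging argument in the spirit of the proof of Claim~\ref{onl:cla:shift}: since $\sum_{(r,m)}x'_{(r,m)}m_{\im}=\sum_m Z'_m m_{\im}\le V'_{\im}\le\sum_{(r,m)}x'_{(r,m)}r_{\im}$, the total surplus $\sum_{(r,m)}x'_{(r,m)}\bigl(r_{\im}-m_{\im}\bigr)$ is nonnegative; in your degenerate case every configuration with source vector $m_s'$ used by $x'$ has $r_{\im}=0$ and $m'_{s\im}\ge|\pm|_c=\im\delta$, so these configurations contribute at most $-\im\delta$ in total, and hence some used configuration $(r^{**},m^{**})$ with $m^{**}\ne m_s'$ has $r^{**}_{\im}-m^{**}_{\im}>0$, which forces $r^{**}_{\im}-m^{**}_{\im}\ge\im\delta$ because both entries are multiples of $\im\delta$. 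Shifting $\im\delta$ out of coordinate $\im$ of that configuration preserves $r^{**}_{\im}\ge m^{**}_{\im}$ and only increases the coordinates $\is,\il$, so an ordinary configuration stays ordinary and the cost does not increase. With that argument supplied your proof is complete; without it, the degenerate case is open.
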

\begin{proof}
	We prove the lemma by contradiction. Suppose the new solution $x'$ has a
	smaller objective function value than $x$. We show that this implies that
	the solution $x$ before merging pieces $\ps$ and $\pl$ was not optimal.

	Let $\is$, $\il$ and $\im$ denote the size classes of $\ps$, $\pl$ and $\pm$.
	Now consider a schedule $S'$ respecting the ILP solution $x'$. 
	Let $s'$ be the server on which $\pm$ is scheduled (note that in $S'$ it might be the case
	that $s'\neq s$) and let $(r'_s,m'_s)$, $(r'_{s'},m'_{s'})$ denote the
	configurations of $s$ and $s'$ in $S'$, respectively.
	Now we set $S^*$ to the same schedule as $S'$ except that piece $\pm$ is
	replaced by $\ps$ and $\pl$.
	Note that in $S^*$ all server configurations are the same as in $S'$ except
	that $s'$ has reservation $r^*_{s'}$ and $s$ has source vector $m^*_s$ with
	\begin{equation*}
		\begin{array}{lclll}
			r_{s'\is}^*    &:= &r_{s'\is}'    &+ &|\ps|_c   \\
			r_{s'\il}^*    &:= &r_{s'\il}'    &+ &|\pl|_c\\
			r_{s'\im}^*    &:= &r_{s'\im}'    &- &|\pm|_c   \\
		\end{array}
	\end{equation*}
	and
	\begin{equation*}
		\begin{array}{lclll}
			m^*_{s{\is}}    &:= &m'_{s{\is}}    &+ &|\ps|_c\\
			m^*_{s{\il}}    &:= &m'_{s{\il}}    &+ &|\pl|_c\\
			m^*_{s{\im}}    &:= &m'_{s{\im}}    &- &|\pm|_c
		\end{array}.
	\end{equation*}
	It follows that $S^*$ and $S'$ have exactly the same extraordinary servers.

	Now based on $S^*$ we derive an ILP solution $x^*$ by setting $x^*_{(r,m)}$
	to the number of servers in $S^*$ with configuration $(r,m)$ for all
	$(r,m)$. It is easy to see that $x^*$ is a feasible solution for the ILP
	before $\ps$ and $\pl$ were merged.  Now we distinguish two cases.  First,
	suppose $s$ is not extraordinary in $S^*$. Then $x^*$ has the same objective
	function value for the ILP as $x'$. This contradicts the optimality of $x$
	before $\ps$ and $\pl$ were merged.  Second, suppose that $s$ is
	extraordinary in $S^*$. In this case note that we have $m_s' \leq_p m^*_s$
	and hence $c_{(r'_s,m_s')} \geq c_{(r^*_s,m_s^*)}$ since the cost vector $c$
	is $\leq_p$-presering. Thus, the objective function value of $x^*$ is upper
	bounded by the objective function of $x'$. This again contradicts the
	optimality of $x$.
\end{proof}

\begin{lemma}
\label{onl:lem:ilp-solution-still-optimal-commit}
	Suppose the ILP is equipped with a $\leq_p$-respecting cost vector.
	Let $s$ be a server and let $\pm$ be a large monochromatic
	pieces for $s$. Let $x$ be an optimal ILP solution before volume is
	committed for $\pm$
	and let $x'$ be an optimal ILP solution directly after
	the commit. Then the objective function value of the ILP for $x'$ is not
	smaller than the objective function value of the ILP for $x$.
\end{lemma}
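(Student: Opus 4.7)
The plan is to mirror the proof of Lemma~\ref{onl:lem:ilp-solution-still-optimal}, replacing the monochromatic merge by a monochromatic commit. Suppose for contradiction that the optimal post-commit objective value achieved by $x'$ is strictly smaller than the pre-commit optimum achieved by $x$. Let $i$ and $i'$ denote the classes of $\pm$ before and after the commit, fix any schedule $S'$ respecting $x'$, and let $s'$ be the server hosting $\pm$ in $S'$, with configurations $(r'_s,m'_s)$ at $s$ and $(r'_{s'},m'_{s'})$ at $s'$.

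I would build an auxiliary schedule $S^*$ by \emph{undoing} the commit, i.e., treating $\pm$ again as a class-$i$ piece with committed volume $|\pm|_c$ and uncommitted volume $|\pm|_u$. Concretely, reverse the shifts that Special Variant~B applies:
\begin{equation*}
\begin{array}{lcl}
r^*_{s'i}  &=& r'_{s'i} + |\pm|_c,\\
r^*_{s'i'} &=& r'_{s'i'} - (|\pm|_c+\delta),\\
r^*_{s'0}  &=& r'_{s'0} + \delta,
\end{array}
\qquad
\begin{array}{lcl}
m^*_{si}  &=& m'_{si} + |\pm|_c,\\
m^*_{si'} &=& m'_{si'} - (|\pm|_c+\delta),\\
m^*_{s0}  &=& m'_{s0} + \delta,
\end{array}
\end{equation*}
inherit all other server configurations from $S'$, and let $x^*$ be the ILP solution obtained by counting the configurations used by $S^*$. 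Then $\|r^*_{s'}\|_1=\|r'_{s'}\|_1$, so $r^*_{s'}$ is still $\gamma$-valid, and $x^*$ is feasible for the pre-commit ILP because the reverted $V$-vector and $Z_m$-vector of the pre-commit state exactly match the shifts we applied. It then suffices to prove that the objective of $x^*$ is at most that of $x'$, which contradicts the optimality of $x$.

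Only the costs at $s$ and $s'$ can change under this transformation. For $s$, Observation~\ref{onl:obs:monotone} together with a direct prefix-sum computation gives $m^*_s \geq_p m'_s$: the difference vector is $+\delta$ at coordinate $0$, $+|\pm|_c$ at $i$, and $-(|\pm|_c+\delta)$ at $i'$, so prefix sums equal $\delta$ for $k<i$, $\delta+|\pm|_c$ for $i\le k<i'$, and $0$ for $k\ge i'$. I then split on whether $s=s'$: if $s=s'$ (which must happen whenever $s$ is ordinary in $S'$, since the $s$-monochromatic piece $\pm$ must then sit on $s$), the differences $r^*_{sj}-m^*_{sj}$ equal $r'_{sj}-m'_{sj}$ at every coordinate, so $s$'s ordinary/extraordinary status is preserved and $\leq_p$-respectingness yields $c_{(r^*_s,m^*_s)}\le c_{(r'_s,m'_s)}$; if $s\ne s'$, then $s$ is necessarily extraordinary in $S'$, and the cost at $s$ in $S^*$ is either $0$ (if it becomes ordinary) or $c_{(r'_s,m^*_s)}\le c_{(r'_s,m'_s)}$, again by $\leq_p$-respectingness.

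For $s'$ in the case $s'\ne s$, the source vector $m_{s'}$ does not change, so only the reservation shift matters. If $s'$ was ordinary in $S'$, then its reservation at class $i'$ had to physically accommodate both the $s'$-monochromatic class-$i'$ pieces and $\pm$ itself (whose post-commit committed volume is $|\pm|_c+\delta$), so $r'_{s'i'}\ge m_{s'i'}+|\pm|_c+\delta$, whence $r^*_{s'i'}\ge m_{s'i'}$; the shifts at coordinates $i$ and $0$ only increase $r$, so $s'$ remains ordinary in $S^*$ and its cost stays $0$. If $s'$ was extraordinary in $S'$, its cost depends only on $m_{s'}$, which is unchanged, so the cost cannot increase. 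Summing these bounds gives the required inequality on objective values and completes the contradiction. The main obstacle I anticipate is precisely the case $s'\ne s$ with $s'$ ordinary in $S'$: one has to observe that any schedule respecting $(r'_{s'},m'_{s'})$ must already reserve at least $|\pm|_c+\delta$ beyond $m_{s'i'}$ at class $i'$ in order to physically host $\pm$, which is what makes the $i'$-coordinate shift safe. The rest follows routinely from $\leq_p$-respectingness and the monotonicity in Observation~\ref{onl:obs:monotone}.
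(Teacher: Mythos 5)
Your proof is correct and follows essentially the same route as the paper's: undo the commit to build an auxiliary schedule $S^*$ and a feasible pre-commit solution $x^*$, then bound its objective via $\leq_p$-respectingness together with $m^*_s \geq_p m'_s$. Your case analysis is in fact more explicit than the paper's — in particular the observation that an ordinary host $s'\neq s$ must reserve at least $m_{s'i'}+|\pm|_c+\delta$ at class $i'$ in order to physically accommodate $\pm$, which is what justifies the paper's bare assertion that $S^*$ and $S'$ have the same extraordinary servers.
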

\begin{proof}
	We proceed similar to the proof of the
	Lemma~\ref{onl:lem:ilp-solution-still-optimal-commit}.
	We prove the lemma by contradiction. Suppose the new solution $x'$ has a
	smaller objective function value than $x$. We show that this implies that
	the solution $x$ before committing the volume for $\pm$ was not optimal.

	Let $\im$ and $\im'$ denote the size classe of $\pm$ before and after the
	commit, respectively.
	Now consider a schedule $S'$ respecting the ILP solution $x'$. 
	Let $s'$ be the server on which $\pm$ is scheduled (note that in $S'$ it might be the case
	that $s'\neq s$) and let $(r'_s,m'_s)$, $(r'_{s'},m'_{s'})$ denote the
	configurations of $s$ and $s'$ in $S'$, respectively.
	Now we set $S^*$ to the same schedule as $S'$ except that we undo the commit
	for piece $\pm$.
	Note that in $S^*$ all server configurations are the same as in $S'$ except
	that $s'$ has reservation $r^*_{s'}$ and $s$ has source configuration $m^*_s$ with
	\begin{equation*}
		\begin{array}{lclll}
			r_{s'0}^*		&:= &r_{s'0}'		&+ &\delta   			\\
			r_{s'\im}^*		&:= &r_{s'\im}'		&+ &|\pm|_c - \delta	\\
			r_{s'\im'}^*	&:= &r_{s'\im'}'	&- &|\pm|_c   			\\
		\end{array}
	\end{equation*}
	and
	\begin{equation*}
		\begin{array}{lclll}
			m^*_{s0}		&:= &m'_{s0}    	&+ &\delta\\
			m^*_{s{\im}}    &:= &m'_{s{\im}}    &+ &|\pm|_c -\delta\\
			m^*_{s{\im'}}   &:= &m'_{s{\im'}}   &- &|\pm|_c
		\end{array}.
	\end{equation*}
	It follows that $S^*$ and $S'$ have exactly the same extraordinary servers.

	Now based on $S^*$ we derive an ILP solution $x^*$ by setting $x^*_{(r,m)}$
	to the number of servers in $S^*$ with configuration $(r,m)$ for all
	$(r,m)$. It is easy to see that $x^*$ is a feasible solution for the ILP
	before the commit was performed.  Now we distinguish two cases.  First,
	suppose $s$ is not extraordinary in $S^*$. Then $x^*$ has the same objective
	function value for the ILP as $x'$. This contradicts the optimality of $x$
	before the commit.  Second, suppose that $s$ is
	extraordinary in $S^*$. In this case note that we have $m_s' \leq_p m^*_s$
	and hence $c_{(r'_s,m_s')} \geq c_{(r^*_s,m_s^*)}$ since the cost vector $c$
	is $\leq_p$-presering. Thus, the objective function value of $x^*$ is upper
	bounded by the objective function of $x'$. This again contradicts the
	optimality of $x$.
\end{proof}

\end{document}